\documentclass[11pt,letterpaper]{article}
\usepackage{bbm}

\usepackage[utf8]{inputenc}
\usepackage[T1]{fontenc}
\usepackage{microtype}

\usepackage[margin=1in]{geometry}

\usepackage{amsmath,mathtools}
\usepackage{amsthm,thmtools}

\usepackage{amssymb}
\usepackage{bm,dsfont}
\usepackage{csquotes}

\usepackage{xcolor}

\usepackage{paralist}

\usepackage[small]{caption}
\usepackage{subcaption}
\usepackage{tikz}
\usepackage{pgfplots}
\pgfplotsset{compat=newest}

\usepackage{booktabs,tabularew}
\usepackage[para]{threeparttable}

\usepackage[ruled,linesnumbered,noline]{algorithm2e}

\usepackage{array}

\usepackage{comment}
\usepackage{nicefrac}

\usepackage[ruled]{algorithm2e}
\usepackage{csquotes}

\usepackage{pgfplots}
\usepgfplotslibrary{fillbetween}
\usepgfplotslibrary{groupplots}
\usetikzlibrary{arrows.meta,calc}
\pgfplotsset{compat=1.18}

\usepackage[skip=5pt plus 3pt minus 1pt,parfill=0pt]{parskip}
\makeatletter
\def\thm@space@setup{\thm@preskip=\parskip \thm@postskip=0pt
}
\makeatother

\usepackage[
 backend=biber,
 style=alphabetic,
 citetracker,
 hyperref=auto,
 maxcitenames=5,
 sortcites,
 sorting=nyt,
 maxbibnames=12,
 date=year,
 isbn=false,
 url=false,
 doi=false,
 eprint=false,
]{biblatex}

\newbibmacro{string+doiurlisbn}[1]{
  \iffieldundef{doi}{
    \iffieldundef{url}{
      #1
    }{
      \href{\thefield{url}}{#1}
    }
  }{
    \href{http://dx.doi.org/\thefield{doi}}{#1}
  }
}

\DeclareFieldFormat
[article,inbook,incollection,inproceedings,patent,thesis,unpublished]
  {title}{\usebibmacro{string+doiurlisbn}{#1}}

\usepackage[colorlinks]{hyperref}
\hypersetup{citecolor=green!60!black,linkcolor=blue!70!black,urlcolor=blue!70!black}
\usepackage[capitalize, noabbrev,nameinlink]{cleveref}
\usepackage{xurl}
\hypersetup{breaklinks=true}

\crefname{appendix}{Appendix}{Appendices}
\Crefname{appendix}{Appendix}{Appendices}

\usepackage{xspace}

\declaretheorem[numberwithin=section]{theorem}
\declaretheorem[sibling=theorem]{lemma}

\declaretheorem[sibling=theorem]{corollary}

\DeclarePairedDelimiter\ceil{\lceil}{\rceil}

\DeclareMathOperator*{\argmax}{arg\,max}

\definecolor{col1}{HTML}{56b3e9}
\definecolor{col2}{HTML}{e69f00}
\definecolor{col3}{HTML}{009e74}
\definecolor{col4}{HTML}{cc79a7}
\definecolor{col5}{HTML}{d55e00}
\definecolor{col6}{HTML}{0071b2}

\newcommand{\opt}{\mathrm{OPT}}
\newcommand{\OPT}{\mathrm{OPT}}
\newcommand{\alg}{\mathrm{ALG}}
\newcommand{\ALG}{\mathrm{ALG}}

\newcommand{\ex}{\mathbb{E}}

\newcommand{\pr}{\mathbb{P}}
\newcommand{\cD}{\mathcal{D}}

\newcommand{\GREEDY}{\textsc{Greedy}}
\newcommand{\LGREEDY}{\lambda\textnormal{-}\textsc{Greedy}}

\newcommand{\LBOOSTING}{\lambda\textnormal{-}\textsc{Boosting}}
\newcommand{\LFREEZE}{\lambda\textnormal{-}\textsc{Freeze}}
\newcommand{\PGREEDY}{\textsc{ParallelGreedy}}

\newcommand{\Var}{\mathrm{Var}}

\newcommand{\drm}{\mathrm{d}}

\newcommand{\Ecal}{\mathcal{E}}
\newcommand{\Fcal}{\mathcal{F}}

\newcommand{\Ucal}{\mathcal{U}}

\newcommand{\Ebb}{\mathbb{E}}

\newcommand{\Nbb}{\mathbb{N}}
\newcommand{\Pbb}{\mathbb{P}}

\newcommand{\Rbb}{\mathbb{R}}

\newcommand{\one}{\mathbbm{1}}
\newcommand{\oneb}{\boldsymbol{\mathbbm{1}}}

\usepackage[colorinlistoftodos,prependcaption,textsize=scriptsize]{todonotes}

\newcommand{\romain}[1]{\todo[backgroundcolor=green!25,bordercolor=green]{R:~#1}}

\title{Online Scheduling with a Stochastic Signal} 

\author{Romain Cosson\thanks{Courant Institute, New York University, USA.}\and Jingwei Li\thanks{IEOR, Columbia University, USA.} \and Alexander Lindermayr\thanks{Institut für Mathematik, Technische Universität Berlin, Germany. Part of this work was done while the author was a research fellow at the
Simons Institute for the Theory of Computing for the program on ``Algorithmic Foundations for Emerging Computing
Technologies'' in Fall 2025.} \and Jens Schlöter\thanks{Centrum Wiskunde \& Informatica (CWI), The Netherlands.}}

\date{}

\begin{document}

\maketitle

\begin{abstract}
Nonclairvoyant scheduling is a fundamental online model in which 
processing times are initially unknown to the scheduler. 
Unfortunately, for important objectives such as total completion time and makespan, worst-case analysis yields pessimistic guarantees: 
every nonclairvoyant algorithm has a competitive ratio of at least
$2$ for these objectives.

Recent work introduced $\varepsilon$-clairvoyance, 
where a scheduler receives a signal once an $\varepsilon$-fraction of a job remains (FOCS'25, NeurIPS'25). This model avoids giving the algorithm
a priori predictions as done in learning-augmented algorithms, 
a practice that is often hard to justify in practice. However, existing
algorithms and analyses rely crucially on signal times being precise,
an assumption hardly justifiable in applications such as task profiling.

We introduce stochastic clairvoyance, a beyond-worst-case model 
in which each job emits a randomly timed signal during its execution, 
drawn from a distribution over its processing length. For this model, 
we design new online scheduling algorithms whose 
competitive ratios are strictly below $2$ for minimizing total
completion time and makespan. On the technical side, we prove a new black-box theorem that converts bounds on expected pairwise job delays into competitive
guarantees via a continuous amortized charging argument.

Our results show that stochastic clairvoyance is not merely a curiosity: 
it yields robust improvements across different scheduling objectives and
machine environments. More broadly, stochastic clairvoyance suggests a new direction in beyond-worst-case analysis for online algorithms, 
and builds a bridge between learning-augmented algorithms and stochastic information models.
\end{abstract}

\thispagestyle{empty}
\newpage
\setcounter{tocdepth}{2}
\tableofcontents
\thispagestyle{empty}
\newpage
\setcounter{page}{1}

\section{Introduction}

We study fundamental online scheduling problems in a new information model.
We focus on 
preemptive
machine scheduling problems, where all jobs are available at time $0$ and an algorithm must 
choose at each time $t$ a feasible allocation of $n$ jobs to $m$ parallel identical machines. 
An equivalent description is that, at any time $t$, an algorithm must select a rate $y_j(t) \in [0,1]$ for each job $j$ such that $\sum_j y_j(t) \leq m$. A schedule is feasible if each job $j$ receives its processing requirement $p_j$, that is, $\int_0^\infty y_j(t) \mathrm{d}t \geq p_j$.

In many applications, the assumption that all jobs' processing requirements are known upfront is too strong.
This motivated the study of nonclairvoyant algorithms~\cite{MotwaniPT94,FeldmannST91,ShmoysWW91}. In this form of online scheduling~\cite{PruhsST04}, an algorithm has to build a schedule by irrevocably selecting rates $y(t)$ over time $t$ and only learns about a job's processing requirement $p_j$ when the job completes; previous allocations cannot be undone.
Nonclairvoyant scheduling is well-studied 
for a variety of objectives such as minimizing the
sum of (weighted) completion times~\cite{MotwaniPT94,BeaumontBEM12,ImKM18,JLM26},
sum of (weighted) flow times~\cite{Edmonds00,KalyanasundaramP00,BecchettiL04,ImKMP14,ImKM18},
or makespan~\cite{ShmoysWW95}.
While these model important quality-of-service
objectives in many practical applications~\cite{PinedoBook},
nonclairvoyant algorithms perform poorly due to their pessimistic nature: for total completion time and makespan,
every nonclairvoyant algorithm has a competitive ratio of at least $2$ 
\cite{MotwaniPT94,ShmoysWW95}, which is the worst-case ratio between an algorithm's objective value and the optimum objective value in hindsight.

The assumption of nonclairvoyance is very strong: the algorithm does not know the processing time of a job until it is completed.
This made it a perfect use case
for beyond-worst-case models~\cite{MitzenmacherV22,R2020}.
Many works study learning-augmented algorithms \cite{PurohitSK18,LindermayrM25permutation,LassotaLMS23,DBLP:conf/nips/BenomarP23,BenomarP24nonclarivoyant,ImKQP23,DinitzILMV22,BampisKLP26,AzarLT21,AzarLT22,BenomarCLS25,GuptaKPW26}\footnote{We refer to \cite{alps} for an organized list of works in this area.} with the caveat that predictions are assumed to be accessible upfront. This is a questionable assumption in many practical applications: how can predictions be inferred when no information about the instance or jobs, such as log or profiling data, is initially available?

Motivated by this question, \cite{GuptaKLSY25,BenomarCLS25} considered $\varepsilon$-clairvoyance: an algorithm receives a \emph{signal} once an $\varepsilon$-fraction of the job remains. They showed that one can break the nonclairvoyant lower bounds for any $\varepsilon > 0$ by proving a tight $2-\varepsilon$ guarantee for total completion time. 
The $\varepsilon$-clairvoyant information model has surprisingly many analogs in practice,
for example,
in maintenance or health scheduling~\cite{MSK95maintenance-statistic,Tian2011ConditionBM},
in warehouse scheduling~\cite{MLKS19warehouse,Cao2017RealtimeWarehouse},
in ride-hailing~\cite{Wang2018ridehailing,Wang2018LearningTE},
and in computing systems~\cite{jajoo2022slearn,zaharia2008improving}.
However, in all of these applications, we clearly cannot
assume that signals are \emph{accurate} and \emph{uniform} across all jobs.
Instead, historical data naturally yields distributions over potential signal times.

We introduce \emph{stochastic clairvoyance}
to capture such distributions.
In our model, a signal does not arrive at a fixed fraction of the processing, but at a random time sampled from a distribution over the job's processing requirement.
We develop and analyze algorithms for this model that 
significantly go beyond the nonclairvoyant worst-case lower bounds. Our results push the boundary of what amount of
information suffices for online algorithms to achieve strong provable performance guarantees.

\subsection{Main Results}

We show that even a \emph{single uniformly distributed signal} suffices 
to overcome the classic nonclairvoyant lower bounds for minimizing the total (weighted) completion time or makespan on parallel identical machines.
Beyond online scheduling, we believe that the framework of having a ``minimal'' stochastic information model on top of an adversarial competitive analysis 
can be naturally applied to many online problems and thus opens
an exciting new research direction beyond algorithms with predictions. 
It specifically builds a novel bridge between \emph{learning-augmented algorithms} and \emph{stochastic combinatorial optimization}.

\paragraph{Model.}
We first formally introduce stochastic clairvoyance.
Each job $j$ is associated with a 
distribution $\cD_j$ over $[0,p_j]$.
After the adversary fixes $p_1,\ldots,p_n$, a signal value $S_j$ is sampled from $\cD_j$ independently for each job $j$. 
At the beginning of the instance, the algorithm knows neither $S_j$ nor $p_j$.
When it has processed job $j$ by $S_j$ units, it receives a signal from $j$, and thus learns about the value of $S_j$.
We further consider a \emph{strong} variant of stochastic clairvoyance where the algorithm learns about $p_j$ when it receives $j$'s signal. In the default (weak) variant, $p_j$ is only revealed when the job completes, as in the nonclairvoyant setting.
In the first part of this work, we focus on \emph{uniform} stochastic clairvoyance, that is, $\cD_j = \mathcal{U}([0,p_j])$ is the uniform distribution over $[0,p_j]$.

\begin{table}[tb]
    \centering
    \caption{Overview of our results for minimizing total completion times on a single machine. }
    \begin{tabular}{l|cc}
    \toprule
       Comp. ratio  & UB & LB \\
    \midrule
        weak uniform signal   & 1.752   & 1.6 \\
    strong uniform signal & 1.59999 & 1.2 \\
$k$ weak uniform signals& 
    $1+O(k^{-1/3})$ & $1+\Omega(k^{-1/3})$\\
    \bottomrule
    \end{tabular}
    \label{tab:results}
\end{table}

\paragraph{Algorithmic Results.}
We next present an overview of our results. We first focus on minimizing the total completion time with uniform stochastic clairvoyance and obtain the following results:
\begin{itemize}
    \item The algorithm $\GREEDY$, which runs \emph{Round-Robin (RR)} and completes a job whenever a signal comes, has a competitive ratio of $\nicefrac{16}{9} \approx 1.778$ on a single machine (\Cref{thm:greedy}) and on parallel identical machines (\Cref{thm:greedy:m:machines}).
    \item There exists an improved $1.752$-competitive algorithm on a single machine (\Cref{thm:lambda-greedy}).
    \item Every (randomized) algorithm has a competitive ratio of at least $1.6$ for weak uniform stochastic clairvoyance and of at least $1.2$ for the strong variant, even on a single machine.
    \item For strong uniform stochastic clairvoyance, there exists a $1.59999$-competitive algorithm on a single machine (\Cref{thm:alg-strong}).
    This separates weak and strong uniform stochastic clairvoyance.
\end{itemize}

We next consider the more general \textit{Beta} stochastic clairvoyance where $S_j = p_j X_j$ for a random variable $X_j \sim \mathrm{Beta}(a,b)$, with $a,b>0$. Note that the Beta distribution with $a=b=1$ corresponds to the uniform distribution over $[0,1]$.
\begin{itemize}
    \item For small $a,b \geq 1$, we present bounds on the competitive ratio of $\GREEDY$ below $2$ (\Cref{sec:greedy-beta}). 
   \item  For $\mathrm{Beta}(k^{2/3},k)$-stochastic clairvoyance, $\GREEDY$ has a competitive ratio of $1+ O(k^{-1/3})$ for all $k \geq 1$, and this dependence on $k$ is optimal (\Cref{thm:beta-asymptotic-ub}).
\end{itemize}
While this paper is centered on scheduling with a single stochastic signal, we briefly consider the setting where each job emits $k$ uniform signals. We apply our results on the Beta distribution to recover the $1+\Theta(k^{-1/3})$ asymptotic rate of the \textit{Poisson progress bar} in \cite{BenomarCLS25} (\Cref{thm:beta-asymptotic-ub}), albeit for a \textit{uniform progress bar}. This is further discussed in \Cref{sec:asymptotics}.

We also study the makespan objective on parallel identical machines. 
\begin{itemize}
    \item For strong uniform stochastic clairvoyance, there exists a $\nicefrac{13}{8}$-competitive algorithm
    for minimizing the makespan on parallel identical machines (\Cref{thm:makespan}).
\end{itemize}
This shows that stochastic
clairvoyance is not a problem-specific trick or curiosity of the total completion time objective.
Rather, we believe that it points to a new layer of beyond-worst-case algorithm design for scheduling problems and beyond.

\subsection{Further Related Work}

\paragraph{Nonclairvoyant scheduling.}
This online model was first studied by Shmoys, Wein, and Williamson \cite{ShmoysWW91}
and Feldmann, Sgall, and Teng~\cite{FeldmannST91}.
\cite{ShmoysWW95} show a tight competitive ratio of $2$ for makespan on parallel identical machines,
an $\Omega(\log m)$ lower bound on $m$ uniformly related machines, and a matching $O(\log m)$ upper bound on $m$ unrelated machines.
Motwani, Phillips, and Torng~\cite{MotwaniPT94} initiated the study of min-sum objectives and showed that Round-Robin is an optimal online algorithm for minimizing total completion time on parallel identical machines, with competitive ratio $2$.
For total flow time, where jobs arrive online at release dates $r_j$ and the objective is to minimize $\sum_j(C_j-r_j)$,
they gave nonconstant lower bounds for randomized ($\Omega(\log n)$) and deterministic ($\Omega(n^{1/3})$) algorithms.

After these initial works, strong competitive nonclairvoyant algorithms have been given for the weighted completion-time objective in heterogeneous scheduling environments~\cite{DengGBL00,BeaumontBEM12,ImKM18,JLM26}
and for scheduling with precedence constraints~\cite{JagerW24,GargGKS19,LassotaLMS23}.
For total flow time, an optimal $O(\log n)$-competitive randomized algorithm exists~\cite{BecchettiL04,KalyanasundaramP03}.
This objective is also well studied under $(1+\varepsilon)$-speed augmentation, where constant competitive ratios are possible~\cite{KalyanasundaramP00}, even for weighted flow time in heterogeneous settings with speedup curves; see, e.g.,~\cite{ImKMP14,ChekuriGKK04,EdmondsP12,BansalKN14}.
We refer to~\cite{PruhsST04} for a survey on online scheduling.

\paragraph{Learning-augmented algorithms.}
Integrating predictions of unknown quality into algorithm design has received significant attention in the last couple of years.
The majority of learning-augmented works consider online problems; see, e.g.,~\cite{LykourisV21,PurohitSK18,BamasMS20,AntoniadisCEPS23}.
However, the framework can also be used to speed up algorithms~\cite{DinitzILMV21,ChenSVZ22,SakaueO22a,DaviesMVW23,BaiC23,KraskaBCDP18,LinLW22,McCauleyMN023,BrandFNP24,HenzingerSSY24}, improve approximation algorithms~\cite{CohenAdded24approximation}, or design mechanisms~\cite{AgrawalBGOT22,GkatzelisKST22,XuL22,BalkanskiGT23}.
For an introduction and overview, we refer to~\cite{MitzenmacherV22,alps}.

Scheduling problems are among the most studied problems in the learning-augmented framework.
Nonclairvoyant scheduling with predictions has been studied for minimizing total completion time~\cite{PurohitSK18,WeiZ20,ImKQP23,BampisDKLP22,EliasKMM24,DBLP:conf/nips/BenomarP23,LindermayrM25permutation}, makespan~\cite{Zhao0Z22,BampisKLP23}, and flow time~\cite{AzarLT21,AzarLT22,ZhaoLZ22,AzarPT22flowtime,GuptaKPW26}.
Predictions have also been used to improve the competitive ratio when jobs arrive online over time with known processing times.
\cite{LattanziLMV20} initiated this line for makespan minimization with restricted assignment, which was later extended and generalized by~\cite{LavastidaM0X21,LiX21,CohenP23}.
Other scheduling problems studied with predictions include online speed scaling~\cite{BamasMRS20,AntoniadisGS22,BalkanskiP0W23}, speed-robust~\cite{BalkanskiOSW23} and speed-oblivious scheduling~\cite{LindermayrMR23}, and interval scheduling~\cite{BoyarFKL23}.

This line is orthogonal to stochastic clairvoyance: learning-augmented algorithms are usually given predictions before decisions are made and are evaluated as a function of prediction error, whereas in our model no job-size prediction is provided upfront and the only additional information is an online signal generated during processing.

\paragraph{Stochastic scheduling.}
Stochastic scheduling provides another classical way to interpolate between
clairvoyant and nonclairvoyant scheduling. In finite-batch stochastic scheduling,
all jobs are available from the beginning, but their processing times are random
variables drawn from known distributions; the goal is typically to optimize an
expected objective over nonanticipative policies in polynomial time. This model dates back at least to
the work of Rothkopf~\cite{Rothkopf1966} and was further developed for single- and
parallel-machine settings in, e.g., \cite{Weiss1980,Weber1982,Weber1986};
see also \cite{PinedoBook} for background. Stochastic scheduling is different
from stochastic clairvoyance: in stochastic scheduling, the processing times
themselves are random and their distributions are part of the model input, whereas
in stochastic clairvoyance the processing times are fixed adversarially and only
the time at which information is revealed through a signal is random. Thus, our
model randomizes the arrival of information rather than the job sizes.

\subsection{Notations and Definitions}

There are $n$ jobs with processing times that satisfy w.l.o.g.\ $p_1 \leq \ldots \leq p_n$. An algorithm
chooses over time $t$ a feasible allocation of $n$ jobs to $m$ parallel identical machines, that is, it must select a rate $y_j(t) \in [0,1]$ for each job $j$ such that $\sum_j y_j(t) \leq m$. A schedule is feasible if each job $j$ receives its processing requirement $p_j$, that is, $\int_0^\infty y_j(t) \mathrm{d}t \geq p_j$.
The elapsed time of job $j$ at time $t$ is $e_j(t) := \int_0^t y_j(t') \mathrm{d}t'$.
The completion time $C_j$ of job $j$ is the earliest time $t$ that satisfies $e_j(t) \geq p_j$. 
The expected total completion time is equal to $\Ebb[\sum_{j=1}^n C_j]$ and the expected makespan is equal to $\Ebb[\max_{j \in [n]} C_j]$, where the expectation is over the algorithm's and signal's randomness. 
For a fixed objective and algorithm, we denote the algorithm's objective value by $\alg$ and the optimum objective value by $\OPT$.
We say that an algorithm is $c$-competitive if $\Ebb[\alg] \leq c \cdot \opt$ for all instances; we call the smallest $c$ such that an algorithm is $c$-competitive its \emph{competitive ratio}.

\subsection{Technical Contribution}\label{sec:overview}

The main technical contribution of this work
is a new general-purpose theorem for total (weighted) completion time---that we dub the \textit{Global Delay Theorem}---that turns an arbitrary concave bound on (expected) pairwise job delays into a strong competitive guarantee.

\paragraph{Background: Local Delay Theorem.}
The delay decomposition is an established tool for analyzing scheduling algorithms~\cite{MotwaniPT94,LindermayrM25permutation,BenomarCLS25,JLM26,GuptaKLSY25,AlbersE20,LiuLWZ23,GongCH24,DogeasEL24}
on a single machine and on parallel identical machines. Fix two jobs $i$ and $j$.
Let $d^{\ALG}(i,j)$ be the delay that job $i$ inflicts on job $j$, i.e., the amount of work that job $i$ receives before job $j$'s completion (note that $\forall i: d^{\ALG}(i,i) = p_i$). Using $P = \sum_j p_j$, the total completion time of an algorithm can be rewritten as the sum of all pairwise delays 
\(
\ALG = P + \sum_{i<j} (d^{\ALG}(i,j) + d^{\ALG}(j,i)),
\)
so 
any bound of the form $d^{\alg}(i,j) + d^{\alg}(j,i) \leq \alpha \cdot (d^{\opt}(i,j) + d^{\opt}(j,i))$ yields an $\alpha$-competitive guarantee. By linearity of expectation, this in particular holds for expected pairwise delays. This is the essence of the classical Local Delay Theorem~\cite{MotwaniPT94}, where the key observation is that the optimal algorithm satisfies $d^{\opt}(i,j) + d^{\opt}(j,i)= \min \{p_i,p_j\}$ because it schedules jobs in order of their processing times and thus delays only longer jobs with shorter ones.
\begin{theorem}[Local Delay Theorem \cite{MotwaniPT94}]
\label{lem:weak-delay}
For any single-machine nonidling scheduling algorithm $\alg$, if for any two distinct jobs $i$ and $j$ with $p_i \leq p_j$ it holds that 
\[
\ex \big[ d^\alg(i,j) + d^{\alg}(j,i) \big] \leq \alpha \cdot p_i
\]
for some constant $\alpha \geq 1$,
then $\alg$ is $\alpha$-competitive for minimizing the total completion time.
\end{theorem}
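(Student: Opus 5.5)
The plan is to write both the algorithm's objective and the optimum as sums of pairwise delays, and then compare them pair by pair.

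\emph{Step 1 (decomposition for $\alg$).} Fix a realization of the signals and of the algorithm's randomness. Because $\alg$ runs on a single machine and never idles, the completion time of job $j$ equals the total work processed up to $C_j$. This work is $\sum_i d^{\alg}(i,j)$, where $d^{\alg}(j,j)=p_j$, since every job $i$ contributes exactly the work it receives before $C_j$. Summing over $j$ and grouping the ordered pairs $(i,j)$ and $(j,i)$ gives
\[
\alg=\sum_j p_j+\sum_{i<j}\bigl(d^{\alg}(i,j)+d^{\alg}(j,i)\bigr).
\]
This identity holds pointwise, so linearity of expectation yields
\[
\Ebb[\alg]=P+\sum_{i<j}\Ebb\bigl[d^{\alg}(i,j)+d^{\alg}(j,i)\bigr].
\]

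\emph{Step 2 (the optimum).} The optimum is attained by Shortest Processing Time first, i.e., processing $1,\dots,n$ in order, which is the classical exchange argument. With the indexing $p_1\le\dots\le p_n$, for $i<j$ job $i$ finishes before $j$ starts, so $d^{\opt}(i,j)=p_i$ and $d^{\opt}(j,i)=0$. Applying the same decomposition to this schedule gives
\[
\opt=P+\sum_{i<j}p_i=P+\sum_{i<j}\min\{p_i,p_j\}.
\]

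\emph{Step 3 (comparison).} Apply the hypothesis to each pair $i<j$, for which $p_i\le p_j$. This gives
\[
\Ebb[\alg]\le P+\alpha\sum_{i<j}p_i.
\]
Since $\alpha\ge1$, we have $P\le\alpha P$, and therefore
\[
\Ebb[\alg]\le\alpha\Bigl(P+\sum_{i<j}p_i\Bigr)=\alpha\cdot\opt.
\]

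\emph{Potential subtleties.} The only delicate point is Step 1. We must make sure the delay bookkeeping is exact under preemption and fractional rates (e.g., Round-Robin). Nonidling is what guarantees $C_j=\int_0^{C_j}\sum_i y_i(t)\,\drm t$ with total rate equal to $1$ while jobs remain. Hence $C_j=\sum_i e_i(C_j)$, and $e_i(C_j)=d^{\alg}(i,j)$ by definition, capped at $p_i$ for jobs finished earlier. Ties in processing times pose no difficulty, since the hypothesis applies to any ordering consistent with $p_i\le p_j$.
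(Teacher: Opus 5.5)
Your proposal is correct and follows the same route the paper sketches in its technical overview: the pointwise delay decomposition $\alg = P + \sum_{i<j}\bigl(d^{\alg}(i,j)+d^{\alg}(j,i)\bigr)$ combined with linearity of expectation, the SPT identity $d^{\opt}(i,j)+d^{\opt}(j,i)=\min\{p_i,p_j\}$, and $\alpha\ge 1$ to absorb the $P$ term.
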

The textbook application is showing the $2$-competitiveness of the Round-Robin algorithm, by noting that $d^{\mathrm{RR}}(i,j) + d^{\mathrm{RR}}(j,i)= 2\min \{p_i,p_j\}$ \cite{MotwaniPT94}. 
Unfortunately, the Local Delay Theorem is insufficient for showing an improved guarantee for $\GREEDY$, because we prove that it has an expected pairwise delay 
$\ex[d^\alg(i,j) + d^\alg(j,i)] = 2 p_i - \frac23  p_i^2 / p_j$ when $p_i \leq p_j$.
This expression cannot be bounded by a linear function of $p_i$ that would be better than $2 p_i$, which is tight for  $p_j \to \infty$.

\paragraph{A first improvement via amortized charging.}
We next give a high-level sketch of how we can break the factor of $2$ for $\GREEDY$, and then explain how we derive our Global Delay Theorem from this idea.
First note that due to the decomposition above, if we show that $\sum_{i<j} p_i^2 / p_j$ is at least a $\rho$-fraction of $\sum_{i<j} p_i$, then $\GREEDY$ is $(2 - \frac23\rho)$-competitive.
To this end, let $A_i = \sum_{j=i+1}^n p_i^2 / p_j$ and $O_i = \sum_{j=i+1}^n p_i = (n-i)p_i$, which we call \emph{row} $i$ of $\alg$ and $\opt$, respectively. 
We now do an \emph{amortized charging argument}. 
On a high level, we decompose each $A_i$ into a credit for row $i$ and a charge to a higher row $j_i$, denoted by $\mathrm{credit}(i)$ and $\mathrm{charge}(j_i)$, 
and then show that the total credit minus total charge is  
$\Omega(\sum_{i} O_i)$.
The crux is to find such a suitable charging scheme across rows.
We bound the first $m_i = \lfloor \frac12 (n-i) \rfloor$ terms $p_i^2/p_j$ after $i$ in $A_i$ from below by their smallest contribution, giving
\[
        A_i 
        = \sum_{j=i+1}^n \frac{p_i^2}{p_j}
        \geq \sum_{j=i+1}^{i+m_i}  \frac{p_i^2}{p_j}
        \geq  m_i \frac{p_i^2}{p_{i+m_i}}
        \ge \underbrace{2\lambda m_i p_i}_{\mathrm{credit}(i)} - \underbrace{\lambda^2 m_i p_{i+m_i}}_{\mathrm{charge}(i+m_i)}
\]
using Young's inequality with constant $\lambda \geq 0$ in the final step.
Note that $\mathrm{credit}(i) \geq \lambda O_i$ for each row $i$. Thus, it remains to argue that the total charge is small. 
The key is that only two distinct rows $i$ map their charge to row $j = i+m_i$, namely $2j-n$ and $2j-n+1$.
Thus, the total charge is
    \[
         \sum_{i=1}^n \mathrm{charge}(i+m_i) = \lambda^2 \sum_{i=1}^n m_i p_{i+m_i} 
         = \lambda^2 \sum_{j=1}^n p_{j} \sum_{i: i+m_i = j} m_i 
         \leq \lambda^2 \sum_{j=1}^n p_{j} \cdot 2(n-j) = 2\lambda^2 \sum_{i=1}^n O_i \ ,
    \]
while the total credit is at least $\lambda \sum_{i} O_i$.
Therefore, choosing $\lambda = \nicefrac14$ gives $\sum_i A_i \geq (\frac14 - 2 \frac{1}{16}) \sum_{i=1}^n O_i = \frac18 \sum_{i=1}^n O_i$, which yields a competitive bound of $(2 - \frac23 \cdot \frac18) = \nicefrac{23}{12}$, as desired.

\paragraph{New Global Delay Theorem.}
Our final argument to reach $\nicefrac{16}{9}$ for $\GREEDY$ builds  on this idea.
A key improvement is to embed the list of processing times in a nondecreasing function $f:[0,1] \to [0, \infty)$, 
where $f(x) = p_i$ for $x \in [(i-1)/n,i/n]$. 
Then, instead of a single midpoint for the sum over $\{i,\ldots,n\}$ as above, we slice $[x,1]$ into infinitesimally small slices using a variable transformation $y = x + \tau(1-x)$ and then 
perform the above argument for each  $\tau$ with a separately optimized value of $\lambda$.
Moreover, we allow an abstract normalized pairwise delay $\frac{1}{p_i} \ex[d^{\alg}(i,j)+d^{\alg}(j,i)] \leq g (p_i / p_j)$ where $g : [0,1] \to [0,\infty)$ is nonincreasing and concave (instead of only affine as for $\GREEDY$).
These ingredients yield the following key equality (\Cref{lem:f-bound}):
\begin{equation}
    \sup_{f\in\mathcal F}
    \frac{
        \int_0^1\int_x^1
        f(x) \cdot g\!\left(\frac{f(x)}{f(y)}\right)
        \,\mathrm{d}y\,\mathrm{d}x
    }{
        \int_0^1\int_x^1 f(x)\,\mathrm{d}y\,\mathrm{d}x
    }
    =
    \int_0^1 g(t^2)\,\mathrm{d}t \label{eq:key}
\end{equation}
where $\mathcal F$ is the set of all nondecreasing functions $f:[0,1] \to [0, \infty)$.
Both the upper bound and the lower bound in \eqref{eq:key} require careful real analysis and convex analysis arguments.
In the asymptotic regime where, after renormalization by $\frac{1}{n^2}$, it holds that
    \begin{align*}
        \ALG \rightarrow \int_0^1\int_x^1
            f(x) \cdot g\!\left(\frac{f(x)}{f(y)}\right)
            \,\mathrm{d}y\,\mathrm{d}x
        \qquad \text{and}
        \qquad
        \OPT\rightarrow \int_0^1\int_x^1 f(x)\,\mathrm{d}y\,\mathrm{d}x \ ,
    \end{align*}
\Cref{eq:key} immediately implies our novel Global Delay Theorem.

\begin{restatable}[Global Delay Theorem]{theorem}{thmStrongDelay}\label{lem:strong-delay}
Let $g : [0,1] \to [1,\infty)$ be nonincreasing and concave.
For any single-machine nonidling scheduling algorithm $\alg$, 
if for any jobs $i \neq j$ with $p_i \leq p_j$ it holds that 
\[
\ex \big[ d^\alg(i,j) + d^{\alg}(j,i) \big] \leq p_i \cdot g \bigg(\frac{p_i}{p_j} \bigg)
\]
then $\alg$ is $\rho$-competitive for minimizing the total completion time, where $\rho = \int_0^1 g \big( t^2 \big) \mathrm{d} t$. Furthermore, if the delay bound is achieved with equality, then the competitive ratio is tight.
\end{restatable}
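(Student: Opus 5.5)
We prove the Global Delay Theorem by reducing it to the variational identity \eqref{eq:key} (\Cref{lem:f-bound}), applied to a step-function embedding of the instance.

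First, decompose the objective. Sort the jobs so that $p_1\le\dots\le p_n$. By the delay decomposition and linearity of expectation,
\[
\ex[\ALG]=P+\sum_{i<j}\ex\big[d^\alg(i,j)+d^\alg(j,i)\big]\le P+\sum_{i<j}p_i\,g(p_i/p_j),
\]
while $\OPT=P+\sum_{i<j}p_i$. Since $g\ge 1$ and $g$ is nonincreasing, we have $\rho=\int_0^1 g(t^2)\,\mathrm{d}t\ge 1$. The diagonal term $P$ is therefore dominated by $\rho P$, and it suffices to show
\[
\sum_{i<j}p_i\,g(p_i/p_j)\le\rho\sum_{i<j}p_i .
\]

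Next, pass to the continuum. Define the step function $f(x)=p_i$ on $((i-1)/n,i/n]$, which is nondecreasing. Then
\[
n^2\int_0^1\!\!\int_x^1 f(x)\,g\big(f(x)/f(y)\big)\,\mathrm{d}y\,\mathrm{d}x=\sum_{i<j}p_i\,g(p_i/p_j)+\tfrac12\sum_i p_i\,g(1).
\]
The correction comes from the diagonal cells $i=j$, where the integrand is $f(x)g(1)$ over a triangle of area $1/(2n^2)$. The same computation with $g\equiv1$ gives $\sum_{i<j}p_i+\tfrac12 P$ for the denominator.

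Applying \eqref{eq:key}, the left-hand side is at most $\rho$ times the corresponding denominator. This yields
\[
\sum_{i<j}p_i\,g(p_i/p_j)+\tfrac{g(1)}2P\le\rho\sum_{i<j}p_i+\tfrac{\rho}2P.
\]
We need $g(1)\le\rho$ to drop the diagonal terms safely. This holds because $g$ is nonincreasing, so $g(t^2)\ge g(1)$ for all $t\in[0,1]$. The diagonal contributions thus lie on the correct side, and they can be added back as $\rho P$ to absorb the $P$ in $\ALG$. Combining the displays gives $\ex[\ALG]\le\rho\,\OPT$.

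For tightness, assume the delay bound holds with equality. By the lower-bound half of \eqref{eq:key}, there is a nondecreasing $f$ whose ratio approaches $\rho$. The natural candidate is the one saturating the Young/slicing argument, roughly $f(x)\propto(1-x)^{-2}$ truncated, which makes $f(x)/f(y)=t^2$ under $y=x+\tau(1-x)$ with $t=1-\tau$ (up to reparametrization). Discretize $f$ into $n$ jobs with $p_i=f(i/n)$. As $n\to\infty$, the double sums divided by $n^2$ converge to the integrals by Riemann-sum convergence, since $g$ is monotone and hence Riemann integrable. The diagonal term $P$ is $O(n)\cdot\max p$, which is negligible relative to $n^2$ after fixing the truncation. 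The ratio $\ex[\ALG]/\OPT$ then tends to $\rho$.

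The main obstacle is \Cref{lem:f-bound} itself, which we take as given. Within this proof, the delicate points are the diagonal bookkeeping, handled by $g(1)\le\rho$, and the limit argument for tightness when the extremal $f$ is unbounded. For the latter, I would first truncate $f$ at level $M$, take $n\to\infty$, and then let $M\to\infty$, using monotone convergence on both integrals.
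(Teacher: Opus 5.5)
\textbf{Gap: the diagonal bookkeeping in the main inequality.} The overall structure is the paper's: step-function embedding, apply \Cref{lem:f-bound}, treat the diagonal cells, and discretize a near-extremal $f$ for tightness. The problem is your reduction ``it suffices to show $\sum_{i<j}p_i\,g(p_i/p_j)\le\rho\sum_{i<j}p_i$''. That target is false. Take $g(x)=2-\frac23x$, so $\rho=\frac{16}{9}$, and two jobs with $p_1=1$, $p_2=M>3$. The left side is $g(1/M)=2-\frac{2}{3M}>\frac{16}{9}=\rho\sum_{i<j}p_i$.

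Accordingly, \Cref{lem:f-bound} gives you only
\[
\sum_{i<j}p_i\,g(p_i/p_j)\le\rho\sum_{i<j}p_i+\frac{\rho-g(1)}{2}\,P .
\]
Since $g(1)\le\rho$, the extra term is nonnegative. So $g(1)\le\rho$ is exactly what prevents you from dropping the diagonal terms; dropping them would need $g(1)\ge\rho$. Your sentence ``the diagonal contributions thus lie on the correct side'' is therefore backwards. The condition you actually need is $\rho+g(1)\ge 2$, and you never state or verify it.

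\textbf{Repair.} Do not discard $P$ at the start. Write
\[
\ex[\ALG]\le P+\rho\sum_{i<j}p_i+\frac{\rho-g(1)}2P=\rho\,\OPT+\frac P2\bigl(2-g(1)-\rho\bigr).
\]
The last term is nonpositive because $g(1)\ge1$ and $\rho\ge g(1)\ge1$. This is where the codomain $[1,\infty)$ of $g$ is essential, and it is exactly how the paper closes the argument. The slack $(\rho-1)P$ coming from the $P$ in $\OPT$ is what absorbs the leftover $\frac{\rho-g(1)}2P$.

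\textbf{Tightness.} Your plan is essentially the paper's, with one caveat. For $f(x)=(1-x)^{-2}$ you have $D_f=\infty$, so monotone convergence on both integrals as the truncation level $M\to\infty$ only yields $\infty/\infty$. Use instead $f_\gamma=(1-x)^{-\gamma}$ with $\gamma<2$, or any $f\in\mathcal F$ whose ratio is within $\varepsilon$ of $\rho$; these have $D_f<\infty$. Then truncation plus monotone convergence works: the integrand $f_M(x)\,g(f_M(x)/f_M(y))$ is nondecreasing in $M$ because $g$ is nonincreasing. Alternatively, skip truncation as the paper does and show $P/n^2\to0$ via
\[
\frac1{n^2}f(j/n)\le2\int_{j/n}^{(j+1)/n}(1-x)f(x)\,\mathrm dx .
\]
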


To obtain this non-asymptotic statement in full generality,  
other technical steps are required
in the proof of \Cref{lem:strong-delay}. 

\paragraph{A framework for applying the Global Delay Theorem.}
We now describe a framework for applying \Cref{lem:strong-delay} whenever provided a nonincreasing and concave bound $g$ on the normalized (expected) pairwise delay. For $\GREEDY$, the exact normalized expected pairwise delay $g(x) = 2 - \frac{2}{3}x$ is already nonincreasing and affine, thus concave. 
Hence we get a competitive ratio equal to $\int_0^1 g \big( t^2 \big) \mathrm{d} t = 2 - \frac{2}{3} \cdot \frac{1}{3} = \nicefrac{16}{9}$.
In contrast to~\GREEDY, expected pairwise delays of different algorithms do not necessarily 
have the form required in~\Cref{lem:strong-delay}; in fact, the expected pairwise delays of all of our other algorithms do not. 
Therefore, we use
the following proof framework in \Cref{thm:lambda-greedy,thm:alg-strong,thm:beta-case2-U}:
\begin{enumerate}
    \item Compute the exact delays $\ex \big[ d^\alg(i,j) + d^{\alg}(j,i) \big] = p_i \cdot g(x)$ as a function of $x =\nicefrac{p_i}{p_j}$.
    \item Find a concave and nonincreasing majorant $\hat{g}$ with $\hat{g}(x) \ge g(x)$ for $x \in [0,1]$.
    \item Apply the Global Delay Theorem to $\hat{g}$.
\end{enumerate}

\section{Global Delay Theorem}\label{sec:global-delay}

We first prove our new Global Delay Theorem (\cref{lem:strong-delay}), which we use throughout the
paper. 
The main technical ingredient is the following variational integral identity over the class of nondecreasing positive functions,
which may be of independent interest.

\begin{theorem}[Integral Equality]\label{lem:f-bound}
Let $g : [0,1]\to[1,\infty)$ be nonincreasing and concave.  Let $\mathcal F$ be
the set of all nondecreasing functions $f:[0,1)\to[0,\infty)$ with
\(
    0 < D_f:=\int_0^1\int_x^1 f(x)\,\mathrm{d}y\,\mathrm{d}x < \infty .
\)
Then
\[
    \sup_{f\in\mathcal F}
    \frac{
        \int_0^1\int_x^1
        f(x) g\!\left(\frac{f(x)}{f(y)}\right)
        \,\mathrm{d}y\,\mathrm{d}x
    }{
        \int_0^1\int_x^1 f(x)\,\mathrm{d}y\,\mathrm{d}x
    }
    =
    \int_0^1 g(t^2)\,\mathrm{d}t.
\]
\end{theorem}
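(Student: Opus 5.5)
The plan is to prove the two inequalities separately. Both use the change of variables from \Cref{sec:overview}: $y = x + \tau(1-x)$ with $\tau\in[0,1]$, so that $\mathrm{d}y = (1-x)\,\mathrm{d}\tau$. Writing $\varphi_\tau(x) := x+\tau(1-x)$, the numerator becomes
\[
N_f=\int_0^1\!\!\int_0^1 (1-x) f(x)\, g\!\left(\frac{f(x)}{f(\varphi_\tau(x))}\right)\mathrm{d}x\,\mathrm{d}\tau ,
\qquad
D_f=\int_0^1 (1-x)f(x)\,\mathrm{d}x .
\]
The target value arises as $\int_0^1 g((1-\tau)^2)\,\mathrm{d}\tau=\int_0^1 g(t^2)\,\mathrm{d}t$. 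So the natural goal is to compare, for each fixed $\tau$, the inner integral with $g((1-\tau)^2)\,D_f$.

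\textbf{Upper bound.} Fix $\tau\in[0,1)$ and set $c=(1-\tau)^2\in(0,1]$.
\begin{enumerate}
\item Since $g$ is concave and nonincreasing on $[0,1]$, it has a supergradient $-s$ at $c$ with $s\ge 0$. At the endpoint $c=1$ I would use the left derivative, which is finite and nonpositive. This gives $g(r)\le g(c)-s(r-c)$ for all $r\in[0,1]$.
\item Applying this with $r=f(x)/f(\varphi_\tau(x))\in[0,1]$ (using monotonicity of $f$), it suffices to show
\[
\int_0^1 (1-x)\frac{f(x)^2}{f(\varphi_\tau(x))}\,\mathrm{d}x \;\ge\; c\, D_f .
\]
\item For this, use the Young-type bound $f(x)^2/f(y)\ge 2\lambda f(x)-\lambda^2 f(y)$ pointwise.
\item Change variables $u=\varphi_\tau(x)$, so $1-u=(1-\tau)(1-x)$ and $\mathrm{d}u=(1-\tau)\mathrm{d}x$. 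This gives
\[
\int_0^1(1-x)f(\varphi_\tau(x))\,\mathrm{d}x=(1-\tau)^{-2}\int_\tau^1(1-u)f(u)\,\mathrm{d}u\le c^{-1}D_f .
\]
\item Hence the left side in step 2 is at least $(2\lambda-\lambda^2/c)D_f$. Choosing $\lambda=c$ yields exactly $c\,D_f$.
\item Integrating over $\tau$ then gives $N_f\le D_f\int_0^1 g((1-\tau)^2)\,\mathrm{d}\tau$.
\end{enumerate}

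\textbf{Lower bound.} The upper-bound argument is tight exactly when $f(x)/f(\varphi_\tau(x))\equiv(1-\tau)^2$. This suggests $f(x)=(1-x)^{-2}$, but then $D_f=\int (1-x)^{-1}\,\mathrm{d}x=\infty$. I would therefore truncate, taking $f_\varepsilon(x)=(1-x+\varepsilon)^{-2}$.
\begin{enumerate}
\item $D_{f_\varepsilon}\sim\log(1/\varepsilon)\to\infty$.
\item The ratio $f_\varepsilon(x)/f_\varepsilon(\varphi_\tau(x))$ equals $\bigl((1-\tau)(1-x)+\varepsilon\bigr)^2/(1-x+\varepsilon)^2$. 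This tends to $(1-\tau)^2$ uniformly in $\tau$ on $\{1-x\ge\sqrt\varepsilon\}$.
\item The complementary region $\{1-x<\sqrt{\varepsilon}\}$ contributes at most $g(0)\cdot O(\log(1/\sqrt\varepsilon)\cdot 0+1)$, since there $(1-x)f_\varepsilon(x)\le 1/(1-x+\varepsilon)$ integrates to $O(\log 2)$-ish. More carefully, $\int_{1-\sqrt\varepsilon}^1 \frac{\mathrm{d}x}{1-x+\varepsilon}=\log\frac{\sqrt\varepsilon+\varepsilon}{\varepsilon}\approx\tfrac12\log(1/\varepsilon)$.
\end{enumerate}
That last estimate is only half the mass, which is not negligible. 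So I would instead split at $1-x\ge\varepsilon^{\delta}$ for a small $\delta$. The bad part then carries a $\delta$-fraction of the logarithmic mass, while on the good part the ratio is within $O(\varepsilon^{1-\delta})$ of $(1-\tau)^2$. Continuity of $g$ on $(0,1]$ (from concavity) and boundedness by $g(0)$ then give $\liminf N_{f_\varepsilon}/D_{f_\varepsilon}\ge(1-\delta)\int_0^1 g(t^2)\,\mathrm{d}t$. Finally let $\delta\to 0$.

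\textbf{Main obstacle.} The algebra is short, so I expect the difficulty to lie in the technical details:
\begin{itemize}
\item handling points where $f(\varphi_\tau(x))=0$, where one sets $f(x)g(\cdot)=0$ since then $f(x)=0$;
\item justifying the supergradient step at the endpoint $c=1$, where one-sided derivatives must be finite;
\item ensuring measurability and finiteness of $\int (1-x)f(x)^2/f(\varphi_\tau(x))$, which is bounded by $D_f$ since the ratio is at most $1$;
\item the limiting argument in the lower bound, where $g$ may be discontinuous at $0$. This only affects $\tau=1$, a null set.
\end{itemize}
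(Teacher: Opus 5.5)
Your upper bound is correct and is the paper's argument in slightly different form. The paper applies Jensen's inequality with respect to the probability density $(1-x)f(x)/D_f$, then uses the same Young-type bound with $\lambda=(1-\tau)^2$, then uses monotonicity of $g$. Your supergradient at $c=(1-\tau)^2$ with slope $-s\le 0$ merges the first and last of these steps. Two side remarks concern only a single value of $\tau$, so they are harmless. First, the left derivative of $g$ at $1$ can be $-\infty$, e.g.\ for $g(x)=1+\sqrt{1-x}$. Second, a nonincreasing concave $g$ is automatically continuous at $0$ but may jump down at $1$, not the reverse.

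The gap is in the lower bound: the mass bookkeeping is inverted. In the variable $u=1-x$ the weight is $(1-x)f_\varepsilon(x)=u/(u+\varepsilon)^2$, whose antiderivative is $\log(u+\varepsilon)+\varepsilon/(u+\varepsilon)$. Hence the good part $\{u\ge\varepsilon^\delta\}$ carries mass $\approx\delta\log(1/\varepsilon)$, and the ``bad'' part $\{u<\varepsilon^\delta\}$ carries $\approx(1-\delta)\log(1/\varepsilon)$. This is the opposite of what you claim, and your own check at $\delta=\tfrac12$ agrees with it. With $\delta$ small, as you propose, the good region is asymptotically negligible. Your argument then only yields $\liminf N_{f_\varepsilon}/D_{f_\varepsilon}\ge\delta\int_0^1g(t^2)\,\mathrm{d}t+(1-\delta)g(1)$, which tends to $g(1)$, not to $\int_0^1 g(t^2)\,\mathrm{d}t$.

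The fix is to send $\delta\uparrow1$ instead. On $\{u\ge\varepsilon^\delta\}$ one has $\sqrt{r_\varepsilon}-(1-\tau)=\tau\varepsilon/(u+\varepsilon)\in[0,\varepsilon^{1-\delta}]$. Hence $g(r_\varepsilon)\ge g\bigl(\min\{1,((1-\tau)+\varepsilon^{1-\delta})^2\}\bigr)$, which increases to $g((1-\tau)^2)$ for each $\tau\in(0,1)$ by continuity of $g$ at interior points. Monotone convergence gives $\liminf N_{f_\varepsilon}/D_{f_\varepsilon}\ge\delta\int_0^1g(t^2)\,\mathrm{d}t$, and then let $\delta\uparrow1$.

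The paper avoids truncation altogether by staying subcritical. It takes $f_\gamma(x)=(1-x)^{-\gamma}$ with $\gamma\in[1,2)$, which lies in $\mathcal F$ with finite $D_{f_\gamma}$. Moreover $f_\gamma(x)/f_\gamma(x+\tau(1-x))=(1-\tau)^\gamma$ exactly, independent of $x$. So the quotient equals $\int_0^1g((1-\tau)^\gamma)\,\mathrm{d}\tau$ with no error term, and letting $\gamma\uparrow2$ with monotone convergence finishes.
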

In the theorem above, all integrands are measurable and nonnegative because monotone real-valued functions are measurable and measurability is stable under composition.

An equivalent probabilistic version of this theorem can be written as 
$\sup_{f\in \Fcal}\Ebb\left[g\left(f(X) / f(Y)\right)\right] = \int_{0}^1g(t^2)dt$,
where the pair $(X,Y)\in[0,1]^2$ is a random variable sampled with density $p(x,y)\propto \oneb(x\leq y)f(x)$. It is worth noting that applying Jensen's inequality directly to the concave function $g$ and to the random variable ${f(X)}/{f(Y)}$ does not provide the desired bound. Instead, the proof below applies Jensen's inequality after a careful change of variables, which was motivated in \Cref{sec:overview}.

\begin{proof}[Proof of \Cref{lem:f-bound}]
We first prove the upper bound.  
Fix $f\in\mathcal F$.
Changing variables according to $y=x+\tau(1-x)$ and using Fubini gives
\begin{align}
\int_0^1\int_x^1
    f(x)g\!\left(\frac{f(x)}{f(y)}\right)
    \,\mathrm{d}y\,\mathrm{d}x \notag
&=
\int_0^1\int_0^1
    (1-x)f(x)
    g\!\left(\frac{f(x)}{f(x+\tau(1-x))}\right)
    \,\mathrm{d}x\,\mathrm{d}\tau \notag\\
&\le
D_f\int_0^1
    g\!\left(
        \frac1{D_f}\int_0^1
        (1-x)
        \frac{f(x)^2}{f(x+\tau(1-x))}
        \,\mathrm{d}x
    \right)
    \,\mathrm{d}\tau, \label{eq:f-bound-jensen}
\end{align}
where Equation \eqref{eq:f-bound-jensen} follows from Jensen's inequality applied to the concave function $g$ with
respect to the probability measure with density $(1-x)f(x)/D_f$.  

We next lower-bound the average inside $g$.  Fix $\tau\in[0,1)$; the endpoint
$\tau=1$ is irrelevant for the integral. For every $\lambda\ge0$ and every
$x\in[0,1]$, 
\(
    0\le \bigl(f(x)-\lambda f(x+\tau(1-x))\bigr)^2 ,
\)
and therefore
\[
    \frac{f(x)^2}{f(x+\tau(1-x))}
    \ge
    2\lambda f(x)-\lambda^2 f(x+\tau(1-x)).
\]
Thus
\begin{align*}
\int_0^1
    (1-x)
    \frac{f(x)^2}{f(x+\tau(1-x))}
    \,\mathrm{d}x
 &\ge
    2\lambda D_f
    -\lambda^2\int_0^1(1-x)f(x+\tau(1-x))\,\mathrm{d}x
\\
& =
    2\lambda D_f
    -\frac{\lambda^2}{(1-\tau)^2}
        \int_\tau^1(1-u)f(u)\,\mathrm{d}u
\\
& \ge
D_f\left(2\lambda-\frac{\lambda^2}{(1-\tau)^2}\right).
\end{align*}
Choosing $\lambda=(1-\tau)^2$ yields
\begin{equation}\label{eq:cauchy-s}
    \frac1{D_f}\int_0^1
    (1-x)f(x)
    \frac{f(x)}{f(x+\tau(1-x))}
    \,\mathrm{d}x
    \ge (1-\tau)^2.
\end{equation}
Since $g$ is nonincreasing, plugging \eqref{eq:cauchy-s} into
\eqref{eq:f-bound-jensen} gives
\[
\int_0^1\int_x^1
    f(x)g\!\left(\frac{f(x)}{f(y)}\right)
    \,\mathrm{d}y\,\mathrm{d}x
\le
D_f \int_0^1g\bigl((1-\tau)^2\bigr)\,\mathrm{d}\tau
=
D_f \int_0^1g(t^2)\,\mathrm{d}t .
\]
Dividing by $D_f$ proves the upper bound.

It remains to prove that the upper bound is tight.  Fix
$\gamma\in [1,2)$ and define 
\[
    f_{\gamma}(x)
    :=
    \left(\frac{1}{1-x}\right)^\gamma.
\]
Clearly, $f_\gamma:[0,1)\rightarrow(0,\infty)$ is nondecreasing and $D_{f_\gamma} = \int_0^1 (1-x)f_\gamma(x) \mathrm{d}x <\infty$. 
Observe that 
$\frac{f_\gamma(x)}{f_\gamma(x+\tau(1-x))} = \left(1-\tau\right)^\gamma$ is constant in $x$, so Jensen's inequality in Equation \eqref{eq:f-bound-jensen} is tight, and we have 
\begin{align*}
\int_0^1\int_x^1
    f_\gamma(x)g\!\left(\frac{f_\gamma(x)}{f_\gamma(y)}\right)
    \,\mathrm{d}y\,\mathrm{d}x \notag
&=
D_{f_\gamma}\int_0^1
    g\!\left(
        \frac1{D_{f_\gamma}}\int_0^1
        (1-x)
        \frac{f_\gamma(x)^2}{f_\gamma(x+\tau(1-x))}
        \,\mathrm{d}x
    \right)
    \,\mathrm{d}\tau\\
   &=  D_{f_\gamma} \int_0^1 g((1-\tau)^\gamma)\mathrm{d}\tau
\end{align*}
where the second equality follows from $(1-x)
\frac{f_\gamma(x)^2}{f_\gamma(x+\tau(1-x))} = (1-x)^{1-\gamma}(1-\tau)^\gamma$ and from $D_{f_\gamma} = \int_{0}^1(1-x)^{1-\gamma}\drm x$. The lower bound then follows by letting $\gamma\uparrow2$ and applying the monotone
convergence theorem. 
\end{proof}

We now use the Integral Equality (\Cref{lem:f-bound}) to prove the Global Delay Theorem.
\thmStrongDelay*

\begin{proof}
Relabel the jobs so that $p_1\le \cdots\le p_n$.  Since $\alg$ is
nonidling,
\[
    \sum_{j=1}^n C_j^\alg
    =
    \sum_{j=1}^n d^\alg(j,j)
    +
    \sum_{1\le i<j\le n}
    \bigl(d^\alg(i,j)+d^\alg(j,i)\bigr),
\]
and $d^\alg(j,j)=p_j$. Let $P:=\sum_{j=1}^n p_j$. By the classical delay decomposition recalled in \Cref{sec:overview},
\[
\ex[\ALG]
\le
P
+
\sum_{1\le i<j\le n}
p_i\,g\!\left(\frac{p_i}{p_j}\right) \quad \text{and}
\quad \OPT = P+\sum_{1\le i<j\le n}p_i.
\]

For $j\in[n]$, let $I_j:=((j-1)/n,j/n]$ and define the nondecreasing step
function $f:[0,1]\to[0,\infty)$ by $f(x)=p_j$ for $x\in I_j$.  Set
\[
    A:=\int_0^1\int_x^1
    f(x)g\!\left(\frac{f(x)}{f(y)}\right)
    \,\mathrm{d}y\,\mathrm{d}x \quad \text{and} \quad O:=\int_0^1\int_x^1 f(x)\,\mathrm{d}y\,\mathrm{d}x.
\]
The integrands are constant on each rectangle $I_i\times I_j$ with $i<j$,
and on the triangular half of each square $I_j\times I_j$.  Therefore
\[
    n^2A 
    =
    \sum_{1\le i<j\le n}
    p_i\,g\!\left(\frac{p_i}{p_j}\right)
    +
    \frac{g(1)}2 P  \quad \text{and}\quad 
    n^2O
    =
    \sum_{1\le i<j\le n}p_i+\frac12P
    =
    \OPT-\frac12P .
\]
By \Cref{lem:f-bound}, $A\le \rho O$, where
$\rho=\int_0^1g(t^2)\,\mathrm{d}t$.  Hence
\begin{align}
\ex[\ALG]
\leq
n^2 A+\left(1-\frac{g(1)}2\right)P
&\le
\rho\left(\OPT-\frac12P\right)
+\left(1-\frac{g(1)}2\right)P \label{eq:global:delay:slack}\\
&=
\rho\,\OPT
+\frac{P}{2}\bigl(2-g(1)-\rho\bigr). \nonumber
\end{align}
Finally, $g(1)\ge1$ and, because $g$ is nonincreasing,
$\rho=\int_0^1g(t^2)\,\mathrm{d}t\ge g(1)$.  Thus
$2-g(1)-\rho\le0$, and $\ex[\ALG]\le\rho\,\OPT$.

We complete the proof by arguing that the theorem is tight when the delay bounds are achieved with equality. Let $f:[0,1)\rightarrow[0,\infty)$ be a nondecreasing function with $D_f = \int_{0}^1\int_x^1 f(x)\,\mathrm{d}y\,\mathrm{d}x = \int_{0}^1(1-x)f(x)\,\mathrm{d}x<\infty$ and consider the setting where there are $n$ jobs with processing times $p_j = f(j/n)$ for $j\in\{0,\dots,n-1\}$. We will show that $\frac{1}{n^2}P = \frac{1}{n^2}\sum_{j=0}^{n-1}p_j \xrightarrow{n\rightarrow\infty} 0$, implying that, if the delay bounds are achieved with equality,
    \[
    \frac{\Ebb[\ALG]}{\OPT}\xrightarrow{n\rightarrow\infty} \frac{\int_0^1\int_x^1
    f(x)g\!\left(\frac{f(x)}{f(y)}\right)\,\mathrm{d}y\,\mathrm{d}x}{\int_0^1\int_x^1 f(x)\,\mathrm{d}y\,\mathrm{d}x},
    \]
    and the lower bound then follows from the lower bound in \Cref{lem:f-bound}. Let us now argue that $\frac{1}{n^2}P$ converges to $0$. Fix $\delta \in(0,1)$ and split the sum $\frac{1}{n^2}P$ into the terms with $j/n\leq 1-\delta$ and those with $j/n>1-\delta$. The first term satisfies $\frac{1}{n^2}\sum_{j\leq (1-\delta)n}f(j/n) \le \frac{1}{n}f(1-\delta) \xrightarrow{n\rightarrow\infty} 0$. We bound the other term by $2\int_{1-\delta}^1(1-x)f(x)dx$ because for any $j\in\{0,\dots, n-1\} : \frac{1}{n^2}f(j/n)\leq 2\int_{j/n}^{(j+1)/n} (1-x)f(x)dx$. Thus, for any $\delta \in(0,1)$ we have $\limsup_{n\rightarrow\infty} \frac{1}{n^2}P \leq 2\int_{1-\delta}^1(1-x)f(x)dx$. We conclude that the limit must be $0$ by letting $\delta \downarrow 0$. 
\end{proof}

Finally, we note that the Global Delay Theorem naturally extends 
to the objective of the sum of weighted completion times 
and weighted pairwise delays. We give details in \Cref{app:delay-theorem}.

\section{Scheduling on a Single Machine}\label{sec:single-machine}

\subsection{The $\GREEDY$ Algorithm}\label{sec:greedy-alg}
The goal of this section is to prove the following competitiveness guarantee for \Cref{alg:greedy}. 
\begin{algorithm}[H]
\DontPrintSemicolon
\caption{$\GREEDY$}
\label{alg:greedy}
    Run Round-Robin (RR) \;
    Whenever a job $j$ emits its signal, stop RR, run $j$ to completion, and then resume RR. \;
\end{algorithm}

\begin{restatable}{theorem}{thmGreedy}\label{thm:greedy}
	The competitive ratio of $\GREEDY$ (\Cref{alg:greedy}) under uniform stochastic clairvoyance for minimizing the total completion time is equal to $\nicefrac{16}{9} \approx 1.778$.
\end{restatable}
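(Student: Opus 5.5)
The plan is to apply the Global Delay Theorem (\Cref{lem:strong-delay}) with the affine function $g(x) = 2 - \frac23 x$. Most of the work is computing the expected pairwise delay of $\GREEDY$ exactly. Fix two jobs $i \neq j$ with $p_i \le p_j$, and write $x = p_i/p_j \in (0,1]$, $S_i = p_i U$, $S_j = p_j V$ with $U,V$ i.i.d.\ $\Ucal([0,1])$.

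\textbf{Step 1: reduce to a two-job race.} The key structural observation is that the pairwise delay $d^{\alg}(i,j)+d^{\alg}(j,i)$ depends only on $S_i, S_j, p_i, p_j$, and not on the other jobs. The reasons are these:
\begin{itemize}
\item While neither $i$ nor $j$ has signaled, both are either in Round-Robin together, and so advance at equal rates, or are both paused while some third job is run to completion.
\item Hence the first of the two to signal is the one with the smaller signal value.
\item If $S_i < S_j$, then at $i$'s signal $j$ has received exactly $S_i$ units, and $i$ is immediately run to completion. This gives $d(i,j)+d(j,i) = p_i + S_i$.
\item Symmetrically, if $S_j < S_i$, the pairwise delay is $p_j + S_j$.
\end{itemize}
Ties have probability zero. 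I would argue this carefully by tracking elapsed times $e_i(t), e_j(t)$ and noting that $e_i(t) = e_j(t)$ until the first of the two signals. The point to be careful about is that a third job's exclusive run pauses both $i$ and $j$. This is the only genuinely conceptual step, and I expect it to be the main (though mild) obstacle: stating it cleanly for many jobs. $\GREEDY$ is clearly nonidling.

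\textbf{Step 2: compute the expectation.} We have
\[
\Ebb\bigl[d(i,j)+d(j,i)\bigr] = \Ebb\bigl[(p_i+p_iU)\one\{U< V/x\}\bigr] + \Ebb\bigl[(p_j+p_jV)\one\{V<xU\}\bigr].
\]
The second term gives
\[
p_j\Bigl(\tfrac{x}{2} + \tfrac{x^2}{6}\Bigr) = p_i\Bigl(\tfrac12+\tfrac{x}{6}\Bigr).
\]
The first term gives
\[
p_i\Bigl(1-\tfrac{x}{2}\Bigr) + p_i\Bigl(\tfrac12 - \tfrac{x}{3}\Bigr).
\]
This uses $\Pbb(V > xU)=1-x/2$ and $\Ebb[U\one\{V>xU\}] = \int_0^1 u(1-xu)\,\drm u$. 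Summing, the expected pairwise delay is exactly $p_i\,(2 - \frac23 x)$.

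\textbf{Step 3: invoke the theorem.} The function $g(x)=2-\frac23 x$ maps $[0,1]$ into $[\frac43,2]\subseteq[1,\infty)$, is nonincreasing, and is affine, hence concave. So \Cref{lem:strong-delay} gives competitive ratio at most
\[
\int_0^1 \Bigl(2-\tfrac23 t^2\Bigr)\,\drm t = 2-\tfrac29 = \tfrac{16}{9}.
\]
The delay bound holds with equality for every pair, so the tightness clause of \Cref{lem:strong-delay} shows that the ratio equals $\nicefrac{16}{9}$. That clause is realized by the instances $p_j = f_\gamma(j/n)$ with $\gamma\uparrow 2$ and $n\to\infty$.
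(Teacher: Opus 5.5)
Your proposal is correct and follows essentially the paper's proof. It uses the same two-job race argument (a third job's exclusive run pauses $i$ and $j$ simultaneously) to get the exact pairwise delay $p_i(2-\frac{2}{3}x)$, and then the Global Delay Theorem with its equality clause, which gives exactly $\nicefrac{16}{9}$. The only cosmetic difference is that the paper first derives the delay for a general signal density, via the identity $D_{ij} = p_i + \min\{S_i,S_j\} + (p_j-p_i)\mathbf{1}_{\{S_j<S_i\}}$ and tail integrals, and then specializes to the uniform case, whereas you integrate directly over the two uniform variables.
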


We start with the following general statement on pairwise delays.

\begin{restatable}[Pairwise delay of $\GREEDY$]{lemma}{pairwisedelaygreedy}
\label{lem:pairwise-delay-greedy}
Consider single-machine scheduling in which job $j\in [n]$ emits a signal at time $S_j = p_jX_j$ where $(X_j)_{j\in [n]}$ are i.i.d.\ random variables on $[0,1]$ with density $f:[0,1]\rightarrow \Rbb$ and cumulative distribution function $F:[0,1]\rightarrow[0,1]$.
Then, the expected pairwise delay of $\GREEDY$ satisfies, for $i,j$ such that $p_i \leq p_j$,
    $\ex[d^{\GREEDY}(i,j)+d^{\GREEDY}(j,i)] = p_i \cdot g(p_i/p_j)$
    where, for $r\in (0,1]$,
    \[
    g(r) = 1 + \int_0^1 \bigl(1-F(t)\bigr)\bigl(1-F(r t)\bigr)\,\mathrm dt
    + \left(\frac{1}{r} - 1\right)\int_0^1 F(r t)\,f(t)\,\mathrm dt.
    \]
\end{restatable}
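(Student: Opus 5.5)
The plan is to reduce the pairwise delay to a two-job race between $i$ and $j$, then compute its expectation by a case split on which job signals first. Throughout, fix $i,j$ with $p_i\le p_j$, write $r=p_i/p_j$, $S_i=p_iX_i$ and $S_j=p_jX_j$.

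\textbf{Step 1: reduction to two jobs.} The main obstacle is conceptual rather than computational: I need to justify rigorously that the other jobs do not affect $d^{\GREEDY}(i,j)+d^{\GREEDY}(j,i)$. I would argue this from the structure of $\GREEDY$.
\begin{itemize}
\item While a third job $k$ is being run to completion after its signal, neither $i$ nor $j$ receives any processing.
\item During Round-Robin phases, every alive job, in particular $i$ and $j$ while both are alive, receives processing at the same rate.
\end{itemize}
Hence, as long as neither $i$ nor $j$ has signalled, their elapsed times coincide, $e_i(t)=e_j(t)$. The first of the two to signal is therefore the one with the smaller signal value $\min(S_i,S_j)$, and it signals at common elapsed level $\min(S_i,S_j)$. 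Since $X_i\le 1$, job $i$ always signals no later than its completion, and likewise for $j$, so neither job can complete under RR before emitting its signal.

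\textbf{Step 2: the two cases.}
\begin{itemize}
\item If $S_i<S_j$, job $i$ is run to completion immediately, so $j$ has received exactly $S_i$ units by $C_i$. Also $j$ completes after $i$, so $d(i,j)=p_i$. The sum of delays is therefore $p_i+S_i$.
\item If $S_j<S_i$, symmetrically $d(j,i)=p_j$ and $d(i,j)=S_j$. The sum is $p_j+S_j$.
\end{itemize}
Ties occur with probability zero because the $X$'s have a density. Combining the two cases, the sum of delays equals
\[
p_i+\min(S_i,S_j)+(p_j-p_i)\,\one\{S_j<S_i\}.
\]

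\textbf{Step 3: expectations.} I compute the two random terms separately.
\begin{itemize}
\item By the tail formula and independence, $\ex[\min(S_i,S_j)]=\int_0^{p_i}\Pr[S_i>s]\Pr[S_j>s]\,\mathrm ds$. The upper limit is $p_i$ because $S_i\le p_i$. The integrand is $(1-F(s/p_i))(1-F(s/p_j))$, and substituting $s=p_it$ gives $p_i\int_0^1(1-F(t))(1-F(rt))\,\mathrm dt$.
\item Next, $\Pr[S_j<S_i]=\Pr[X_j<rX_i]=\int_0^1F(rt)f(t)\,\mathrm dt$, by conditioning on $X_i=t$. Multiplying by $p_j-p_i=p_i(1/r-1)$ gives the last term of $g$.
\end{itemize}
Summing the three contributions and factoring out $p_i$ yields exactly $p_i\cdot g(r)$ as stated.
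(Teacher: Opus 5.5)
Your proposal is correct and follows the paper's proof essentially step for step. It uses the same reduction to a race between $i$ and $j$ (other jobs only pause both), the same pointwise identity $p_i+\min\{S_i,S_j\}+(p_j-p_i)\mathbf{1}_{\{S_j<S_i\}}$, the tail-integral formula for the minimum, and conditioning on $X_i$ for $\Pr[X_j<rX_i]$. The only cosmetic difference is that you compute $\ex[\min\{S_i,S_j\}]$ on the original time scale and then substitute $s=p_i t$, whereas the paper divides by $p_i$ first and works with $\min\{X_i, X_j/r\}$.
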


\begin{proof}
Fix two jobs $i$ and $j$ with $p_i\le p_j$, and write
$x:=\frac{p_i}{p_j}\in (0,1]$.
Let
$D_{ij}:=d^{\textsc{Greedy}}(i,j)+d^{\textsc{Greedy}}(j,i)$.

Until one of the two jobs emits its signal, \textsc{Greedy} processes both jobs
only through the Round-Robin phase. Other jobs may interrupt the execution by
emitting signals and being run to completion, but such interruptions only pause
both $i$ and $j$; they do not change the relative elapsed processing of
$i$ and $j$. Hence the first among $i$ and $j$ to emit a signal is
exactly the one with the smaller signal threshold $S_i$ or $S_j$.

If $S_i\le S_j$, then job $i$ emits a signal first and \textsc{Greedy}
immediately runs $i$ to completion. At that time, job $j$ has received
exactly $S_i$ units of processing. Therefore 
$d^{\textsc{Greedy}}(i,j)=p_i$ and
$d^{\textsc{Greedy}}(j,i)=S_i$,
and hence
\[
D_{ij}=p_i+S_i.
\]
Similarly, if $S_j<S_i$, then job $j$ emits a signal first and is completed
immediately. At the completion time of $j$, job $i$ has received exactly
$S_j$ units of processing. Therefore
$d^{\textsc{Greedy}}(i,j)=S_j$ and
$d^{\textsc{Greedy}}(j,i)=p_j$,
and hence
\[
D_{ij}=p_j+S_j.
\]
Since the signal distribution has a density, ties occur with probability zero.

The two cases can be combined into the single identity
\[
D_{ij} = p_i + \min\{S_i, S_j\} + (p_j - p_i)\,\mathbf{1}_{\{S_j < S_i\}},
\]
which holds pointwise: when $S_i \le S_j$ it evaluates to $p_i + S_i$, and when $S_j < S_i$ it evaluates to $p_i + S_j + (p_j - p_i) = p_j + S_j$. Taking expectations,
\[
\mathbb{E}[D_{ij}]
= p_i + \mathbb{E}\!\left[\min\{S_i, S_j\}\right] + (p_j - p_i)\,\Pr[S_j < S_i].
\]
Dividing by $p_i$ and substituting $S_k = p_k X_k$ for $k \in \{i,j\}$, so that $S_i/p_i = X_i$, $S_j/p_i = X_j/x$, and $\{S_j < S_i\} = \{X_j < x X_i\}$, we obtain
\[
\frac{\mathbb{E}[D_{ij}]}{p_i}
= 1 + \mathbb{E}\!\left[\min\!\left\{X_i, \tfrac{1}{x} X_j\right\}\right]
+ \left(\frac{1}{x} - 1\right)\Pr[X_j < x X_i].
\]
Since $\min\{X_i, \tfrac{1}{x}X_j\} \in [0,1]$ is nonnegative, the tail-integral formula together with the independence of $X_i$ and $X_j$ gives
\[
\mathbb{E}\!\left[\min\!\left\{X_i, \tfrac{1}{x} X_j\right\}\right]
= \int_0^1 \Pr\!\left(X_i > t \text{ and } X_j > x t\right)\,\mathrm dt
= \int_0^1 \bigl(1 - F(t)\bigr)\bigl(1 - F(x t)\bigr)\,\mathrm dt .
\]
Conditioning on $X_i$ and using independence,
\[
\Pr[X_j < x X_i]
= \int_0^1 \Pr(X_j < x t \mid X_i = t)\, f(t)\,\mathrm dt
= \int_0^1 F(x t)\, f(t)\,\mathrm dt .
\]
Combining the last three displays yields
\[
\frac{\mathbb{E}[D_{ij}]}{p_i}
= 1 + \int_0^1 \bigl(1 - F(t)\bigr)\bigl(1 - F(x t)\bigr)\,\mathrm dt
+ \left(\frac{1}{x} - 1\right)\int_0^1 F(x t)\, f(t)\,\mathrm dt ,
\]
which is exactly $g(x)$. Hence $\mathbb{E}[D_{ij}] = p_i g(x)$, as claimed.
\end{proof}

We now apply this result to analyze $\GREEDY$ under uniform stochastic clairvoyance, i.e., $S_j \sim \mathcal U([0,p_j])$ for each job $j \in [n]$.

\begin{restatable}{lemma}{greedydelayuniform}
\label{lem:greedy-delay}
Under uniform stochastic clairvoyance, for any two jobs $i,j$ with $p_i \leq p_j$, it holds that
\[
\ex\!\left[d^{\GREEDY}(i,j)+d^{\GREEDY}(j,i)\right]
=
p_i\left(2 - \frac{2}{3}\frac{p_i}{p_j}\right).
\]
\end{restatable}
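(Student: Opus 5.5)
This is a direct specialization of \Cref{lem:pairwise-delay-greedy}. Under uniform stochastic clairvoyance we have $S_j = p_j X_j$ with $X_j \sim \mathcal U([0,1])$ i.i.d., so $f \equiv 1$ and $F(t)=t$ on $[0,1]$. With $r = p_i/p_j \in (0,1]$, the lemma gives $\ex[d^{\GREEDY}(i,j)+d^{\GREEDY}(j,i)] = p_i\, g(r)$. It therefore remains only to evaluate the two integrals in $g(r)$ in closed form. One may also check directly that the hypotheses hold: the uniform distribution has a density, so ties occur with probability zero.

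\textbf{First integral.} We have
\[
\int_0^1 (1-t)(1-rt)\,\mathrm dt = \int_0^1 \bigl(1-(1+r)t + r t^2\bigr)\,\mathrm dt = 1 - \frac{1+r}{2} + \frac r3 = \frac12 - \frac r6 .
\]

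\textbf{Second integral.} We have
\[
\int_0^1 F(rt) f(t)\,\mathrm dt = \int_0^1 rt\,\mathrm dt = \frac r2 ,
\]
which is valid since $rt \in [0,1]$. Multiplying by $\frac1r - 1$ gives $\frac{1-r}{2}$.

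\textbf{Summing.} Adding the three terms,
\[
g(r) = 1 + \frac12 - \frac r6 + \frac12 - \frac r2 = 2 - \frac23 r ,
\]
which is the claimed expression $p_i\bigl(2 - \frac23 \frac{p_i}{p_j}\bigr)$.

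The only point needing care is to confirm that $F(rt)=rt$ holds on the whole integration range, which it does because $r\le 1$. Apart from that there is no real obstacle.

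As a sanity check, one can use the direct formula from the proof of \Cref{lem:pairwise-delay-greedy}: $\ex[\min\{X_i, X_j/r\}]$ together with $\Pr[X_j < rX_i] = r/2$. This reproduces the same value.
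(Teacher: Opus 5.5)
Your proposal is correct and follows the paper's proof essentially line for line. You specialize \Cref{lem:pairwise-delay-greedy} with $F(t)=t$ and $f\equiv 1$, evaluate $\int_0^1(1-t)(1-rt)\,\mathrm dt=\frac12-\frac r6$ and $\int_0^1 rt\,\mathrm dt=\frac r2$, and sum to obtain $g(r)=2-\frac23 r$.
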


\begin{proof}
Fix two jobs $i,j$ with $p_i \le p_j$ and write $x := p_i/p_j \in (0,1]$. By \Cref{lem:pairwise-delay-greedy},
\[
\ex\!\left[d^{\GREEDY}(i,j)+d^{\GREEDY}(j,i)\right] = p_i\, g(x),\]
\[
g(x) = 1 + \int_0^1 (1-F(t))(1-F(xt))\,\mathrm dt + \left(\frac{1}{x}-1\right)\int_0^1 F(xt)\,f(t)\,\mathrm dt,
\]
where $F(t)=t$ and $f(t)=1$ for $t \in [0,1]$ are the CDF and PDF of $X_j = S_j/p_j$. The two integrals evaluate to
\[
\int_0^1 (1-t)(1-xt)\,\mathrm dt = \int_0^1 \bigl(1 - (1+x)t + x t^2\bigr)\,\mathrm dt = \frac{1}{2} - \frac{x}{6}
\]
and
\[
\int_0^1 F(xt)\,f(t)\,\mathrm dt = \int_0^1 x t\,\mathrm dt = \frac{x}{2}.
\]
Substituting these into $g$,
\[
g(x) = 1 + \left(\frac{1}{2} - \frac{x}{6}\right) + \left(\frac{1}{x}-1\right)\frac{x}{2}
= 1 + \frac{1}{2} - \frac{x}{6} + \frac{1}{2} - \frac{x}{2}
= 2 - \frac{2}{3}x.
\]
Therefore
\[
\ex\!\left[d^{\GREEDY}(i,j)+d^{\GREEDY}(j,i)\right] = p_i\left(2 - \frac{2}{3}\,\frac{p_i}{p_j}\right),
\]
which concludes the proof of the lemma.
\end{proof}

\begin{proof}[Proof of \Cref{thm:greedy}.]
Specifically, we use \Cref{lem:greedy-delay} and the Global Delay Theorem (\Cref{lem:strong-delay}) with $g(x) = 2-\frac{2}{3}x$ to obtain that the competitive ratio of $\GREEDY$ is exactly equal to $\rho$, where
\[
\rho = \int_0^1 g(t^2)\,\mathrm{d}t
= \int_0^1 \left(2-\frac{2}{3}t^2\right)\,\mathrm{d}t
= 2 - \frac{2}{3} \int_0^1 t^2 \,\mathrm{d}t
= 2 - \frac{2}{3} \cdot \frac{1}{3}
= \frac{16}{9}.\qedhere
\]
\end{proof}

Finally, we note that $\GREEDY$ has the same competitive ratio for the objective of minimizing the sum of weighted completion times, where we replace RR in \Cref{alg:greedy} with its weighted analogue.

\subsection{The $\LGREEDY$ Algorithm}\label{sec:lgreedy}

We next study a variant of $\GREEDY$,
presented in \Cref{alg:lambda:greedy}, called $\LGREEDY$. 
A bad case for $\GREEDY$ is the event when a very long job $j$ has a very early signal and thus delays much shorter jobs $i$. This regime, where $x = p_i/p_j$ is small, contributes the highest area to the competitive ratio.
To better handle it, we add the following safety: if a job that emits a signal $S_j$ is not finished after receiving an additional processing time of $\frac{1-\lambda}{\lambda} \cdot S_j$, we resume Round Robin (RR).

\begin{algorithm}[H]
\DontPrintSemicolon
\caption{$\LGREEDY$ for $\lambda \in (0,1]$}
\label{alg:lambda:greedy}
Run Round-Robin (RR) \;
    Whenever a job $j$ emits its signal at time $t$ with $e_j(t) = S_j$, stop RR, run $j$ for $\frac{1-\lambda}{\lambda} S_j$ time units (or until it completes), and then resume RR. \;
\end{algorithm}

When optimizing $\lambda$, this variant of $\GREEDY$ provides a slightly improved competitive ratio under uniform stochastic clairvoyance. 
Specifically, we will show the following result.
The result relies heavily on the Global Delay Theorem: in contrast, numerical computations indicate that the strongest bound obtainable via the Local Delay Theorem is approximately $1.83$ (cf.~\Cref{app:lambda-greedy}).

\begin{theorem}\label{thm:lambda-greedy}
    $\LGREEDY$ (\Cref{alg:lambda:greedy}) with $\lambda = 0.038$ is $1.752$-competitive for minimizing the total completion time on a single machine with uniform stochastic clairvoyance.
\end{theorem}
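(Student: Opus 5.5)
The plan is to follow the three-step framework from \Cref{sec:overview}: compute the exact normalized expected pairwise delay of $\LGREEDY$, replace it by a concave nonincreasing majorant, and apply the Global Delay Theorem (\Cref{lem:strong-delay}).

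\textbf{Step 1: reduce to a two-job process.} Fix jobs $i,j$ with $p_i\le p_j$, write $x=p_i/p_j$, and let $S_i=p_iX_i$, $S_j=p_jX_j$ with $X_i,X_j\sim\Ucal([0,1])$ independent. As in the proof of \Cref{lem:pairwise-delay-greedy}, activity of other jobs only pauses both $i$ and $j$. This covers their full or partial bursts after signalling and their completions. Such pauses leave the relative elapsed times of $i$ and $j$ unchanged, so $D_{ij}=d(i,j)+d(j,i)$ is a deterministic function of $(p_i,p_j,S_i,S_j)$. This function is determined by the following two-job process.
\begin{itemize}
\item Run RR until the first signal, say of job $k$ with threshold $S_k$.
\item Job $k$ then gets a burst taking its elapsed time to $\min\{p_k, S_k/\lambda\}$. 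It finishes during the burst exactly when $X_k\ge\lambda$.
\item If it does not finish, RR resumes with one job ahead of the other. The other job may signal later, and then gets its own burst.
\item If a job finished neither during nor after its burst, it can only complete under RR.
\end{itemize}

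\textbf{Step 2: enumerate the cases.} The cases are the following.
\begin{itemize}
\item $S_i\le S_j$ versus $S_j<S_i$, which is $X_j<xX_i$.
\item Whether the first signaller finishes in its burst, i.e.\ $X_k\ge\lambda$.
\item If not, whether the second job signals before the first completes under RR.
\item Whether the second job's burst completes it.
\end{itemize}
In each case, $D_{ij}$ is an explicit affine expression in $S_i,S_j,p_i,p_j$. Two sample cases:
\begin{itemize}
\item $S_i\le S_j$ and $X_i\ge\lambda$. Then $D_{ij}=p_i+S_i$, exactly as for $\GREEDY$.
\item $S_j<S_i$ and $X_j<\lambda$. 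Then $j$ runs to elapsed time $S_j/\lambda$ and stops. Job $i$ then continues at elapsed $S_j$ and may complete under RR or via its own signal, and the delays split accordingly.
\end{itemize}
Integrating over the uniform square gives $g_\lambda(x)=\Ebb[D_{ij}]/p_i$. This is a piecewise rational or polynomial function of $x$, with breakpoints where case boundaries such as $x=\lambda$ or $xX_i=\lambda$ enter or leave the square. I would organize the computation by first conditioning on which signal comes first. The goal is $g_\lambda(x)$ for $\lambda=0.038$, or symbolically in $\lambda$ and then specialized. Note $g_\lambda\ge1$ always, since $D_{ij}\ge p_i$.

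\textbf{Step 3: majorant and integration.} Construct $\hat g$ as the least concave majorant of $g_\lambda$ on $[0,1]$. If it is not monotone, replace it by $\max\{\hat g(x),\hat g(x')\}$ over $x'\ge x$ up to its maximizer, which stays concave and nonincreasing. In practice, I expect $\hat g$ to be $g_\lambda$ itself on most of $[0,1]$, with a linear segment (a tangent line) bridging a nonconcave region near the breakpoint $x\approx\lambda$. Then $\hat g\ge1$, and \Cref{lem:strong-delay} gives the ratio $\int_0^1\hat g(t^2)\,\mathrm dt$. This integral is evaluated in closed form piecewise, or numerically with rigorous rounding, and shown to be at most $1.752$.

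\textbf{Main obstacle.} I expect the main obstacle to be Step 2. The chain of bursts and resumed RR creates many subcases, and the region boundaries inside the unit square depend on $x$. Both must be tracked correctly to get an exact $g_\lambda$. A secondary risk is that $g_\lambda$ is not concave near its small-$x$ behavior, which is where the improvement over $2-\tfrac23 x$ comes from. In that case the majorant must be chosen carefully so as not to lose the gain. I would first check numerically, for instance by simulation, that the computed $g_\lambda$ is correct before certifying the final integral.
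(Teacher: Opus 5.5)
Your plan is correct and follows the paper's route. The paper computes $g_\lambda$ by the same two-job case analysis, getting a linear function on $[0,\lambda]$ and $Ax+B+Cx^{-1}+Dx^{-2}$ on $[\lambda,1]$; it then takes a nonincreasing concave majorant, applies \Cref{lem:strong-delay}, and evaluates the integral numerically to get just under $1.752$. One correction to your Step 3 expectation: there is no nonconcave region near $x=\lambda$, because the two pieces meet there with equal value and equal slope $(1-\lambda^3)/(6\lambda)$ and the second piece is concave, so $g_\lambda$ is its own least concave majorant and the only fix needed is the flat cap at $g_\lambda(x^*_\lambda)$ on $[0,x^*_\lambda]$ with $x^*_\lambda\approx 0.16$, which your monotonization step produces, rather than a tangent bridge near $\lambda$.
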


We first analyze the expected pairwise delay of $\LGREEDY$, whose proof is deferred to \Cref{app:lambda-greedy}.

\begin{restatable}{lemma}{lemRobustDelays}\label{lem:robust-delays}
    Consider two jobs $i$ and $j$ with $p_i \le p_j$. Then, for any $\lambda>0$, $\ex[d^{\LGREEDY}(i,j)+d^{\LGREEDY}(j,i)] \leq p_i g_\lambda\left(\frac{p_i}{p_j}\right)$, where $g_\lambda  : [0,1]\rightarrow \Rbb_+$ satisfies
\begin{itemize}
        \item[] if $x \le \lambda$, then  $g_\lambda(x) = \frac{\lambda+3}{2}  + \frac{(1-\lambda)(1+\lambda+\lambda^2)}{6\lambda} x$,
        \item[] if $\lambda \le x$, then 
        $g_\lambda(x) = \frac{-\lambda^4+2\lambda^3-\lambda^2+2\lambda-4}{6(\lambda-1)^2} x
        + \frac{3(\lambda^3-3\lambda+4)}{6(\lambda-1)^2}
        + \frac{3\lambda(\lambda^2-2\lambda-1)}{6(\lambda-1)^2} x^{-1}
        + \frac{(-\lambda^4+2\lambda^3+\lambda^2)}{6(\lambda-1)^2} x^{-2}$.
\end{itemize}
\end{restatable}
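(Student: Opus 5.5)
The plan is to follow the proof of \Cref{lem:pairwise-delay-greedy}. First I describe $D_{ij}:=d^{\LGREEDY}(i,j)+d^{\LGREEDY}(j,i)$ pathwise as a function of $(X_i,X_j)$, where $S_i=p_iX_i$, $S_j=p_jX_j$ and $X_i,X_j\sim\Ucal([0,1])$ are independent. Then I integrate over the unit square. Write $x=p_i/p_j\le 1$.

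As in the \GREEDY{} analysis, jobs other than $i,j$ only pause both jobs, whether they run in RR or in bursts. So the relative progress of $i$ and $j$ is governed only by RR phases and by the bursts of $i$ and $j$ themselves. The one fact to record is that a job $k$ signalling at elapsed time $S_k$ is run until its elapsed time reaches $S_k/\lambda$. Hence it completes during its burst iff $p_k\le S_k/\lambda$, i.e.\ iff $X_k\ge\lambda$. In the other case it leaves the burst with elapsed time $S_k/\lambda$ and never signals again.

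\textbf{Case split.} Let $k$ be the job with the smaller threshold $\min\{S_i,S_j\}$, and let $l$ be the other job. At $k$'s signal, $l$ has elapsed time exactly $S_k$.
\begin{enumerate}
\item[(a)] If $X_k\ge\lambda$, then $k$ finishes in its burst and $D_{ij}=p_k+S_k$, exactly as in \GREEDY.
\item[(b)] If $X_k<\lambda$, then $k$ sits at elapsed time $S_k/\lambda$ and $l$ at $S_k$, and RR resumes. Two events can happen next.
\begin{enumerate}
\item[(b1)] $k$ finishes first in RR. This happens when $p_k-S_k/\lambda\le S_l-S_k$, or $l$ never signals before that point. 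Then $D_{ij}=p_k+S_k+(p_k-S_k/\lambda)$.
\item[(b2)] $l$ signals first. Then $l$ bursts. If $X_l\ge\lambda$, it completes and $D_{ij}=p_l+S_k/\lambda+(S_l-S_k)$. Otherwise $l$ reaches $S_l/\lambda$, RR resumes, and whichever job has the smaller remaining work finishes first. This gives an explicit minimum-type formula for $D_{ij}$.
\end{enumerate}
\end{enumerate}

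Each subcase is a polygonal region of $[0,1]^2$ in the coordinates $(X_i,X_j)$. Its boundaries are the lines $X_j=xX_i$ (which job signals first), $X_k=\lambda$, $X_l=\lambda$, and the race conditions between remaining work and the time to the next signal. On each region $D_{ij}/p_i$ is affine in $(X_i,X_j)$, with coefficients rational in $x$ and $\lambda$. So $\ex[D_{ij}]/p_i$ is a finite sum of integrals of affine functions over polygons. This is routine but tedious; I would check it with a computer algebra system.

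The break at $x=\lambda$ comes from comparing the line $X_j=xX_i$ with the lines $X_i=\lambda$ and $X_j=\lambda$. For $x\le\lambda$, the long job $j$ is the first to signal only when $X_j<x\le\lambda$, so $j$ always fails its burst. Here several of the nested subcases collapse, which is consistent with the simpler affine formula. For $x\ge\lambda$, all subcases appear, and the region vertices have coordinates like $\lambda/x$, which produces the $x^{-1}$ and $x^{-2}$ terms.

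\textbf{Main obstacle.} The hardest part is subcase (b2) with both bursts failing, i.e.\ the second RR race. Its region is cut by a race line whose position depends on both variables. Since the lemma only claims an inequality, I would first try to upper bound $D_{ij}$ there by a simpler expression, for instance by assuming the job finishing second receives all the remaining work before the first job completes, or by dropping the race and charging $p_i+p_j$-type terms. This keeps the regions polygonal with simple vertices. I would then check that the resulting piecewise expression is at most, or equal to, the stated $g_\lambda$. If a crude bound turns out too lossy and exceeds $g_\lambda$, I would integrate that region exactly instead.
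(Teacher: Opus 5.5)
Your pathwise case tree (first signaller, burst succeeds iff $X_k\ge\lambda$, then the RR race) is the same analysis as the paper's. The paper splits on $x\le\lambda$ versus $x>\lambda$; for $x\le\lambda$, job $j$ receives less than $p_i/\lambda\le p_j$ before $i$ completes, so $i$ always finishes first. It then enumerates the same configurations of $(S_i,S_j)$ and integrates the affine delay expressions exactly over each polygonal region.

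One caution: the paper's computation holds with equality throughout, so the stated $g_\lambda$ is exactly $\ex[D_{ij}]/p_i$, not merely an upper bound. Any simplification that loses on a positive-measure part of the both-bursts-fail subcase will therefore overshoot $g_\lambda$, and your fallback of exact integration is mandatory. That integration is easy: in this subcase $D_{ij}=\min\{2p_i-\tfrac{1-\lambda}{\lambda}(S_i-S_j),\,2p_j-\tfrac{1-\lambda}{\lambda}(S_j-S_i)\}$ regardless of which job signalled first, and the race boundary is the single line $S_j=S_i+\tfrac{\lambda(p_j-p_i)}{1-\lambda}$.
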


\begin{proof}[We are now ready to prove~\Cref{thm:lambda-greedy}]
The normalized expected pairwise delay is bounded by the function $g_\lambda:[0,1]\rightarrow \Rbb_+$ defined in \Cref{lem:robust-delays} and represented in \Cref{fig:lambda-greedy}. The function $g_\lambda$ is not monotone, and we denote its maximizer in $[0,1]$ by $
x^*_\lambda = \argmax_{x\in[0,1]} g_\lambda(x),$
which satisfies $x^*_\lambda \geq \lambda$ (cf.~\Cref{lem:lambda:greedy:maximizer:existence}). Since $g_\lambda$ is nonincreasing and concave for $x\ge x^*_\lambda$ (cf.~\Cref{lem:lambda:greedy:gconcave}), we define 
\begin{equation}\label{eq:ghat}
    \hat g_\lambda(x) = \begin{cases}
    g_\lambda(x^*_\lambda)\qquad \text{ if } x \le x^*_\lambda,\\
    g_\lambda(x) \qquad \text{ otherwise}, 
\end{cases}
\end{equation}
which is nonincreasing, concave and satisfies $\forall x\in [0,1] : g_\lambda(x) \leq \hat g_\lambda(x)$ (cf.~\Cref{lem:lambda:greedy:ghat}). 
Hence, by~\Cref{lem:strong-delay}, the competitive ratio $\rho_\lambda$ of $\LGREEDY$ satisfies $\rho_\lambda \le \int_0^1 \hat g_{\lambda}(t^2)\,dt.$
Integral computations then imply that for $\lambda = 0.038$ it holds that $\rho_\lambda < 1.752$. 
\end{proof}

\begin{figure}
	\centering
	\includegraphics[width=0.8\textwidth]{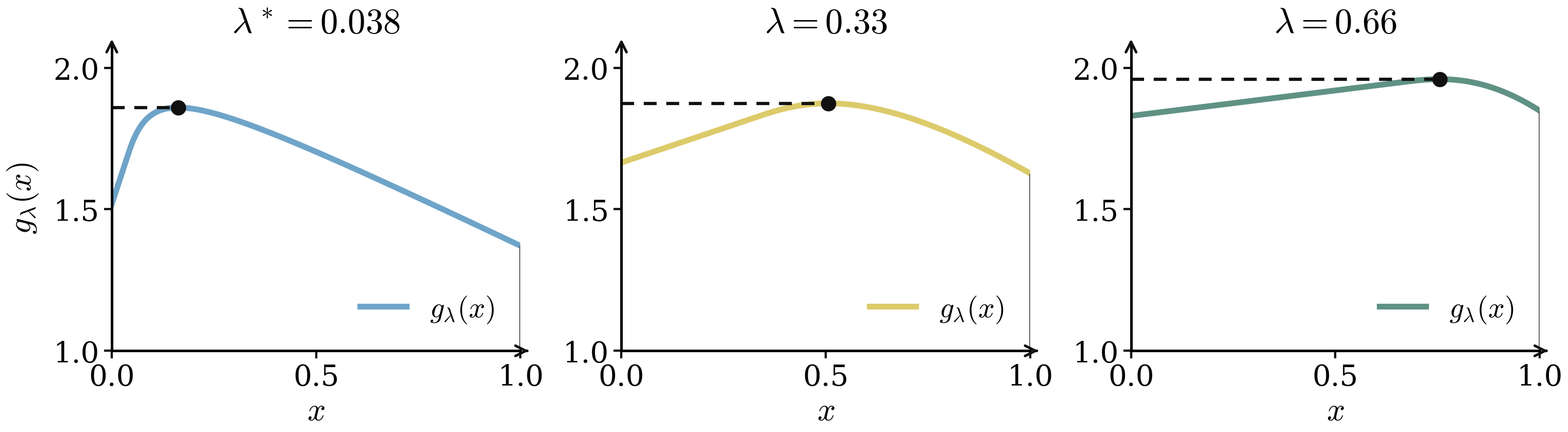}
	\caption{Visualization of the normalized pairwise expected delay $g_\lambda(x)$ of $\LGREEDY$. The black dashed line indicates the first part of the constructed majorant $\hat g_\lambda(x)$.}
	\label{fig:lambda-greedy}
\end{figure}

\subsection{The $\LBOOSTING$ Algorithm for Strong Stochastic Clairvoyance}\label{sec:strong-sto}
In this section, we consider the \textit{strong uniform stochastic clairvoyance} model in which $p_j$ is revealed to the scheduler when job $j$ reaches an elapsed time of $S_j$. Let $K(t) = \{j\in [n] : p_j > e_j(t) \ge S_j\}$ and $U(t)  = \{j\in [n] : e_j(t) < S_j\}$ denote the sets of \textit{known} and \textit{unknown} jobs at time $t$.
We consider the following algorithm, again parametrized by some $\lambda \in (0,1]$. The algorithm runs Round-Robin on the unknown jobs until the elapsed time of the unknown jobs exceeds $\lambda$ times the processing time of the shortest remaining known job; at that point, the shortest known job is run to completion. The pseudocode is given below in \Cref{alg:strong}, where we use the convention that $\min (\emptyset) = \infty$.

\begin{algorithm}[H]
\DontPrintSemicolon
\caption{$\LBOOSTING$ for $\lambda \in (0,1]$}
\label{alg:strong}
    \textbf{If} $\lambda \cdot  \min_{j \in K(t)} p_j \ge \min_{j \in U(t)} e_j(t)$, \textbf{then} run Round-Robin (RR) over the jobs in $U(t)$ \;
    \textbf{Else} run an uncompleted job $j\in K(t)$ with minimum $p_j$ to completion (job $j$ is \emph{boosted}) 
\end{algorithm}

Using the same proof framework as in the previous section and building a nonincreasing
concave majorant of the normalized expected pairwise delay of $\LBOOSTING$,
we show in \Cref{app:strong-signals} the following result. This result relies on the Global Delay Theorem. Numerical computations indicate that the strongest bound achievable via the Local Delay Theorem is approximately $1.61$ (cf.~\Cref{app:strong-signals}).

\begin{theorem}\label{thm:alg-strong}
    $\LBOOSTING$ (\Cref{alg:strong}) with $\lambda = 0.4699$ is $1.59999$-competitive for minimizing the total completion time on a single machine with strong uniform stochastic clairvoyance.
\end{theorem}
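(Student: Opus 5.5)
We follow the same three-step framework used for \Cref{thm:lambda-greedy}. First, we compute the normalized expected pairwise delay of $\LBOOSTING$ for two jobs $i,j$ with $p_i\le p_j$. We write $x=p_i/p_j$, $S_i=p_iX_i$ and $S_j=p_jX_j$ with $X_i,X_j\sim\Ucal([0,1])$ independent.

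The key structural observation is that the pairwise interaction between $i$ and $j$ does not depend on the other jobs. Other jobs either share the RR phase, which pauses $i$ and $j$ equally, or they are boosted. A boosted job $k$ runs at a moment determined by comparing $\lambda p_k$ with the minimum elapsed time of the unknown jobs. The relative elapsed processing of $i$ and $j$ is therefore unaffected. Note that boosting need not happen exactly at the signal: a known job $k$ is run only once the unknown jobs' elapsed time $e$ exceeds $\lambda p_k$. So job $i$ becomes known at elapsed time $S_i$ and is boosted at elapsed level $\max\{S_i,\lambda p_i\}$; job $j$ is handled similarly.

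We then enumerate the orderings of the four thresholds $S_i$, $\max\{S_i,\lambda p_i\}$, $S_j$, $\max\{S_j,\lambda p_j\}$. A subtlety arises if both jobs are known: the shorter known job, $i$, is boosted first. We then write $D_{ij}=d(i,j)+d(j,i)$ in each region. For example, if $i$ is boosted while $j$ is still unknown, then $D_{ij}=p_i+\max\{S_i,\lambda p_i\}$. If $j$ is boosted first, then $D_{ij}=p_j+\max\{S_j,\lambda p_j\}$. If both are known before either is boosted, then $i$ is boosted first and receives $p_i$, while $j$ receives its elapsed level at that time.

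Integrating over the unit square gives $g_\lambda(x)$ as a piecewise rational function of $x$. The breakpoints lie at $x=\lambda$ and possibly at other values depending on $\lambda$. As a sanity check, letting $\lambda\to0$ should recover the strong-signal analogue of \Cref{lem:greedy-delay}.

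Second, we build a nonincreasing concave majorant $\hat g_\lambda$. As in \eqref{eq:ghat}, we take the maximizer $x^*_\lambda$ of $g_\lambda$ on $[0,1]$ and flatten $g_\lambda$ to the constant $g_\lambda(x^*_\lambda)$ on $[0,x^*_\lambda]$. Then we verify three things: $g_\lambda$ is nonincreasing and concave on $[x^*_\lambda,1]$, the constant extension is concave, and $\hat g_\lambda(1)\ge1$. Concavity on the tail reduces to checking the sign of the second derivative of each rational piece for the specific value $\lambda=0.4699$; that derivative is a low-degree polynomial divided by a power of $x$. If a piece is not concave, we replace it by its least concave majorant, for example a tangent line segment, which keeps the majorant property.

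Third, we apply \Cref{lem:strong-delay} to $\hat g_\lambda$. This gives the competitive ratio $\int_0^1\hat g_\lambda(t^2)\,\mathrm dt$, which we evaluate in closed form per piece, or by rigorous interval numerics, to confirm it is below $1.59999$ at $\lambda=0.4699$.

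The main obstacle will be the first step. The case analysis is delicate because the boosting rule compares against the minimum elapsed time over all unknown jobs rather than only $i$ and $j$. We will have to argue that under RR all unknown jobs share the same elapsed time, so the minimum equals the common RR level. This makes the pairwise analysis valid despite the global rule. Getting all regions of the $(X_i,X_j)$ square right, including the corner cases $X_i<\lambda$ and $X_j<\lambda$, is where errors are most likely. We would cross-check the result by simulation.
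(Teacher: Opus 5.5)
Your plan matches the paper's proof: the same case analysis over the $(S_i,S_j)$ square (boosting at round-robin level $\max\{S_i,\lambda p_i\}$, with the shorter known job going first) gives a piecewise $g_\lambda$ with a single breakpoint at $x=\lambda$. The paper then flattens it to $g_\lambda(x_\lambda)$ on $[0,x_\lambda]$ and applies \Cref{lem:strong-delay}. The one delicate check, which the paper carries out explicitly and your plan must also verify, is that the global maximizer is the interior point $x_\lambda$ and not $x=0$, i.e.\ $g_\lambda(0)=\frac{3+\lambda^2}{2}\le g_\lambda(x_\lambda)$; at $\lambda=0.4699$ this holds only by a margin of about $10^{-5}$.
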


\section{Lower Bounds}\label{sec:lower-bounds}
Our lower bounds rely on Yao's minimax principle applied to a randomized problem instance where processing times are sampled following a Gamma distribution, i.e., $P_j\sim \Gamma(2,\lambda)$, for $\lambda>0$. A key technical observation is that, for a uniform signal $S_j\sim \Ucal([0,P_j])$, the variables $S_j$ and $P_j-S_j$ are independent, and both follow the (memoryless) exponential distribution $\Ecal(\lambda)$. This observation allows us to explicitly determine optimal online algorithms for that randomized instance. We summarize our proof strategy in the proposition below, and defer some formal definitions and intermediate lemmas to \Cref{sec:lb-support}.
\begin{theorem}
For the problem of minimizing the total completion time with one uniform random signal, assuming the processing times of the problem instance are sampled from independent Gamma random variables $P_j \sim \Gamma(2,\lambda)$ for some $\lambda>0$, we have
\begin{align}
    \mathbb{E}[\OPT] &= \frac{2n}{\lambda}+\frac{5n(n-1)}{8\lambda},\label{eq:opt}\\
    \mathbb{E}[\ALG] &\geq \frac{n(n+1)}{\lambda}\quad \text{for any online algorithm } \ALG \text{ using weak signals}\label{eq:alg}\\
    \mathbb{E}[\ALG'] &\geq \frac{n(n+3)}{4\lambda}+\frac{n(n+1)}{2\lambda}\quad \text{for any online algorithm } \ALG' \text{ using strong signals}.\label{eq:alg-strong}
\end{align}
Thus, by applying Yao's minimax principle, the competitive ratio of any (randomized) scheduling algorithm is at least $8/5 = 1.6$ for weak uniform signals and $6/5 = 1.2$ for strong uniform signals. 
\end{theorem}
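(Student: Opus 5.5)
We treat the three displays separately, all resting on one observation. If $P_j\sim\Gamma(2,\lambda)$ and $S_j\mid P_j\sim\mathcal U([0,P_j])$, then $(S_j,P_j-S_j)$ are i.i.d.\ $\Ecal(\lambda)$. To see this, compute the joint density $\lambda^2 p e^{-\lambda p}\cdot \frac1p = \lambda e^{-\lambda s}\cdot\lambda e^{-\lambda(p-s)}$ under the change of variables $(s,p)\mapsto(s,p-s)$. So every job is a sequence of two independent memoryless exponential stages, with the signal marking the boundary between them.

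\textbf{Optimum.} For \eqref{eq:opt}, $\OPT$ knows the realized $P_j$ and runs SPT, so $\OPT = \sum_j P_j + \sum_{i<j}\min\{P_i,P_j\}$. Then $\ex[P_j]=2/\lambda$, and
\[
\ex[\min\{P_i,P_j\}]=\int_0^\infty \bigl((1+\lambda t)e^{-\lambda t}\bigr)^2\,\mathrm dt=\frac{5}{4\lambda}.
\]
Multiplying by $\binom n2$ gives the claim.

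\textbf{Weak signals.} For \eqref{eq:alg}, the information available to a weak-signal algorithm at time $t$ is, for each job, whether its first stage has ended, together with elapsed times. By memorylessness, conditional on this history the residual work of the current stage of every uncompleted job is $\Ecal(\lambda)$, independent across jobs. Hence the problem is a Markov decision process whose state is $(a,b)$: $a$ jobs with two stages left and $b$ jobs with one stage left. Under any allocation of rates $y_j(t)$ with $\sum_j y_j\le1$, stage completions occur with total hazard $\lambda\sum_j y_j$. I would write $\ex[\sum_j C_j]=\ex\int_0^\infty N(t)\,\mathrm dt$, with $N(t)$ the number of uncompleted jobs, and prove by induction on $2a+b$ that the optimal cost-to-go is attained by always putting full rate on a one-stage job if one exists. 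The resulting policy processes jobs one after another, each taking $\Gamma(2,\lambda)$ time, so its cost is $\sum_{k=1}^n k\cdot 2/\lambda = n(n+1)/\lambda$. The inductive step is an exchange argument on the Bellman equation
\[
V(a,b)=\min_{\text{rates}}\Bigl\{ \tfrac{a+b}{\lambda} + \text{weighted average of } V(a-1,b+1),\,V(a,b-1)\Bigr\}.
\]
A uniformization argument shows that fractional rates are equivalent to randomizing over pure choices, so the minimum is attained at a vertex, and one checks $V(a,b-1)\le V(a-1,b+1)$. Making this rigorous for continuous-time preemptive policies (uniformization, plus ruling out idling) is the step I expect to be the main technical obstacle.

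\textbf{Strong signals.} For \eqref{eq:alg-strong}, we split each job into its unknown stage $U_j=S_j\sim\Ecal(\lambda)$ and its known stage $K_j=P_j-S_j$. I will lower-bound by a relaxation that is easier to analyze. Decompose $\ex[\sum_j C_j]$ via pairwise delays, $\sum_j P_j + \sum_{i<j}(d(i,j)+d(j,i))$, and split each delay into the work from unknown stages and the work from known stages. For the unknown stages, all of which are i.i.d.\ exponentials with no distinguishing information before the signal, any policy's contribution to $\sum_j C_j$ is at least that of completing them sequentially, i.e.\ $\sum_k k/\lambda = n(n+1)/(2\lambda)$. I would prove this with the same MDP/hazard argument as in the weak case, ignoring known stages, which only add delay. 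For the known stages, even with full knowledge of $K_j$ at time $0$, the best possible is SPT, giving
\[
\sum_j \ex[K_j]+\sum_{i<j}\ex[\min\{K_i,K_j\}]=\frac n\lambda+\binom n2\frac1{2\lambda}=\frac{n(n+3)}{4\lambda}.
\]
The delicate point is justifying that these two contributions add. I would attempt this by showing that each $C_j$ is at least the time by which $j$'s unknown stage is done in a schedule containing only unknown stages, plus the known-stage delays of jobs $i$ with $K_i$ processed before $C_j$. Summing over $j$ and taking expectations separately should then give the bound.

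\textbf{Conclusion.} Finally, by Yao's principle the competitive ratio of any randomized algorithm is at least $\lim_n \ex[\ALG]/\ex[\OPT]$. This equals $1/(5/8)=8/5$ for weak signals and $(1/4+1/2)/(5/8)=6/5$ for strong signals.
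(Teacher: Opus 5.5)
Your proposal is correct. For $\OPT$ it is the same as the paper. For weak signals it uses the same memoryless reduction, but the paper does not need a Bellman induction. The paper argues directly: each interval between consecutive stage-ending events has conditional expected length at least $1/\lambda$, and after $i$ such events at most $\lfloor i/2\rfloor$ jobs are finished. Hence $\ex[\sum_j C_j]\ge\frac1\lambda\sum_{i=0}^{2n-1}(n-\lfloor i/2\rfloor)=\frac{n(n+1)}{\lambda}$. This one-line count removes the uniformization and continuous-time issues you flagged as the main obstacle.

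For strong signals your route is genuinely different. The paper relaxes to a model in which all $S_j$ are known at time $0$. It asserts that running jobs to completion in increasing order of $S_j$ is optimal there, and evaluates $\ex[\sum_i\sum_{j\le i}P_{(j)}]$. In that evaluation the sorted signals give $\frac{n(n+3)}{4\lambda}$ and the unsorted remainders give $\frac{n(n+1)}{2\lambda}$.

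You swap the roles: the unknown stages are the unsorted part and the known stages the SPT-sorted part. The totals agree because $S_j$ and $P_j-S_j$ are i.i.d.\ $\Ecal(\lambda)$. Your argument has three ingredients:
\begin{itemize}
\item the pointwise inequality $C_j\ge\tilde T_j+\sum_i d_K(i,j)$, where $\tilde T_j$ is the total first-stage work done by the time $j$ signals and $d_K(i,j)$ is the second-stage work of $i$ done before $C_j$;
\item the pointwise inequality $d_K(i,j)+d_K(j,i)\ge\min\{K_i,K_j\}$;
\item the identity $\ex[\sum_j\tilde T_j]=\frac{n(n+1)}{2\lambda}$, by the hazard argument.
\end{itemize}
Together these give the bound without characterizing any optimal policy.

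Your route is also more robust than the paper's step. For the paper's relaxation to dominate the strong model, $p_j$ must still be revealed at $S_j$. But then increasing-$S_j$ order is not optimal. Take $n=2$ with $S_1\le S_2$. If $K_1>S_2+1/\lambda$, it is cheaper in expectation to run job $2$ up to its signal and then use SPT. The saving is $K_1-S_2-\ex[\min\{K_1,K_2\}\mid K_1]>0$. So your decomposition is the cleaner justification of the strong-signal bound.
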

\begin{proof}
To compute the expected value of the optimal offline algorithm $\OPT$, we use 
$\OPT = \sum_{i=1}^n P_i + \sum_{1\leq i<j\leq n}\min\{P_i,P_j\}$.
Equation \eqref{eq:opt} then directly follows from the fact that for a $\Gamma(2,\lambda)$ random variable $\mathbb{E}(P_i) = \frac{2}{\lambda}$ and for a pair $(P_i,P_j)$ of independent $\Gamma(2,\lambda)$ random variables, $\mathbb{E}[\min \{P_i,P_j\}]= \frac{5}{4\lambda}$ (see \Cref{sec:basic-gamma}). 

We then study online algorithms $\ALG$ that use weak signals and aim to prove Equation \eqref{eq:alg}. For this, we observe that any algorithm prioritizing jobs that have emitted a signal over those that haven't minimizes the expected total completion time (see \Cref{lem:opt-online}). In particular, $\GREEDY$ is optimal, as well as the algorithm that completes jobs in increasing order of $j\in[n]$. Because of the optimality of the latter, we obtain:
\(
    \mathbb{E}(\ALG) \geq \mathbb{E}\left[\sum_{i=1}^n \sum_{j=1}^i P_j\right] = \frac{n(n+1)}{\lambda}.
\)

We then study online algorithms $\ALG'$ that use strong signals and aim to prove Equation \eqref{eq:alg-strong}. We consider an even stronger setting in which all values $S_1,\dots, S_n$ are initially known to the algorithm. In this \textit{revealed signals} setting, the algorithm that minimizes the expected total completion time schedules jobs $j\in [n]$ in increasing order of $S_j$ until termination.  
Specifically, letting $(j)\in[n]$ denote the index of the $j$-th signal, so that $S_{(1)}\leq \dots\leq S_{(n)}$, we have
\begin{align*}
    \ALG' &\geq \sum_{i=1}^n \sum_{j=1}^iP_{(j)} = \sum_{i=1}^n \sum_{j=1}^i S_{(j)}  + \sum_{i=1}^n \sum_{j=1}^i \bigl(P_{(j)}-S_{(j)}\bigr).
\end{align*}
Taking expectations and using $\Ebb(S_{(j)}) = \frac{1}{\lambda}(H_n-H_{n-j})$ and $\Ebb(P_{(j)}-S_{(j)}) = \frac{1}{\lambda}$, where $H_j$ denotes the $j$-th harmonic number (see \Cref{sec:basic-gamma} for details on order statistics), we get
\begin{align*}
    \Ebb[\ALG'] \geq \frac{n(n+3)}{4\lambda}+\frac{n(n+1)}{2\lambda} \sim \frac{3}{4}\frac{n^2}{\lambda}.
\end{align*}
We finally argue that Yao's minimax principle applies in this context. Any online algorithm $\ALG$ using weak signals satisfies
\[
\mathbb{E}_{(P_j)}[\mathbb{E}_{\ALG} [\ALG]] \geq \left(\frac{8}{5}-\frac{48}{25n+55}\right)\mathbb{E}_{(P_j)}[\OPT],
\]
where we distinguish the randomness of the instance $\mathbb{E}_{(P_j)}(\cdot)$ from that of the random seed $\mathbb{E}_{\ALG}(\cdot)$. Thus there must exist some deterministic choice of instance $(p_j)_{j\in [n]}$ satisfying 
\[
\mathbb{E}_{\ALG} [\ALG] \geq \left(\frac{8}{5}-\frac{48}{25n+55}\right)\OPT,
\]
    which completes the lower bound of $\nicefrac{8}{5}$ for weak signals, by letting $n\rightarrow \infty$. The same application of Yao's principle provides the $\nicefrac{6}{5}$ lower bound for strong signals. 
\end{proof}

\section{Scheduling on Parallel Identical Machines}\label{sec:parallel-machines}

We consider preemptive total completion time minimization on $m$ parallel identical machines with weak uniform stochastic signals. 
The goal of this section is to analyze the natural generalization of \GREEDY~from~\Cref{sec:greedy-alg} as formalized by~\Cref{alg:parallel:greedy}. We remark that Round-Robin on $k < m$ jobs can use at most $k$ machines.

\begin{algorithm}
\DontPrintSemicolon
\caption{$\PGREEDY$}
\label{alg:parallel:greedy}
    Whenever a job $j$ emits its signal, run only $j$ to completion on one of the machines. \;
    Run Round-Robin over all other remaining jobs on all machines not busy with step 1. 
\end{algorithm}

\begin{restatable}{theorem}{thmGreedyMMachines}
    \label{thm:greedy:m:machines}
    $\PGREEDY$ (\Cref{alg:parallel:greedy}) is $\nicefrac{16}{9}$-competitive for total completion time minimization on parallel identical machines under uniform stochastic clairvoyance.
\end{restatable}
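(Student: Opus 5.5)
The plan is to reduce the parallel-machine case to the single-machine pairwise-delay analysis of \Cref{lem:pairwise-delay-greedy}. Then we apply the Integral Equality (\Cref{lem:f-bound}) exactly as in the proof of \Cref{lem:strong-delay}, and compare against a standard lower bound on the $m$-machine optimum.

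\emph{Step 1: a per-job upper bound on $C_i$.} Fix a job $i$ and split its lifetime at the moment it emits its signal.
\begin{itemize}
\item After the signal, $\PGREEDY$ runs $i$ alone on a machine, so this phase lasts exactly $p_i-S_i$.
\item Before the signal, $i$ is in the Round-Robin pool. At every such time, either $i$ is processed at rate $1$, or all $m$ machines are busy. The first kind of time has total length at most $S_i$, because $i$ gains work at rate $1$ there.
\item Let $d'(j,i)$ be the work that job $j$ receives before $i$'s signal, and set $d'(i,i)=S_i$. Counting total work during the busy times shows that the time until the signal is at most $(1-\frac1m)S_i+\frac1m\sum_j d'(j,i)$.
\end{itemize}
Adding the two phases gives
\[
C_i\le p_i+\frac1m\sum_{j\ne i}d'(j,i).
\]

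\emph{Step 2: the pairwise delays $d'$.} While both $i$ and $j$ are in the RR pool they receive equal rates, so their elapsed times coincide. Hence the job with the smaller signal value emits first.
\begin{itemize}
\item If $S_j<S_i$, job $j$ finishes while $i$ is still before its signal, so $d'(j,i)\le p_j$. Symmetrically, $d'(i,j)\le S_j$.
\item If $S_i\le S_j$, the roles swap and we get the bounds $p_i$ and $S_i$.
\end{itemize}
So $d'(i,j)+d'(j,i)$ is dominated pointwise by the single-machine quantity $D_{ij}$ from the proof of \Cref{lem:pairwise-delay-greedy}. By \Cref{lem:greedy-delay}, its expectation is at most $p_i(2-\frac23 p_i/p_j)$ for $p_i\le p_j$. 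Writing $g(x)=2-\frac23x$ and using the step-function embedding and $n^2A\le\rho\, n^2 O$ from the proof of \Cref{lem:strong-delay}, we obtain
\[
\ex[\ALG]\le P+\frac1m\Big(\rho\big(\textstyle\sum_{i<j}p_i+\frac P2\big)-\frac{g(1)}2P\Big),
\qquad \rho=\tfrac{16}{9},\; g(1)=\tfrac43 .
\]

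\emph{Step 3: lower bounds on $\OPT$ and the comparison.} Two standard bounds are available:
\[
\OPT\ge P
\qquad\text{and}\qquad
\OPT\ge \frac1m\Big(P+\sum_{i<j}p_i\Big)+\frac{m-1}{2m}P .
\]
The second is the Eastman--Even--Isaacs type bound: $\OPT_m\ge\frac1m\OPT_1+\frac{m-1}{2m}P$. I would take a convex combination $\OPT\ge \theta P+(1-\theta)\,[\text{second bound}]$.

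This is the step I expect to be the main obstacle. The second bound alone does not suffice for large $m$: it needs $1+\frac{2}{9m}\le\frac89\cdot\frac{m+1}{m}$, which holds only for $m\le6$. The leftover $\Theta(P)$ term has to be absorbed using $\OPT\ge P$, which gives slack $(\rho-1)P$.

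Concretely, I would choose $\theta=\theta(m)$ so that both the coefficient of $\frac1m\sum_{i<j}p_i$ and the coefficient of $P$ on the right-hand side dominate those of the upper bound on $\ex[\ALG]$. If a single $\theta$ does not balance both coefficients for all $m$, I would return to Step 1. There the rate-$1$ time was bounded crudely by $S_i$, and I would instead keep an exact account of the time $i$ spends alone in the pool. Alternatively, I would apply the Integral Equality with the diagonal terms treated more carefully, to sharpen the $P$-coefficient.
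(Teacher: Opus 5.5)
\textbf{There is a genuine gap at Step 3.} Your Steps 1 and 2 are correct: the per-job bound holds, and the pointwise domination by $D_{ij}$ holds. But Step 3 cannot be completed, and the obstruction is not the choice of $\theta$. The bound you reach after Step 2,
\[
\ex[\ALG]\le \frac{16}{9}\cdot\frac1m\sum_{i<j}p_i+\Bigl(1+\frac{2}{9m}\Bigr)P,
\]
already exceeds $\frac{16}{9}\OPT_m$ on $n=hm$ identical unit jobs. There $\OPT_m=\frac1m\bigl(P+\sum_{i<j}p_i\bigr)+\frac{m-1}{2m}P$ holds with equality, and the difference is exactly $\frac{m-6}{9m}P>0$ for $m\ge 7$. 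Hence no lower bound on $\OPT_m$ (convex combination, maximum of bounds, or anything else) can finish the argument from this point. Mixing in $\OPT\ge P$ with $\theta>0$ only lowers the coefficient of $\frac1m\sum_{i<j}p_i$ below the required $\frac{16}{9}$. Sharpening the single-machine inequality by an additive $cP$ does not rescue it for large $m$ either. That gains only $\frac cm P$ in your bound, and $c\le\frac29$ (take $n=1$), whereas you are short by $\frac{m-6}{9m}P$.

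\textbf{The fix is in Step 1.} There you charge the idle-time term $(1-\frac1m)S_i$, whose expectation is $\frac{m-1}{2m}p_i$, to every job. But before $i$ emits, a machine can idle only when fewer than $m$ unsignaled jobs remain. So only the last $m-1$ jobs to emit are affected, and if exactly $k$ jobs emit after $i$ (indicator $Y_i^k$), at most $m-k-1$ machines idle. The paper therefore replaces $(1-\frac1m)S_i$ by $\sum_{k=0}^{m-2}Y_i^k\frac{m-k-1}{m}S_i$. It then charges this correction not to $P$ but to the exact slack
\[
\Delta:=\OPT_m-\tfrac1m\OPT_1=\sum_{q=0}^{h-1}\sum_{r=1}^m\tfrac{r-1}{m}p_{qm+r}.
\]
The correction sums to $\frac1m\ex\bigl[\sum_{r=1}^{m-1}rS_{(n-m+1+r)}\bigr]$. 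Splitting the jobs into consecutive blocks of $m$, this is at most $\frac1m\sum_q\sum_{\ell<r\le m}\ex[\max\{S_{qm+r},S_{qm+\ell}\}]$. The uniform-signal estimate $\ex[\max\{S_{qm+r},S_{qm+\ell}\}]\le\frac23p_{qm+r}$ for $\ell<r$ then bounds it by $\frac23\Delta$.

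\textbf{Closing the argument.} The slack form of the Global Delay Theorem gives that $\frac1m$ times the single-machine \GREEDY{} cost is at most $\frac{16}{9}\OPT_m-\frac{16}{9}\Delta-\frac{5}{9m}P$. With the refined Step 1, the leftover is $\frac{m-1}{2m}P-\frac{10}{9}\Delta\le 0$, by your own bound $\Delta\ge\frac{m-1}{2m}P$. With your Step 1 the leftover is instead $\frac{m-1}{m}P-\frac{16}{9}\Delta-\frac{5}{9m}P$, which that inequality cannot absorb. Your fallback of tracking the time $i$ spends alone in the pool points in the right direction. The two missing ingredients are the restriction to the last $m-1$ emitters with weights $\frac{m-k-1}{m}$, and charging the correction against $\Delta$ rather than $P$.
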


Fix processing times $p_1 \le \ldots \le p_n$, and let $\opt_m$ denote the optimal objective value on $m$ machines. We assume without loss of generality that $n=h \cdot m$ for some integer $h$; otherwise, we add dummy jobs of size zero, which change neither the algorithm's objective value nor the optimum. Let $P:=\sum_{i=1}^n p_i$.

\paragraph{Relating the one-machine and $m$-machine optima.}
The basic lower bound $\opt_m \ge \opt_1/m$ is not by itself sufficient for the analysis, since a job's expected completion time under $\PGREEDY$ can exceed its one-machine completion time divided by $m$. The following exact expression quantifies the additional slack in $\opt_m$.

\begin{restatable}{lemma}{lemMOpt}
    \label{lem:m:opt}
    It holds that
    \[
        \opt_m = \frac{\opt_1}{m} + \sum_{q=0}^{h-1} \sum_{r=1}^m \frac{r-1}{m} p_{qm+r}.
    \]
\end{restatable}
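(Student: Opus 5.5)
The plan is to compute both $\opt_1$ and $\opt_m$ explicitly and compare them. For $\opt_1$ this is the delay decomposition with SPT order: job $k$ completes at $\sum_{i\le k} p_i$, so $\opt_1=\sum_{k=1}^n (n-k+1)p_k$. For $\opt_m$ I will use the classical fact that, with all jobs released at time $0$, preemption does not help on identical machines for total completion time. The optimal schedule is therefore SPT list scheduling: jobs are taken in order $p_1\le\dots\le p_n$ and each is assigned to the machine that becomes free first. I will justify this by citing McNaughton's theorem (no preemption needed for $\sum C_j$ on $P\,|\,pmtn\,|\,\sum C_j$). Alternatively, I can argue directly with the standard positional-weight lower bound: any schedule's $k$-th completion time is at least the $k$-th completion time in the SPT list schedule.

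In SPT list scheduling with $n=hm$, the jobs $qm+1,\dots,qm+m$ form round $q$. Job $qm+r$ is placed on machine $r$ behind jobs $r, m+r,\dots,(q-1)m+r$, since with sorted sizes machine $r$ is always the $r$-th to become free. Hence job $qm+r$ contributes its size to the completion times of all later jobs on its machine, namely $h-q$ of them including itself. This gives $\opt_m=\sum_{q=0}^{h-1}\sum_{r=1}^m (h-q)\,p_{qm+r}$.

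For the comparison, set $k=qm+r$. Then $n-k+1=(h-q)m-r+1$, so
\[
\frac{\opt_1}{m}=\sum_{q=0}^{h-1}\sum_{r=1}^m\Bigl((h-q)-\frac{r-1}{m}\Bigr)p_{qm+r}.
\]
Subtracting this from the expression for $\opt_m$ leaves exactly $\sum_{q}\sum_r \frac{r-1}{m}p_{qm+r}$, which is the claimed identity. The dummy zero jobs change neither side, so the assumption $n=hm$ is harmless.

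The main obstacle is rigorously establishing that the preemptive $m$-machine optimum equals the SPT list schedule value. Machine assignment itself is not an issue: the positional sum $\sum_q (h-q)\sum_r p_{qm+r}$ does not depend on which machine gets which job within a round. If citing the classical result is not acceptable, I will instead prove the lower bound directly. The step is to show that in any preemptive schedule the $k$-th smallest completion time satisfies $C_{[k]}\ge$ the SPT list value for position $k$. This follows because by time $C_{[k]}$ at least $k$ jobs are finished, and at most $m$ machines work in parallel while each job runs on at most one machine at a time. Summing these bounds over $k$ should match the list value via a rearrangement argument.
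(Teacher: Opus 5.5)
Your main line is the paper's proof. The paper writes $\opt_m=\sum_i c_ip_i$ with $c_i=\frac nm-\lfloor\frac{i-1}{m}\rfloor$, which is your $h-q$ for $i=qm+r$, and $\frac{\opt_1}{m}=\sum_i\frac{n-i+1}{m}p_i$, and then subtracts block by block exactly as you do. The paper justifies $\opt_m=\sum_i c_ip_i$ only by asserting that SPT is optimal. So citing McNaughton's theorem together with SPT optimality for the nonpreemptive problem matches the paper's level of rigor. (McNaughton's theorem says preemption does not help for $\sum C_j$ on identical machines, and the paper's fractional-rate model is equivalent to preemption with migration.)

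The fallback you offer in case the citation is not acceptable would fail. The pointwise claim that every preemptive schedule has $C_{[k]}$ at least the $k$-th completion time of the SPT list schedule is false. For $m=2$ and $p=(1,1,2)$, SPT list scheduling gives completion times $1,1,3$. Running job $3$ alone on one machine during $[0,2]$ and jobs $1,2$ back to back on the other gives $1,2,2$: the same optimal total $5$, but a smaller third completion time.

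The problem is not just the pointwise claim. The facts you invoke are that at least $k$ jobs are done by $C_{[k]}$, at most $m$ machines work, and no job runs on two machines at once. These give at best $C_{[k]}\ge\max\{\frac1m\sum_{i\le k}p_i,\,p_k\}$. For each $k$ in isolation this bound is attainable, by McNaughton's wrap-around rule. Yet these bounds can sum to strictly less than $\opt_m$: for four unit jobs on two machines they give $1+1+\frac32+2=\frac{11}{2}<6=\opt_m$. So no rearrangement can close the gap. The positions interact, because processing spent finishing the early jobs delays the later ones. A self-contained proof therefore needs a global argument, such as McNaughton's exchange argument showing preemption can be removed, not position-by-position lower bounds.
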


\begin{proof}
For a job $i$, define
$c_i:=\frac{n}{m}-\left\lfloor\frac{i-1}{m}\right\rfloor$
and $c_i':=\frac{n-i+1}{m}$.
The shortest-processing-time order is optimal for total completion time, and therefore
$\opt_m = \sum_{i=1}^n c_i p_i$ and $\frac{\opt_1}{m} = \sum_{i=1}^n c_i' p_i$.

Fix an index $i$ with $i \bmod m=1$. Since $c_{i+r}=c_i$ and $c_{i+r}'=c_i-r/m$ for $r\in\{0,\ldots,m-1\}$,
\begin{align*}
\sum_{r=0}^{m-1} (c_{i+r}-c_{i+r}')p_{i+r}
&= \sum_{r=0}^{m-1} \frac{r}{m}p_{i+r}.
\end{align*}
Summing over the $h$ consecutive blocks proves
\[
\opt_m-\frac{\opt_1}{m}
=\sum_{q=0}^{h-1}\sum_{r=1}^m \frac{r-1}{m}p_{qm+r}.
\]
\end{proof}

\paragraph{Bounding completion times under $\PGREEDY$.}
For a fixed realization of the signals, let $Y_i^k$ indicate the event that job $i$ is the $(n-k)$th job to emit its signal. We first account explicitly for the possibility that some machines idle before $i$ emits.

\begin{lemma}
    \label{lem:para:machines:greedy:completion}
    Fix a job $i$, and let
    $A:=\{j\in J\setminus\{i\}:S_j<S_i\}$
    and 
    $B:=J\setminus(\{i\}\cup A)$.
    Then
    \[
    C_i \le \frac{\sum_{j\in A}p_j+\sum_{j\in B}S_i}{m}
    +\frac{S_i}{m}+(p_i-S_i)
    +\sum_{k=0}^{m-2}Y_i^k\frac{m-k-1}{m}S_i.
    \]
\end{lemma}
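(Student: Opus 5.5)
We split the timeline of $\PGREEDY$ at the moment $\sigma_i$ when job $i$ emits its signal, i.e., when $e_i(\sigma_i)=S_i$. After $\sigma_i$, job $i$ runs alone on one machine until it completes, so $C_i=\sigma_i+(p_i-S_i)$. It therefore remains to show
\[
\sigma_i\le \frac{\sum_{j\in A}p_j+|B|\,S_i+S_i}{m}+\sum_{k=0}^{m-2}Y_i^k\frac{m-k-1}{m}S_i .
\]

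We bound $\sigma_i$ by a work argument. We account for all processing done during $[0,\sigma_i]$, together with the idle capacity.

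First consider the jobs in $A$. Each $j\in A$ emits before $i$: since RR gives equal rates to all unsignalled jobs, $j$ reaches elapsed time $S_j$ before $i$ reaches $S_i$. After emitting, $j$ is run to completion, so it receives at most $p_j$ work in $[0,\sigma_i]$.

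Next consider the jobs in $B$. Each $j\in B$ has not emitted by $\sigma_i$; it is served by RR at the same rate as $i$. Hence its elapsed time at $\sigma_i$ equals that of $i$, namely $S_i$.

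Job $i$ itself has received exactly $S_i$ by time $\sigma_i$. The total work done in $[0,\sigma_i]$ is therefore at most $\sum_{j\in A}p_j+(|B|+1)S_i$.

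If all $m$ machines were busy throughout $[0,\sigma_i]$, then $m\sigma_i$ would be at most this total work, giving the first part of the bound. The correction term accounts for idle time before $\sigma_i$.

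A machine can idle only when RR has fewer unsignalled jobs than free machines. Suppose $Y_i^k=1$, so that $i$ is the $(n-k)$-th job to emit and exactly $k$ other jobs are still unsignalled when $i$ emits. Then during RR, $i$ shares with at most $k+1$ unsignalled jobs. Idling occurs when the number of unsignalled jobs $u$ plus the number of boosted jobs is less than $m$. During any such interval, $i$ runs at rate $1$, and at most $m-u\le m-k-1$ machines idle.

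The main obstacle is to bound the total idle capacity by $(m-k-1)S_i$ rigorously. We handle it by charging idleness to $i$'s own processing. Whenever some machine is idle, RR has fewer jobs than machines, so every unsignalled job, including $i$, runs at rate $1$. The number of idle machines at that moment is at most $m-u(t)$, and $u(t)\ge k+1$ for all $t\le\sigma_i$, because the $k$ jobs in $B$ plus $i$ remain unsignalled. Integrating over the set of idle times gives
\[
\text{idle capacity}\le (m-k-1)\cdot|\{t\le\sigma_i:\text{idle}\}|\le (m-k-1)S_i,
\]
since $i$ gains one unit of elapsed time per unit of idle time and its total elapsed time before $\sigma_i$ is $S_i$. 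For $k\ge m-1$ no idling occurs, which matches the range $k\le m-2$ of the sum.

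Finally, $m\sigma_i$ equals the work done plus the idle capacity. Dividing by $m$, using $\sum_k Y_i^k=1$, and adding $p_i-S_i$ yields the stated bound on $C_i$.
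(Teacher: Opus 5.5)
Your proof is correct and is essentially the paper's volume argument: work before $i$'s speed-up is at most $\sum_{j\in A}p_j+(|B|+1)S_i$, and idle capacity can only appear when $|B|=k\le m-2$, with at most $m-k-1$ idle machines while $i$ runs at rate $1$. The difference is only in how idle time is tracked. The paper splits at the first idle time, where $i$ has elapsed time $T_i$, and gets the slightly sharper correction $\frac{m-k-1}{m}(S_i-T_i)$ before relaxing it. You instead charge idle capacity directly to $i$'s rate-$1$ progress, which gives the same bound without needing that idleness persists once it starts. One small fix: your heuristic remark that $i$ shares RR with ``at most $k+1$'' unsignalled jobs should read ``at least'', which is what your rigorous step $u(t)\ge k+1$ actually uses.
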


\begin{proof}
The volume processed before $i$ starts its speed-up is at most
\[
V:=S_i+\sum_{j\in A}p_j+\sum_{j\in B}S_i.
\]
Indeed, all jobs in $B\cup\{i\}$ receive $S_i$ units of processing before $i$ starts its speed-up, while every job in $A$ may already have completed. If up to $m-1$ jobs in $A$ have emitted but not yet completed, the processed volume is only smaller.

Let $T_i$ be the elapsed processing time of job $i$ when the first machine starts to idle. We distinguish two cases.
\begin{enumerate}
    \item If $T_i\ge S_i$, then no machine idles before $i$ emits. Thus, $i$ starts its speed-up by time $V/m$ and subsequently needs $p_i-S_i$ time to complete. Hence
    \begin{align*}
    C_i
    &\le \frac{\sum_{j\in A}p_j+\sum_{j\in B}S_i}{m}
    +\frac{S_i}{m}+(p_i-S_i)\\
    &\le \frac{\sum_{j\in A}p_j+\sum_{j\in B}S_i}{m}
    +\frac{S_i}{m}+(p_i-S_i)
    +\sum_{k=0}^{m-2}Y_i^k\frac{m-k-1}{m}S_i.
    \end{align*}

    \item If $T_i<S_i$, then $i$ is among the last $m-1$ jobs to emit, so $\sum_{k=0}^{m-2}Y_i^k=1$. Let $k'\in\{0,\ldots,m-2\}$ satisfy $Y_i^{k'}=1$. There are exactly $k'$ jobs that emit after $i$, so $|B|=k'$, and at most $m-k'-1$ machines idle before $i$ starts its speed-up. The volume processed before the first machine idles is at most
    \[
        T_i+\sum_{j\in A}p_j+\sum_{j\in B}T_i.
    \]
    After that time, job $i$ needs another $p_i-T_i$ time units. Therefore
    \begin{align*}
    C_i
    &\le \frac{\sum_{j\in A}p_j+\sum_{j\in B}T_i}{m}
        +\frac{T_i}{m}+(p_i-T_i)\\
    &=\frac{\sum_{j\in A}p_j+\sum_{j\in B}S_i}{m}
        +\frac{S_i}{m}+(p_i-S_i)
        +\frac{m-|B|-1}{m}(S_i-T_i)\\
    &=\frac{\sum_{j\in A}p_j+\sum_{j\in B}S_i}{m}
        +\frac{S_i}{m}+(p_i-S_i)
        +\sum_{k=0}^{m-2}Y_i^k\frac{m-k-1}{m}(S_i-T_i)\\
    &\le \frac{\sum_{j\in A}p_j+\sum_{j\in B}S_i}{m}
        +\frac{S_i}{m}+(p_i-S_i)
        +\sum_{k=0}^{m-2}Y_i^k\frac{m-k-1}{m}S_i.
    \end{align*}
\end{enumerate}
\end{proof}

Let $d_1(j,i)$ denote the delay that job $j$ causes to job $i$ when $\GREEDY$ runs on a single machine. Taking expectations in the preceding realization-wise bound gives the following comparison.

\begin{restatable}{lemma}{lemMMachinesExpCompletion}
    \label{lem:m:machines:exp:completion}
    For every job $i$,
    \[
    \ex[C_i]\le \frac{m-1}{2m}p_i
    +\frac{1}{m}\sum_{j=1}^n\ex[d_1(j,i)]
    +\sum_{k=0}^{m-2}\frac{m-k-1}{m}\ex[Y_i^kS_i].
    \]
\end{restatable}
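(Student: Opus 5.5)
The plan is to take expectations in the realization-wise bound of \Cref{lem:para:machines:greedy:completion} and rewrite the main terms using single-machine $\GREEDY$ delays. By linearity, the last term of that lemma becomes $\sum_{k=0}^{m-2}\frac{m-k-1}{m}\ex[Y_i^kS_i]$ directly. The work is therefore in the remaining terms
\[
\frac{1}{m}\Bigl(\sum_{j\in A}p_j+\sum_{j\in B}S_i\Bigr)+\frac{S_i}{m}+(p_i-S_i),
\]
where $A$ and $B$ are random sets determined by the signals.

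First, I will identify the single-machine delays pointwise. As in the proof of \Cref{lem:pairwise-delay-greedy}, on one machine $\GREEDY$ gives $d_1(j,i)=p_j$ if $S_j<S_i$, that is, if $j\in A$. It gives $d_1(j,i)=S_i$ if $S_j>S_i$, since then $i$ emits first and is completed at the moment $j$ has received exactly $S_i$ units; that is, $j\in B$. Ties have probability zero. Hence
\[
\sum_{j\in A}p_j+\sum_{j\in B}S_i=\sum_{j\neq i}d_1(j,i)
\]
pointwise, and with $d_1(i,i)=p_i$ the remaining terms equal
\[
\frac1m\sum_{j=1}^n d_1(j,i)-\frac{p_i}{m}+\frac{S_i}{m}+p_i-S_i .
\]

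Second, I take expectations. Since $S_i\sim\Ucal([0,p_i])$, we have $\ex[S_i]=p_i/2$. The extra terms then evaluate to
\[
-\frac{p_i}{m}+\frac{p_i}{2m}+p_i-\frac{p_i}{2}=\frac{p_i}{2}-\frac{p_i}{2m}=\frac{m-1}{2m}p_i ,
\]
which is exactly the first term in the claim.

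The only real subtlety is the pointwise matching for $j\in B$. A job in $B$ has not completed when $i$ starts its speed-up on $m$ machines, but its single-machine delay to $i$ is exactly $S_i$, so the identification is an equality, not merely a bound. I also need the signals to be coupled identically across the two schedules; this holds because $S_j$ depends only on the job's own elapsed processing, and the claim follows.
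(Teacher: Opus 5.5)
Your proposal is correct and follows the paper's proof: both take expectations in the realization-wise bound of \Cref{lem:para:machines:greedy:completion}, identify the contribution of each $j\neq i$ (namely $p_j$ if $S_j<S_i$ and $S_i$ otherwise) with the single-machine $\GREEDY$ delay $d_1(j,i)$, and use $d_1(i,i)=p_i$ together with $\ex[S_i]=p_i/2$ to produce the $\frac{m-1}{2m}p_i$ term. The only difference is cosmetic: you observe that the identification holds pointwise under the natural coupling of signals, whereas the paper states it only in expectation.
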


\begin{proof}
Fix a job $i$ and define, for every $j\ne i$,
\[
X_j:=
\begin{cases}
    p_j, & S_j<S_i,\\
    S_i, & S_j\ge S_i.
\end{cases}
\]
By~\Cref{lem:para:machines:greedy:completion},
\[
C_i\le \frac{1}{m}\sum_{j\ne i}X_j+\frac{S_i}{m}+(p_i-S_i)
+\sum_{k=0}^{m-2}\frac{m-k-1}{m}Y_i^kS_i.
\]
Moreover,
\[
\frac{\ex[S_i]}{m}=\frac{\ex[d_1(i,i)]}{m}-\frac{p_i}{2m},
\qquad
\ex[p_i-S_i]=\frac{p_i}{2},
\qquad
\ex[X_j]=\ex[d_1(j,i)].
\]
Taking expectations and combining these identities proves the claim.
\end{proof}

\paragraph{Bounding the idle-machine correction.}
The remaining correction term is controlled by the same slack that separates $\opt_m$ from $\opt_1/m$.

\begin{lemma}\label{lem:parallel-final-sum}
We have
\[
\sum_{i=1}^n\sum_{k=0}^{m-2}\frac{m-k-1}{m}\ex[Y_i^kS_i]
\le \frac{2}{3}\sum_{q=0}^{h-1}\sum_{r=1}^m\frac{r-1}{m}p_{qm+r}.
\]
\end{lemma}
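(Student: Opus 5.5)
The plan is to rewrite the left-hand side through order statistics of the signals and then compare it with the block structure on the right-hand side. Since $Y_i^k=1$ exactly when $S_i$ is the $(k+1)$-th largest signal, summing over $i$ gives
\[
\sum_{i=1}^n\sum_{k=0}^{m-2}\frac{m-k-1}{m}\ex[Y_i^kS_i]=\sum_{k=0}^{m-2} w_k\,\ex\bigl[S_{[k+1]}\bigr],\qquad w_k:=\frac{m-k-1}{m},
\]
where $S_{[k+1]}$ denotes the $(k+1)$-th largest signal.

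\emph{Layer-cake rewriting.} For $t\ge 0$ let $N_t:=|\{j: S_j>t\}|$. Using $S_{[k+1]}=\int_0^\infty \one\{N_t\ge k+1\}\,\mathrm dt$,
\[
\sum_{k=0}^{m-2} w_k\,\ex[S_{[k+1]}]=\int_0^\infty \ex\bigl[W(N_t)\bigr]\,\mathrm dt,\qquad W(N):=\sum_{k=0}^{\min(N,m-1)-1} w_k .
\]
The increments $w_k$ are decreasing and nonnegative. So the piecewise-linear interpolation of $W$ is concave and nondecreasing on $[0,\infty)$, and it is constant beyond $m-1$.

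\emph{Jensen step.} Under uniform stochastic clairvoyance, $\ex[N_t]=\sum_j(1-t/p_j)^+=:\mu(t)$. Jensen's inequality for the concave $W$ then gives
\[
\mathrm{LHS}\le\int_0^\infty W(\mu(t))\,\mathrm dt .
\]
The right-hand side is also a nonnegative combination of the $p_j$ with weights coming from positions inside blocks of size $m$. A sanity check on $n=m$ equal jobs shows the target is tight: the exact order statistics $\ex[S_{[k+1]}]=p(m-k)/(m+1)$ give $\mathrm{LHS}=(m-1)p/3$, which equals the right-hand side. This means the Jensen step may lose a little there. If it loses too much, I would fall back to the exact binomial form $\ex[W(N_t)]$ in that extremal configuration instead of the Jensen bound.

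\emph{Reduction to extremal instances (main obstacle).} The hard step is showing $\int_0^\infty W(\mu(t))\,\mathrm dt\le \frac23\sum_{q}\sum_r\frac{r-1}{m}p_{qm+r}$ for all sorted $p$. I would first try a scaling/exchange argument, using that both sides are positively homogeneous of degree one in $p$.

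The left-hand side is $\int W\bigl(\sum_j(1-t/p_j)^+\bigr)\,\mathrm dt$. Since $W$ is concave, pooling jobs' contributions should be monotone in a useful direction. I would aim to show that, for fixed right-hand side, the left-hand side is maximized when every block consists of equal jobs. If that holds, I would then show that only the top block matters: jobs in lower blocks can be shrunk toward $0$. This only decreases $\mu$ at large $t$, where $W$ is already saturated once $\mu\ge m-1$, and it reduces the right-hand side proportionally.

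This reduces the claim to a single block of equal jobs, which is the tight computation above, plus a check for $h\ge2$ equal blocks. The $h\ge2$ case is comfortable: there the left-hand side is about $(m-1)p/2$, while the right-hand side is $\frac23 h\frac{m-1}{2}p$.

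I expect the exchange/monotonicity argument to be the delicate part. If it fails, the alternative is to bound $W(\mu)\le\min\{\mu,\ \mu-\mu^2/(2m)+\ldots\}$ piecewise and integrate against the block weights directly, using $\mu(t)\le m\,\#\{q: p_{qm+1}>t\}+\sum_{r}(1-t/p_{(h-1)m+r})^+$.
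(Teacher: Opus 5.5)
\textbf{Gap: the Jensen step makes the intermediate inequality false.} Your rewriting through $Y_i^k$ and the layer-cake identity is correct. The trouble is that the Jensen step is not just lossy; it breaks the argument. You showed yourself that the statement is tight at $n=m$ with $p_1=\dots=p_m=p$. Any relaxation that strictly increases the left-hand side on that instance therefore yields an intermediate inequality that is false.

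Replacing $\ex[W(N_t)]$ by $W(\mu(t))$ does exactly this. Take $W$ to be the piecewise-linear interpolation; it is the smallest concave extension, so no other concave extension does better. Then $W(N)=\frac1m\bigl(N(m-1)-\binom N2\bigr)$ for $N\le m-1$, and
\[
\int_0^p W\bigl(m(1-t/p)\bigr)\,\mathrm dt=\frac pm\int_0^m W(u)\,\mathrm du=\frac{(m-1)(4m+1)}{12m}\,p>\frac{(m-1)p}{3}.
\]
For example, with $m=2$ this is $\frac38 p$ versus $\frac13 p$.

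So the inequality $\int_0^\infty W(\mu(t))\,\mathrm dt\le\frac23\sum_q\sum_r\frac{r-1}{m}p_{qm+r}$, which your ``main obstacle'' step sets out to prove, is false. Your exchange argument (equalize jobs within blocks, shrink the lower blocks) would, if it succeeded, reduce everything to precisely this violating instance.

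Falling back to the exact binomial form only at the extremal configuration does not repair this. By continuity, the Jensen bound also exceeds the right-hand side on nearby instances, e.g.\ $n=m=2$ with $p_1$ close to $p_2$. Moreover, your reduction was formulated for the relaxed functional, not for $\ex[W(N_t)]$. Your piecewise alternative also bounds $W(\mu)$ from above, so it inherits the same defect. Separately, the monotonicity and exchange claims are asserted rather than proved.

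\textbf{The missing idea.} Do not average over the number of surviving signals at all; use the block structure realization by realization, as the paper does. First write
\[
\sum_{r=1}^{m-1} r\,S_{(n-m+1+r)}=\sum_{s=1}^{m-1}\bigl(\text{sum of the $s$ largest signals}\bigr).
\]
Because signals are nonnegative, the sum of the $s$ largest signals overall is at most the sum over blocks $q$ of the $s$ largest signals within block $q$. Applying the same identity inside each block gives the bound $\sum_q\sum_{r=1}^{m-1} r\,S^q_{(r+1)}$.

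Next, within a block of $m$ values, $\sum_{r=1}^{m-1} r\,S^q_{(r+1)}=\sum_{\ell<r}\max\{S^q_\ell,S^q_r\}$. For uniform signals with $p_{qm+\ell}\le p_{qm+r}$,
\[
\ex\bigl[\max\{S^q_\ell,S^q_r\}\bigr]=\frac{p_{qm+r}}{2}+\frac{p_{qm+\ell}^2}{6p_{qm+r}}\le\frac23\,p_{qm+r}.
\]
Summing over $\ell<r$ puts weight $\frac23(r-1)$ on $p_{qm+r}$, which is exactly the right-hand side. Every step holds with equality on your tight instance, which is why this route works where Jensen cannot.
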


\begin{proof}
For every realization, the indicators $Y_i^k$ select the last $m-1$ jobs to emit. Matching the weights $(m-1)/m,\ldots,1/m$ to the corresponding signal order statistics $S_{(1)}\le\ldots\le S_{(n)}$ gives
\[
\sum_{i=1}^n\sum_{k=0}^{m-2}\frac{m-k-1}{m}\ex[Y_i^kS_i]
\le \frac{1}{m}\ex\left[\sum_{r=1}^{m-1}rS_{(n-m+1+r)}\right].
\]
Partition $S=(S_1,\ldots,S_n)$ into $h$ vectors $S^0,\ldots,S^{h-1}$, where $S^q=(S_{qm+1},\ldots,S_{qm+m})$. Then
\begin{align*}
\sum_{r=1}^{m-1}rS_{(n-m+1+r)}
&\le \sum_{q=0}^{h-1}\sum_{r=1}^{m-1}rS^q_{(r+1)}\\
&=\sum_{q=0}^{h-1}\sum_{r=1}^m\sum_{\ell=1}^{r-1}\max\{S_r^q,S_\ell^q\}.
\end{align*}
For $\ell<r$, the ordering of the processing times gives $p_{qm+\ell}\le p_{qm+r}$. Since the signals are uniform,
\[
\ex[\max\{S_r^q,S_\ell^q\}]
=\frac{p_{qm+r}}{2}+\frac{p_{qm+\ell}^2}{6p_{qm+r}}
\le \frac{2}{3}p_{qm+r}.
\]
Consequently,
\begin{align*}
\frac{1}{m}\ex\left[\sum_{r=1}^{m-1}rS_{(n-m+1+r)}\right]
&\le \frac{1}{m}\sum_{q=0}^{h-1}\sum_{r=1}^m\sum_{\ell=1}^{r-1}\frac{2}{3}p_{qm+r}\\
&=\frac{2}{3}\sum_{q=0}^{h-1}\sum_{r=1}^m\frac{r-1}{m}p_{qm+r}.
\end{align*}
\end{proof}

\begin{proof}[Proof of~\Cref{thm:greedy:m:machines}]
Set
\[
\Delta:=\opt_m-\frac{\opt_1}{m}
=\sum_{q=0}^{h-1}\sum_{r=1}^m\frac{r-1}{m}p_{qm+r},
\]
where the equality follows from~\Cref{lem:m:opt}. Summing~\Cref{lem:m:machines:exp:completion} over all jobs gives
\begin{align*}
\sum_{i=1}^n\ex[C_i]
&\le \frac{m-1}{2m}P
+\frac{1}{m}\left(
\sum_{i=1}^n\ex[d_1(i,i)]
+\sum_{i=1}^n\sum_{j=1}^{i-1}\ex[d_1(i,j)+d_1(j,i)]
\right)\\
&\qquad+\sum_{i=1}^n\sum_{k=0}^{m-2}\frac{m-k-1}{m}\ex[Y_i^kS_i].
\end{align*}
For single-machine $\GREEDY$, we have $g(x)=2-\frac{2}{3}x$, so $g(1)=4/3$ and $\rho=16/9$. The slack form~\eqref{eq:global:delay:slack} of the Global Delay Theorem therefore yields
\begin{align*}
&\frac{1}{m}\left(
\sum_{i=1}^n\ex[d_1(i,i)]
+\sum_{i=1}^n\sum_{j=1}^{i-1}\ex[d_1(i,j)+d_1(j,i)]
\right)\\
&\qquad\le \frac{1}{m}\left(\frac{16}{9}\opt_1-\frac{5}{9}P\right)
=\frac{16}{9}\opt_m-\frac{16}{9}\Delta-\frac{5}{9m}P.
\end{align*}
By~\Cref{lem:parallel-final-sum}, the final correction term is at most $2\Delta/3$. It remains to compare the other additive term with $\Delta$. For each $q\in\{0,\ldots,h-1\}$, Chebyshev's sum inequality gives
\begin{align*}
\frac{1}{m}\sum_{r=1}^m(r-1)p_{qm+r}
&\ge \left(\frac{1}{m}\sum_{r=1}^m(r-1)\right)
\left(\frac{1}{m}\sum_{r=1}^m p_{qm+r}\right)\\
&=\frac{m-1}{2m}\sum_{r=1}^m p_{qm+r}.
\end{align*}
Summing over $q$ shows that $\Delta\ge (m-1)P/(2m)$. Combining all bounds,
\begin{align*}
\sum_{i=1}^n\ex[C_i]
&\le \frac{16}{9}\opt_m
+\frac{m-1}{2m}P-\frac{10}{9}\Delta-\frac{5}{9m}P\\
&\le \frac{16}{9}\opt_m,
\end{align*}
which proves the theorem.
\end{proof}

\section{Makespan Minimization on Parallel Identical Machines}
\label{sec:makespan}

We turn to makespan minimization in the \emph{strong} stochastic clairvoyance model, where the value of $p_j$ is revealed at the signal time $S_j$. 
The optimal offline solution can be computed in polynomial time using McNaughton's Wrap-Around-Rule~\cite{McN59}. Hence $\OPT = \max\{P/m, p_{\max}\}$, where $P = \sum_j p_j$ and $p_{\max} = \max_j p_j$.
Intuitively, a good solution needs to identify long jobs early to ensure that they are scheduled together with other jobs and do not stick out at the very end.
\Cref{alg:lfreeze} exactly follows this idea by first testing jobs with RR, freezing potentially short jobs, and deferring those to the very end of the schedule. 

\begin{algorithm}
\DontPrintSemicolon
\caption{$\LFREEZE$ for $\lambda > 0$}
\label{alg:lfreeze}
Run Round-Robin (RR). \;
    If $j$'s elapsed time reaches $f_j := \max\{S_j,\; p_j/(1+\lambda)\}$,
    \emph{freeze} $j$ and remove it from RR. \;
    At the first time when exactly $m$ jobs remain unfrozen, complete those on their respective machines and finish the frozen jobs by List Scheduling. \;
\end{algorithm}

Note that $f_j$ can be computed because after $j$'s elapsed time reaches $S_j$, the algorithm learns about $p_j$ and thus can evaluate the maximum. We show the following theorem.

\begin{restatable}{theorem}{thmMakespan}\label{thm:makespan}
    $\LFREEZE$ with $\lambda=1$ is $\nicefrac{13}{8}$-competitive for makespan minimization on parallel identical machines under strong uniform stochastic clairvoyance.
\end{restatable}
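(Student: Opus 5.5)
The plan is to compare the makespan $C_{\max}$ of $\LFREEZE$ with $\lambda=1$ to $\OPT=\max\{P/m,p_{\max}\}$. I split the schedule at the time $T$ when exactly $m$ jobs remain unfrozen, and bound the phase before $T$, the unfrozen tail, and the frozen tail separately. Freezing occurs at elapsed time $f_j=\max\{S_j,p_j/2\}$, so every frozen job has remaining work $p_j-f_j\le \min\{p_j-S_j,\,p_j/2\}$. With at least $m$ active jobs, RR keeps all machines busy before $T$. Hence $mT\le$ the total work processed before $T$, which is at most $P-\sum_{\text{frozen}}(p_j-f_j)-\sum_{\text{unfrozen}}(p_j-e_j(T))$. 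For each of the $m$ remaining unfrozen jobs $k$, its elapsed time at $T$ is less than $f_k$.

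The key steps, in order, are as follows.

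First, I bound the unfrozen tail. Each remaining job $k$ finishes at $T+(p_k-e_k(T))\le T+p_k-e_k(T)$. All unfrozen jobs received equal RR service, so $e_k(T)=e$ is common to them.

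Second, I bound the frozen tail. List Scheduling the frozen remainders $R=\sum_{\text{frozen}}(p_j-f_j)$ on the machines as they become free gives the standard Graham-type bound: the last frozen job starts by $(\text{total work})/m$ and then runs at most $\max_j(p_j-f_j)\le p_{\max}/2$ more. This is where $\lambda=1$ enters: the tail of any frozen job is at most half its length, giving roughly $C\le P/m+p_{\max}/2\le \tfrac32\OPT$ in this case.

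Third, I handle the critical case, where the makespan is determined by an unfrozen job $k$ that still has large remaining work $p_k-e$. This happens when $e<f_k$. Deterministically, $p_k-e$ could be close to $p_k$ only if $e$ is tiny. In that case $T$ is small only if few jobs exist, which forces $P/m$ to be small relative to $p_k$ in a controlled way.

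The randomness must therefore be used here. Since $S_k\sim\mathcal U[0,p_k]$, the event that $k$ is still unfrozen at elapsed time $e$ has probability $\Pr[\max\{S_k,p_k/2\}>e]$. For $e\ge p_k/2$ this equals $1-e/p_k$, and for $e<p_k/2$ it equals $1$. The bound I aim for is
\[\Ebb[C_{\max}]\le \Ebb\Bigl[T+\max_{k\text{ unfrozen}}(p_k-e)\Bigr]\quad\text{or}\quad \frac{P}{m}+\frac{p_{\max}}{2}.\]
I then optimize the worst case over $e$. The term $T\le P/m$ together with an expected excess of roughly $\int$ over the signal distribution should give $\tfrac{13}{8}$: writing $e=\theta p_{\max}$, one gets $\OPT+\Ebb[(p_{\max}-\max\{\ldots\})]$-type terms whose maximum over $\theta$ is $5/8$.

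The main obstacle is this third step, specifically coupling $T$ with $e$. If many jobs are frozen early, $T$ is large but the work $P$ is correspondingly large, and I will have to charge $m\cdot T$ against $P$ while crediting the long job's already-processed amount $e$ (the long job $k$ itself contributes $e$ to the work before $T$). I would first try the inequality $mT+ (p_k-e)\le P$ restricted to the long job, giving $C\le P/m + (1-\tfrac1m)(p_k-e)$. I would then take the expectation over $S_k$ conditioned on the adversarial $e$, where the worst-case mixture of $P/m$ and $p_{\max}$ yields $13/8$.
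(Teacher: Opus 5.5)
There is a genuine gap, and it sits in your third step, which is where the theorem is decided. The inequality you plan to use there, $mT+(p_k-e)\le P$, i.e.\ $C_{\max}\le P/m+(1-\tfrac1m)(p_k-e)$, is too lossy. It charges $T$ against all of $P$, although the residual work of the other $m-1$ unfrozen jobs and of all frozen jobs is processed only after $T$.

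For instance, take one job of size $1$, $m-1$ jobs of size $1-\eta$, and many jobs of size $\epsilon$ with total size $\delta$, where $\epsilon,\delta\ll\eta$. With probability one the tiny jobs freeze first, so $e\le\epsilon$. The makespan is then attained by the size-$1$ job at time $T+1-e\le 1+\delta/m+\epsilon\approx\OPT=1$. Your bound, however, evaluates to roughly $2-\eta-\tfrac1m$. Averaging over $S_k$ cannot help, since here $k$ is unfrozen in every realization.

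Beyond that, there are three further problems:
\begin{itemize}
\item Conditioning on an ``adversarial'' $e$ is not well defined. In fact $e=\max_{j\in F}f_j$, and the unfrozen set, the identity of $k$, and which branch attains the makespan all depend on every signal, including $S_k$.
\item Combining ``$\Ebb[\cdot]$ or $P/m+p_{\max}/2$'' does not bound $\Ebb[C_{\max}]$. You need one pointwise upper bound on $C_{\max}$ and then its expectation.
\item The value $5/8$ for the maximum over $\theta$ is asserted rather than derived.
\end{itemize}

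The missing idea is already in your first paragraph, whose work accounting is in fact an equality. Write $f(Q):=\sum_{j\in Q}f_j$. The work done before $T$ is exactly $f(F)+me$, so $T=f(F)/m+e$. Every unfrozen job therefore completes by $T+p_{\max}-e=p_{\max}+f(F)/m\le\OPT+f(J)/m$.

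Your frozen-tail bound $P/m+p_{\max}/2$ is correct, and it is also at most $\OPT+f(J)/m$, because $f(J)\ge P/2$; check the cases $p_{\max}\ge P/m$ and $p_{\max}\le P/m$ separately. The paper instead bounds each frozen residual by $f_j\le e$ and uses $me\le f(U)$, since every unfrozen job has $f_j\ge e$.

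Either way, $C_{\max}\le\OPT+f(J)/m$ holds pointwise. The randomness then enters only through linearity of expectation: $f_j=\max\{S_j,p_j/2\}$ depends on job $j$'s own signal alone, and $\Ebb[f_j]=\tfrac58p_j$. This gives $\Ebb[C_{\max}]\le\OPT+\tfrac58P/m\le\tfrac{13}{8}\OPT$. So the improvement over $2$ comes from the pre-$T$ phase processing at most $f(J)$ units of work, which is $\tfrac58P$ in expectation regardless of the schedule's dynamics. It does not come from the signal of the long job $k$.
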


Assume $n > m$, as otherwise the problem is trivial.
Note that $\OPT = \max\{P/m, p_{\max}\}$, where $P = \sum_j p_j$ and $p_{\max} = \max_j p_j$.
Let $J$ denote the set of all jobs, $F$ the set of frozen jobs at time $\tau$, and $U$ the set of $m$ unfrozen jobs at $\tau$. Each frozen job $j \in F$ has received exactly $f_j$ units of work by time $\tau$. Let $L$ denote the common elapsed time of each unfrozen job at $\tau$; note that $f_j \leq L$ for each $j \in F$. For a subset $Q \subseteq J$, write $f(Q) := \sum_{j \in Q} f_j$. Since all machines are busy throughout $[0, \tau]$,
\begin{equation}\label{eq:makespan-volume-tau}
    m\tau = f(F) + mL.
\end{equation}
Since each unfrozen job $j$ has $f_j \geq L$,
\begin{equation}\label{eq:makespan-unfrozen-bound}
    f(J) \geq f(U) \geq mL.
\end{equation}

The following lemma bounds the residual size of frozen jobs.

\begin{lemma}\label{lem:frozen-job-remaining}
    Every frozen job $j \in F$ has residual size $p_j(\tau) \leq \lambda L$ at time $\tau$.
\end{lemma}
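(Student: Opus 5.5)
The claim follows directly from the freezing rule of $\LFREEZE$ together with the observation that $f_j\le L$ for every frozen job. Fix a frozen job $j\in F$. At time $\tau$ it has received exactly $f_j = \max\{S_j, p_j/(1+\lambda)\}$ units of work, because it is removed from RR at the moment its elapsed time reaches $f_j$. From then until $\tau$ it receives no further processing, since frozen jobs are only resumed by List Scheduling after $\tau$. Its residual size is therefore $p_j(\tau) = p_j - f_j$.

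The key step is the inequality $f_j \ge p_j/(1+\lambda)$, which holds by the definition of the maximum. It gives
\[
p_j(\tau) = p_j - f_j \le p_j - \frac{p_j}{1+\lambda} = \frac{\lambda}{1+\lambda}\,p_j \le \lambda f_j .
\]
The last inequality uses $p_j \le (1+\lambda) f_j$.

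It remains to show that $f_j \le L$, which was already noted before the lemma. I would justify it as follows. RR processes all unfrozen jobs at equal rates, since there are more than $m$ of them before $\tau$, so each receives rate $m/|\text{unfrozen}|<1$. Elapsed times of unfrozen jobs are therefore equal at every time before $\tau$. When $j$ froze at some time $t\le\tau$, the common elapsed time of the unfrozen jobs, including $j$ itself, was exactly $f_j$. This common elapsed time is nondecreasing in time, and the jobs in $U$ were unfrozen throughout $[0,\tau]$, so their common elapsed time at $\tau$ satisfies $L\ge f_j$. Combining the two inequalities gives $p_j(\tau)\le \lambda f_j\le\lambda L$.

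The only subtle point is that $f_j$ is well defined and acted upon at the right moment. Before $j$ reaches elapsed time $S_j$, the algorithm does not know $p_j$. However, $f_j\ge S_j$, so the freezing threshold is never crossed before $p_j$ is revealed, and freezing occurs exactly when the elapsed time equals $f_j$. The rest of the argument is immediate.
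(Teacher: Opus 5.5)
Your proof is correct and follows the paper's argument: use $f_j \ge p_j/(1+\lambda)$ to get $p_j(\tau) = p_j - f_j \le \lambda f_j$, then conclude with $f_j \le L$. The paper only asserts $f_j \le L$ in the setup before the lemma. Your justification, via equal Round-Robin elapsed times among unfrozen jobs and monotonicity of that common elapsed time, is a valid filling-in of that step.
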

\begin{proof}
    Since $f_j \geq p_j/(1+\lambda)$, we have $p_j \leq (1+\lambda) f_j$, so
    \[
        p_j(\tau) = p_j - f_j \leq (1+\lambda) f_j - f_j = \lambda f_j \leq \lambda L. \qedhere
    \]
\end{proof}

For each $j \in U$, the residual is $p_j(\tau) = p_j - L$, so $\max_{j \in U} p_j(\tau) \leq p_{\max} - L$.

\begin{lemma}\label{lem:list-scheduling-bound}
    Suppose $m$ machines have fixed initial loads $\ell_1, \ldots, \ell_m$, and a set $Q$ of additional jobs with sizes $q_1, \ldots, q_k$ is to be scheduled, with $q_{\max} = \max_{j \in Q} q_j$. Then applying List Scheduling to $Q$ on these machines yields a makespan of at most
    \[
        \max \left\{ \max_{i \in [m]} \ell_i,\;\; \frac{1}{m} \left( \sum_{i=1}^m \ell_i + \sum_{j=1}^k q_j \right) + q_{\max} \right\}.
    \]
\end{lemma}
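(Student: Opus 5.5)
The plan is the standard list-scheduling argument: look at the job that finishes last and use the fact that, under List Scheduling, it started on a least-loaded machine. Let $C$ be the final makespan. If $C$ equals the completion time of a machine that never receives a job from $Q$, then $C = \ell_i \le \max_{i} \ell_i$, and we are done.

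Otherwise, let $j \in Q$ be a job that finishes at time $C$. Let $s_j$ be the time it is started, i.e., the load of the machine it is placed on at the moment List Scheduling assigns it. List Scheduling assigns each job to a currently least-loaded machine (equivalently, the earliest machine to become free). So at that moment every machine has load at least $s_j$. These loads are made up of the initial loads plus jobs of $Q$ assigned before $j$, and $j$ itself is not yet among them. Averaging over the $m$ machines gives
\[
s_j \le \frac{1}{m}\left(\sum_{i=1}^m \ell_i + \sum_{j' \in Q\setminus\{j\}} q_{j'}\right) \le \frac{1}{m}\left(\sum_{i=1}^m \ell_i + \sum_{j'=1}^k q_{j'}\right).
\]
Then $C = s_j + q_j \le s_j + q_{\max}$, which is the second term of the maximum. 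Taking the maximum over the two cases gives the claim.

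The only point needing care is that loads are monotone: once a machine's load is at least $s_j$, it stays so. This guarantees that the greedy choice really certifies that all machines carry at least $s_j$ volume of initial loads and earlier-assigned $Q$-jobs. Unlike the textbook version, no machine can be idle before time $s_j$ while carrying less work, because in the static-load model the load is exactly the busy volume on that machine. I do not expect a real obstacle; the dominant-case split (the last finisher comes from the initial loads versus from $Q$) is the whole argument.
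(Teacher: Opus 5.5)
Your proof is correct and is essentially the paper's argument: take the last-finishing job $j\in Q$, note that every machine is loaded (busy) up to its start time $s_j$, average to get $m s_j \le \sum_i \ell_i + \sum_{j'\neq j} q_{j'}$, and add $q_j \le q_{\max}$. The differences are cosmetic. You split on whether the makespan machine receives any $Q$-job, which is slightly cleaner than the paper's ``starts at time $0$'' split. The paper also keeps the marginally sharper $(1-\frac{1}{m})q$ term before relaxing it to $q_{\max}$.
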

\begin{proof}
    Consider the job that completes last. If it starts at time $0$, the makespan is at most $\max\{\max_i \ell_i, q_{\max}\}$.

    Otherwise, let $q$ be the size of the job that completes last, and let $t$ be its start time. Since List Scheduling assigns jobs as soon as a machine becomes free, all machines are busy during $[0, t]$. Therefore
    \[
        mt \;\leq\; \left( \sum_{i=1}^m \ell_i + \sum_{j=1}^k q_j \right) - q,
    \]
    and the makespan is bounded by
    \[
        t + q \;\leq\; \frac{1}{m}\!\left( \sum_{i=1}^m \ell_i + \sum_{j=1}^k q_j \right) + \left( 1 - \frac{1}{m} \right) q \;\leq\; \frac{1}{m}\!\left( \sum_{i=1}^m \ell_i + \sum_{j=1}^k q_j \right) + q_{\max}. \qedhere
    \]
\end{proof}

Apply \Cref{lem:list-scheduling-bound} at time $\tau$, with initial loads $\{p_i(\tau) : i \in U\}$ on the $m$ unfrozen machines and additional jobs of sizes $\{p_j(\tau) : j \in F\}$. Using \Cref{lem:frozen-job-remaining} and $\max_{i \in U} p_i(\tau) \leq p_{\max} - L$, the remaining schedule length after $\tau$ is at most
\begin{align*}
    &\max\!\left\{ p_{\max} - L,\;\; \frac{1}{m}\!\left( \textstyle\sum_{i \in U} p_i(\tau) + \sum_{j \in F} p_j(\tau) \right) + \lambda L \right\} \\
    &\;\leq\; \max\!\left\{ p_{\max} - L,\;\; \frac{1}{m}\!\left( P - f(F) - mL \right) + \lambda L \right\}.
\end{align*}
Combining with $\tau = f(F)/m + L$ from~\eqref{eq:makespan-volume-tau},
\begin{align*}
    C_{\max}
    &\leq \tau + \max\!\left\{ p_{\max} - L,\;\; \tfrac{1}{m}\!\left( P - f(F) - mL \right) + \lambda L \right\} \\
    &= \tfrac{1}{m} f(F) + L + \max\!\left\{ p_{\max} - L,\;\; \tfrac{1}{m}\!\left( P - f(F) - mL \right) + \lambda L \right\} \\
    &= \max\!\left\{ p_{\max} + \tfrac{1}{m} f(F),\;\; \tfrac{1}{m} P + \lambda L \right\} \\
    &\leq \max\!\left\{ p_{\max},\; \tfrac{P}{m} \right\} + \max\{1, \lambda\} \cdot \tfrac{f(J)}{m} \\
    &= \OPT + \max\{1, \lambda\} \cdot \tfrac{f(J)}{m},
\end{align*}
where the second-to-last inequality uses $\lambda L \leq \lambda \tfrac{1}{m} f(U) \leq \lambda \tfrac{1}{m} f(J)$ from~\eqref{eq:makespan-unfrozen-bound} and $f(F) \leq f(J)$. Taking expectations,
\begin{equation}\label{eq:makespan-master}
    \ex[C_{\max}] \;\leq\; \OPT + \max\{1, \lambda\} \cdot \frac{\ex[f(J)]}{m}.
\end{equation}

\begin{lemma}\label{lem:E-fJ}
    $\ex[f(J)] = \left(\tfrac{1}{2} + \tfrac{1}{2(1+\lambda)^2}\right) P$.
\end{lemma}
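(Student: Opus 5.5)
Since $f(J)=\sum_j f_j$, linearity of expectation reduces the claim to a single job. It suffices to show, for each $j$ with $S_j\sim\Ucal([0,p_j])$, that
\[
\ex[f_j] = \ex\bigl[\max\{S_j,\,p_j/(1+\lambda)\}\bigr] = \Bigl(\tfrac12+\tfrac{1}{2(1+\lambda)^2}\Bigr)p_j.
\]
Summing over $j$ then gives the statement. The definition of $f_j$ involves only the job's own signal and its processing time, so no interaction between jobs or dependence on the schedule enters here.

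For the single-job computation, write $S_j=p_jX$ with $X\sim\Ucal([0,1])$ and set $c:=1/(1+\lambda)\in(0,1]$. Then $f_j=p_j\max\{X,c\}$. I would evaluate $\ex[\max\{X,c\}]$ by splitting at $c$. On $\{X<c\}$, which has probability $c$, the value is the constant $c$, contributing $c^2$. On $\{X\ge c\}$, the value is $X$, contributing $\int_c^1 x\,\mathrm{d}x=\tfrac{1-c^2}{2}$. Adding the two parts gives
\[
\ex[\max\{X,c\}] = c^2+\tfrac{1-c^2}{2}=\tfrac12+\tfrac{c^2}{2}=\tfrac12+\tfrac{1}{2(1+\lambda)^2}.
\]
As a cross-check, the tail formula gives the same value: $\int_0^1\Pr[\max\{X,c\}>t]\,\mathrm{d}t = c+\int_c^1(1-t)\,\mathrm{d}t$.

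The only subtle point is whether $f_j$ really is a deterministic function of $(S_j,p_j)$, independent of the run of $\LFREEZE$. It is: $f_j$ is defined directly in \Cref{alg:lfreeze}, and the quantity $f(J)$ in \eqref{eq:makespan-master} sums $f_j$ over all jobs, including unfrozen ones, whose actual elapsed time $L$ may differ from $f_j$. So no schedule-dependent term appears, and the expectation is just the sum of the per-job expectations above. I expect no real obstacle beyond this bookkeeping.
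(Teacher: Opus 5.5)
Your proposal is correct and follows essentially the same route as the paper: linearity of expectation, then a per-job computation of $\ex\bigl[\max\{S_j, p_j/(1+\lambda)\}\bigr]$ by splitting the uniform integral at $p_j/(1+\lambda)$, giving $\tfrac{p_j}{(1+\lambda)^2} + \tfrac{p_j}{2}\bigl(1 - \tfrac{1}{(1+\lambda)^2}\bigr)$. Your normalization to $X\sim\Ucal([0,1])$ and the tail-formula cross-check are only cosmetic differences.
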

\begin{proof}
    Since $f_j = \max\{S_j,\, p_j/(1+\lambda)\}$ and $S_j \sim \mathcal{U}[0, p_j]$,
    \begin{align*}
        \ex[f_j]
        &= \frac{1}{p_j} \left( \int_0^{p_j/(1+\lambda)} \frac{p_j}{1+\lambda}\, ds \;+\; \int_{p_j/(1+\lambda)}^{p_j} s\, ds \right) \\
        &= \frac{p_j}{(1+\lambda)^2} \;+\; \frac{p_j}{2}\!\left(1 - \frac{1}{(1+\lambda)^2}\right) \\
        &= \frac{p_j}{2} + \frac{p_j}{2(1+\lambda)^2}.
    \end{align*}
    Summing over $j$ gives the claim.
\end{proof}

\begin{proof}[Proof of \Cref{thm:makespan}]
Substituting \Cref{lem:E-fJ} into~\eqref{eq:makespan-master} and using $P/m \leq \OPT$,
\[
    \ex[C_{\max}] \;\leq\; \OPT + \max\{1, \lambda\} \!\left(\tfrac{1}{2} + \tfrac{1}{2(1+\lambda)^2}\right)\!\OPT.
\]
The factor $\max\{1, \lambda\} \!\left(\tfrac{1}{2} + \tfrac{1}{2(1+\lambda)^2}\right)$ equals $\tfrac{1}{2} + \tfrac{1}{2(1+\lambda)^2}$ for $\lambda \leq 1$ (decreasing in $\lambda$) and $\tfrac{\lambda}{2} + \tfrac{\lambda}{2(1+\lambda)^2}$ for $\lambda \geq 1$ (increasing in $\lambda$), so the minimum is attained at $\lambda = 1$, giving $\tfrac{1}{2} + \tfrac{1}{8} = \tfrac{5}{8}$ and yielding $\ex[C_{\max}] \leq \tfrac{13}{8}\,\OPT$. This proves \Cref{thm:makespan}.
\end{proof}

\section{Beta Distributions and Multiple Uniform Signals}\label{sec:beta}

In this section, we again study total completion time minimization on a single machine.

\subsection{\textsc{Greedy} on Signals with Beta Distributions}\label{sec:greedy-beta}

We extend the analysis of $\GREEDY$ to signals $S_j \sim p_j \cdot \mathrm{Beta}(\alpha, \beta)$ with $\alpha, \beta \geq 1$. The Beta family generalizes the uniform case ($\mathrm{Beta}(1,1)$) and captures order statistics of multiple uniform signals: $\mathrm{Beta}(\alpha, 1)$ and $\mathrm{Beta}(1,\beta)$ are the distributions of the maximum of $\alpha$ and the minimum of $\beta$ i.i.d.\ uniform signals in $[0,1]$.

Applying \Cref{lem:pairwise-delay-greedy} to Beta signals, with $X_k := S_k/p_k \sim \mathrm{Beta}(\alpha,\beta)$ and $F$ and $f$ denoting its CDF and PDF, we find that the normalized expected pairwise delay of $\GREEDY$ is $g(x) = 1 + \int_0^1 (1-F(t))(1-F(xt))\, dt + \left(\frac{1}{x} - 1\right) \int_0^1 F(xt)\, f(t)\, dt$,
where $x = p_i/p_j$. Results on competitive ratios can be obtained from the delay bounds applied to the above $g$. We give three bounds of increasing strength, compared in \Cref{fig:beta-comp-heatmaps}.

\paragraph{Baseline from Local Delay Theorem.}
Bounding $g$ by a single $x$-independent constant already yields a competitive ratio, through the Local Delay Theorem (\Cref{lem:weak-delay}).
\begin{lemma}[Local Delay Theorem for $\mathrm{Beta}(\alpha,\beta)$ distribution]\label{prop:beta-weak}
    For $\mathrm{Beta}(\alpha,\beta)$-stochastic clairvoyance with $\alpha >2$ and $\beta>0$, the competitive ratio of the $\GREEDY$ algorithm is at most
    \[
    1 + \frac{\alpha}{\alpha+\beta}
 + \sqrt{\frac{\alpha\beta(\alpha+\beta-1)(\alpha+\beta-2)}{(\alpha+\beta)^2(\alpha+\beta+1)(\alpha-1)(\alpha-2)}}. 
    \]
\end{lemma}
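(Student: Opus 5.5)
The plan is to bound the normalized expected pairwise delay $g(x)$ of \Cref{lem:pairwise-delay-greedy} by a single constant, uniformly in $x\in(0,1]$. The Local Delay Theorem (\Cref{lem:weak-delay}) then gives the ratio directly. Write $u:=1/x\ge 1$, let $X_i,X_j\sim\mathrm{Beta}(\alpha,\beta)$ be i.i.d., and let $\mu=\frac{\alpha}{\alpha+\beta}$ be their mean. The proof of \Cref{lem:pairwise-delay-greedy} shows
\[
g(x)=1+\ex\bigl[\min\{X_i,uX_j\}\bigr]+(u-1)\Pr[uX_j<X_i].
\]
Using $\min\{a,b\}=a-(a-b)^+$, this becomes
\[
g(x)=1+\mu+\ex\Bigl[\one_{\{uX_j<X_i\}}\bigl((u-1)-X_i+uX_j\bigr)\Bigr].
\]
So it suffices to bound the last expectation independently of $u$.

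\textbf{Step 1 (pointwise bound).} On the event $\{uX_j<X_i\}$, the quantity $(u-1)-X_i+uX_j$ is increasing and affine in $u$, with slope $1+X_j>0$. On this event we also have $u<X_i/X_j$. Substituting $u=X_i/X_j$ therefore bounds it by
\[
\frac{X_i}{X_j}-1-X_i+X_i=\frac{X_i-X_j}{X_j}.
\]
Since the quantity vanishes outside the event, we get for every $u\ge1$
\[
\ex\Bigl[\one_{\{uX_j<X_i\}}\bigl((u-1)-X_i+uX_j\bigr)\Bigr]\le \ex\left[\frac{(X_i-X_j)^+}{X_j}\right].
\]

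\textbf{Step 2 (Cauchy--Schwarz).} Cauchy--Schwarz gives
\[
\ex\bigl[(X_i-X_j)^+/X_j\bigr]\le\sqrt{\ex\bigl[((X_i-X_j)^+)^2\bigr]\;\ex\bigl[X_j^{-2}\bigr]}.
\]
Because $X_i,X_j$ are exchangeable, $\ex[((X_i-X_j)^+)^2]=\tfrac12\ex[(X_i-X_j)^2]=\Var(X)=\frac{\alpha\beta}{(\alpha+\beta)^2(\alpha+\beta+1)}$. The standard Beta moment formula gives $\ex[X^{-2}]=\frac{B(\alpha-2,\beta)}{B(\alpha,\beta)}=\frac{(\alpha+\beta-1)(\alpha+\beta-2)}{(\alpha-1)(\alpha-2)}$, which is finite exactly because $\alpha>2$. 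Multiplying the two moments reproduces the square-root term of the statement.

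\textbf{Conclusion and main obstacle.} Hence $\ex[d(i,j)+d(j,i)]\le p_i\bigl(1+\mu+\sqrt{\Var(X)\ex[X^{-2}]}\bigr)$ for all $p_i\le p_j$. \GREEDY\ is nonidling, so \Cref{lem:weak-delay} yields the claimed ratio. The main obstacle is Step 1. A naive split, bounding $\ex[\min]\le\mu$ and $(u-1)\Pr[uX_j<X_i]$ separately via Markov, gives only $\mu\,\ex[1/X]$ with no variance factor. The key is to keep the negative term $-(X_i-uX_j)^+$ paired with the indicator term, so that optimizing over $u$ produces the centered quantity $(X_i-X_j)^+$. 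After that the remaining steps are routine moment computations.
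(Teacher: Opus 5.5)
Your proof is correct and follows the paper's route: the same split on the event $\{uX_j<X_i\}$, the same pointwise bound by $(X_i-X_j)^+/X_j$ via $u<X_i/X_j$, and the same Cauchy--Schwarz step with $\Var(X)$ and $\ex[X^{-2}]$, fed into \Cref{lem:weak-delay}. The only difference is cosmetic: the paper simply discards the negative term $-X_i+uX_j$ on the event, so your claim that keeping it paired is ``the key'' overstates things, since the pointwise bound $(u-1)\one_{\{uX_j<X_i\}}\le (X_i/X_j-1)^+$ alone already gives the centered quantity.
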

This is the baseline that the next two bounds improve upon; it is plotted in the left panel of \Cref{fig:beta-comp-heatmaps}, and we prove it in \Cref{ap:beta-weak}. Next we strengthen the result via the Global Delay Theorem (\Cref{lem:strong-delay}).

\paragraph{A Global Delay Theorem via the envelope.} As $g$ is in general neither monotone nor concave, we first bound it by the analytically tractable \emph{envelope} (\Cref{lem:envelope})
\[
    g(x) \leq U(x) := 1 + \mu + C\, x^{\alpha-1} - (C+\Delta)\, x^\alpha,
\]
with the constants $\mu, C, \Delta$ as defined in \Cref{lem:envelope}, and then find a majorant for $U$. 
The envelope is unimodal on $(0,1]$ with maximizer $x_U^* = \tfrac{\alpha-1}{\alpha}\cdot \tfrac{C}{C+\Delta}$ (\Cref{lem:U-shape}); capping it at $U(x_U^*)$ on $[0,x_U^*]$ yields a valid nonincreasing concave majorant (for $\alpha=1$, $x_U^*=0$ and the majorant is $U$ itself).
\begin{restatable}{theorem}{thmBetaCaseTwoU}\label{thm:beta-case2-U}
    For $\mathrm{Beta}(\alpha,\beta)$-stochastic clairvoyance with $\alpha \geq 1$ and $\beta \geq 1$, the $\GREEDY$ algorithm has competitive ratio at most
    \[
        \rho_U \;=\; \sqrt{x_U^*}\cdot U(x_U^*) + \int_{\sqrt{x_U^*}}^{1} U(t^2)\, dt,
    \]
    with $U(x_U^*) = 1 + \tfrac{\alpha}{\alpha+\beta} + \tfrac{C}{\alpha}\,(x_U^*)^{\alpha-1}$.
\end{restatable}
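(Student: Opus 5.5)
The plan is to follow the three-step framework of \Cref{sec:overview}. Concretely, the argument takes the envelope $U$ from \Cref{lem:envelope}, builds a nonincreasing concave majorant $\hat U$ of $U$ (hence of $g$), applies the Global Delay Theorem (\Cref{lem:strong-delay}) to $\hat U$, and then evaluates $\int_0^1 \hat U(t^2)\,\mathrm dt$ in closed form.

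\textbf{Step 1: pointwise delay bound.} By \Cref{lem:pairwise-delay-greedy}, the normalized expected pairwise delay of $\GREEDY$ is exactly the function $g$ displayed before the theorem. \Cref{lem:envelope} gives $g(x)\le U(x)=1+\mu+Cx^{\alpha-1}-(C+\Delta)x^\alpha$ on $(0,1]$. Define
\[
\hat U(x)=\begin{cases} U(x_U^*) & x\le x_U^*,\\ U(x) & x> x_U^*.\end{cases}
\]
Since $U$ is unimodal with maximizer $x_U^*$ (\Cref{lem:U-shape}), we get $\hat U\ge U\ge g$ on $[0,1]$, and $\hat U$ is nonincreasing.

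\textbf{Step 2: concavity of $\hat U$ — the main obstacle.} We need $U$ concave on $[x_U^*,1]$, and the constant cap must join it with a zero-slope tangent. The second derivative is
\[
U''(x)=x^{\alpha-3}\bigl(C(\alpha-1)(\alpha-2)-(C+\Delta)\alpha(\alpha-1)x\bigr),
\]
which vanishes for $\alpha=1$ and is otherwise negative once
\[
x\ge \frac{(\alpha-2)C}{\alpha(C+\Delta)}=\frac{\alpha-2}{\alpha-1}\,x_U^*<x_U^*,
\]
using $C,\Delta\ge0$ as established in \Cref{lem:envelope}. Hence $U$ is concave on $[x_U^*,1]$. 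At $x_U^*$ we have $U'(x_U^*)=0$, so gluing on the constant $U(x_U^*)$ preserves concavity: the one-sided derivatives are $0$ on the left and $0$ on the right, and the derivative is nonincreasing afterwards. For $\alpha=1$ we have $x_U^*=0$, and $U$ is affine with nonpositive slope $-\Delta$.

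We also need $\hat U\ge1$ to meet the hypothesis of \Cref{lem:strong-delay}. This follows from $\hat U\ge g$ together with $g\ge 1$, which holds because the two integral terms of $g$ are nonnegative. If the sign conditions on $C$ and $\Delta$ are not directly available from \Cref{lem:envelope}, I would check them there first, since they are the only delicate ingredient here.

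\textbf{Step 3: evaluation.} By \Cref{lem:strong-delay}, the competitive ratio is at most
\[
\int_0^1 \hat U(t^2)\,\mathrm dt=\int_0^{\sqrt{x_U^*}}U(x_U^*)\,\mathrm dt+\int_{\sqrt{x_U^*}}^1 U(t^2)\,\mathrm dt=\rho_U.
\]
For the stated value of $U(x_U^*)$, substitute the first-order condition $(C+\Delta)(x_U^*)^{\alpha}=\tfrac{\alpha-1}{\alpha}C(x_U^*)^{\alpha-1}$ into $U$. This gives
\[
U(x_U^*)=1+\mu+\tfrac{C}{\alpha}(x_U^*)^{\alpha-1},
\]
and finally we use $\mu=\tfrac{\alpha}{\alpha+\beta}$, the mean of the Beta distribution as defined in \Cref{lem:envelope}.
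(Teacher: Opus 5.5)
Your proposal is correct and follows the paper's proof essentially step for step: cap the envelope $U$ at its maximizer $x_U^*$, use the concavity of $U$ on $[x_U^*,1]$ (as in \Cref{lem:U-shape}), apply \Cref{lem:strong-delay}, and evaluate $U(x_U^*)$ via the first-order condition. One harmless slip: for $\alpha=1$ the slope of $U$ is $-(C+\Delta)$, giving $U(x)=2-\tfrac{2\beta}{2\beta+1}x$, not $-\Delta$; also, $C>0$ and $\Delta\ge 0$ follow directly from their definitions rather than from \Cref{lem:envelope}.
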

The integral admits a closed form (\Cref{ap:beta-envelope}). For $\alpha=1$ the envelope is linear, $U(x)=2-\tfrac{2\beta}{2\beta+1}x$, so $x_U^*=0$ (no flat piece) and $\rho_U = 2 - \tfrac{2\beta}{3(2\beta+1)}$, recovering $16/9$ at $\beta=1$ (\Cref{thm:greedy}); since $g$ is convex there with $U$ its secant, no linear majorant does better (\Cref{ap:beta-tightness}). Relative to applying the local delay bound to the same envelope, which would use the constant $U(x_U^*)$, $\rho_U$ is a strict improvement whenever the envelope is not flat, i.e.\ $U(x_U^*) > U(1)$, with the gain growing in $\alpha$; e.g.\ for $\mathrm{Beta}(2,2)$, $\rho_U \approx 1.664$ versus $U(x_U^*) \approx 1.697$.

\paragraph{A tighter bound via direct analysis of $g$.} Majorizing $g$ directly, rather than through $U$, closes a relaxation gap that grows with $\alpha,\beta$. The resulting majorant---constant up to the maximizer of $g$, then equal to $g$ (with a tangent extension near $x=1$ where $g$ may be convex)---is pointwise no larger than the majorant from \Cref{thm:beta-case2-U}, so the bound $\rho_g$ of \Cref{thm:beta-g-direct} is never worse than $\rho_U$ and is typically much tighter (e.g.\ $\mathrm{Beta}(5,5)$: $\rho_g \approx 1.559$ versus $\rho_U \approx 2.100$). The construction and the full numerical comparison are given in \Cref{ap:beta-direct-g} and \Cref{tab:CR-comparison}; it relies on a structural property of $g$ stated in an assumption of \Cref{thm:beta-g-direct}, currently verified computationally for $\alpha,\beta \in [1,20]$.

\begin{figure}[t]
    \centering
    \includegraphics[width=\linewidth]{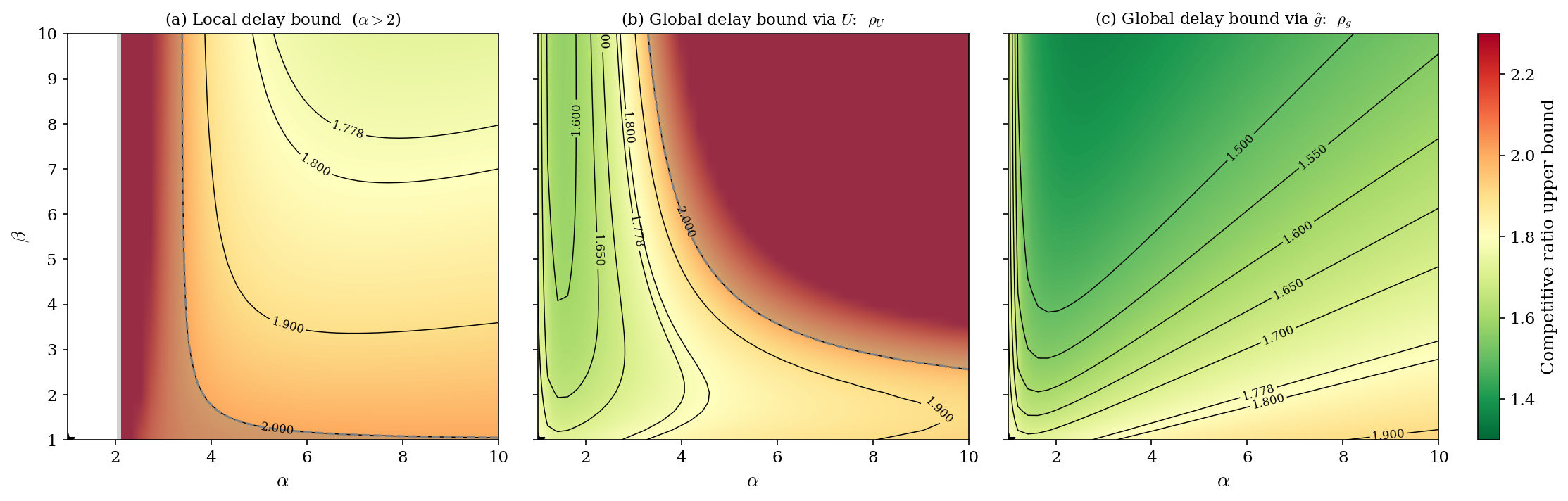}
    \caption{Competitive-ratio upper bounds for $\GREEDY$ under $\mathrm{Beta}(\alpha,\beta)$ signals.
    \textbf{(a)} Local Delay Theorem (\Cref{prop:beta-weak}, via \Cref{lem:weak-delay}; $\alpha>2$ only).
    \textbf{(b)} $\rho_U$ via the envelope (\Cref{thm:beta-case2-U}).
    \textbf{(c)} $\rho_g$ via the direct majorant (\Cref{thm:beta-g-direct}); 
    both \textbf{(b)} and \textbf{(c)} apply the Global Delay Theorem (\Cref{lem:strong-delay}).}
    \label{fig:beta-comp-heatmaps}
\end{figure}

\subsection{Scheduling with Multiple Uniform Signals}\label{sec:asymptotics}
In this section, we consider the setting where each job $j\in[n]$ is associated with a list of $k\in \Nbb$ random signals $S_j^{1},\dots, S_j^{k}$. This setting is also known as \textit{scheduling with a progress bar} and $k$ is referred to as the \textit{granularity} of the progress bar (see \cite{BenomarCLS25} for detailed definitions and motivations). The following algorithm (\Cref{alg:etc}) was proposed in \cite{BenomarCLS25}, which analyzed it under the assumption that the signals are generated by a Poisson process. In this section, we study the natural variant in which the number of signals $k$ is fixed and each signal is uniformly and independently distributed over $[0,1]$.  

\begin{algorithm}
\DontPrintSemicolon
\caption{Repeated Explore-Then-Commit \cite{BenomarCLS25}}
\label{alg:etc}
Run Round-Robin (RR) \;
    Whenever a job $j$ emits its $h = \ceil{k^{2/3}}$-th signal, stop RR and run $j$ to completion.
\end{algorithm}

Classically, considering $k$ i.i.d.\ random variables $S_j^{1},\dots, S_j^{k} \sim \Ucal([0,1])$, and denoting the index of the $h$-th value by $(h)\in[k]$ so that $S_j^{(1)}\leq \dots \leq S_j^{(k)}$, the random variable $S_j^{(h)}$ follows a $\mathrm{Beta}(h,k-h+1)$ distribution. Thus, in the model described above, the competitive ratio of \Cref{alg:etc} under $k$ uniform signals is equal to the competitive ratio of $\GREEDY$ under a single $\mathrm{Beta}(h,k+1-h)$ signal. This motivates the following result, which captures the asymptotic behavior of the competitive ratio as the granularity $k$ increases.

\begin{restatable}{theorem}{BetaAsymptoticUB} \label{thm:beta-asymptotic-ub} The competitive ratio of \Cref{alg:etc} for $k$ uniform weak signals and threshold $h = \ceil{k^{2/3}}$ is $1 + O(k^{-1/3})$. Conversely, the competitive ratio of any scheduling algorithm using $k$ uniform weak signals is at least $1 + \Omega(k^{-1/3})$. \end{restatable}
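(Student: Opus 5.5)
Both directions go through the reduction already noted: \Cref{alg:etc} with $k$ uniform signals behaves exactly like $\GREEDY$ with a single signal $S_j=p_jX_j$, $X_j\sim\mathrm{Beta}(h,k+1-h)$. So the upper bound will come from \Cref{lem:pairwise-delay-greedy} plus a majorant fed into the Global Delay Theorem (\Cref{lem:strong-delay}). The lower bound will come from a Yao-type argument modeled on \Cref{sec:lower-bounds}.

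\textbf{Upper bound.} I would not compute $g$ exactly. Instead I would bound it from the identity in the proof of \Cref{lem:pairwise-delay-greedy}:
\[
g(x)=1+\ex\bigl[\min\{X_i,X_j/x\}\bigr]+\Bigl(\tfrac1x-1\Bigr)\Pr[X_j<xX_i].
\]
The mean is $\mu=\ex[X]=h/(k+1)\approx k^{-1/3}$, and $X$ concentrates with standard deviation $\sigma\approx\sqrt h/k\approx k^{-2/3}$. The second term is at most $\mu=O(k^{-1/3})$.

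For the third term, fix a window of width $\delta\approx\mu/2$ and split the event $\{X_j<xX_i\}$ into two pieces:
\begin{itemize}
\item $\{X_i>\mu+\delta\}$ or $\{X_j<\mu-\delta\}$, whose probabilities are exponentially small in $h\approx k^{2/3}$ by Chernoff bounds for the number of uniform signals below a threshold;
\item $\{X_i\le\mu+\delta,\ X_j\ge\mu-\delta\}$, which forces $x\ge(\mu-\delta)/(\mu+\delta)$, i.e.\ $x$ bounded away from $0$.
\end{itemize}
This yields the following pieces.
\begin{itemize}
\item For $x\le c:=(\mu-\delta)/(\mu+\delta)$, the third term is at most $\frac1x\Pr[X_j<xX_i]$. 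Since $F(xt)\le F(x)$ is tiny for small $x$, a uniform bound such as $\frac1x F(x)\le O(k^{-1/3})$ should hold via Beta tail estimates.
\item For $x$ close to $1$, the third term is at most $(\frac1x-1)\cdot 1=O(1-x)$.
\item In between, $(\frac1x-1)\Pr[X_j<xX_i]$ is controlled by $(1-x)\Pr[X_j-X_i<-(1-x)\mu/2]$ plus exponentially small terms. Using a sub-Gaussian tail with scale $\sigma$, this should give $\sup_x(1-x)\Pr[\cdot]=O(\sigma/\mu)=O(k^{-1/3})$.
\end{itemize}
Hence $g(x)\le 1+O(k^{-1/3})$ uniformly on $(0,1]$. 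The constant majorant $\hat g\equiv 1+O(k^{-1/3})$ is nonincreasing and concave, so \Cref{lem:strong-delay}, or even just the Local Delay Theorem \Cref{lem:weak-delay}, gives ratio $1+O(k^{-1/3})$.

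\textbf{Main obstacle.} The hard step is the uniform-in-$x$ bound on $(\frac1x-1)F(xt)$ across all scales. The $h=k^{2/3}$ choice exactly balances exploration cost $\mu$ against misordering cost $\sigma/\mu$. I would first try bounding $F$ by the binomial tail $\Pr[\mathrm{Bin}(k,s)\ge h]$ and applying Bernstein's inequality, treating the regimes $s\le\mu/2$ and $s\ge\mu/2$ separately.

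\textbf{Lower bound.} Here I would use Yao's principle with a random instance of two job sizes, $1$ and $1+\varepsilon$ with $\varepsilon\approx k^{-1/3}$, each job independently equally likely to be either. The target is the tradeoff:
\begin{itemize}
\item to distinguish the two types, an algorithm must process a job for time $s$ with about $s\gtrsim\varepsilon^{-2}/k$ signals' worth of information. The signal counts in $[0,s]$ are binomial with rates differing by a factor $1+\varepsilon$, so distinguishing needs $sk\varepsilon^2\gtrsim1$;
\item exploring all jobs to depth $s$ costs about $s$ times $\OPT$ relative overhead, while not distinguishing costs about $\varepsilon$ relative overhead, since pairs are misordered at a constant rate.
\end{itemize}
Optimizing $\max(\dots)$ over $s$ gives $\min_s\max\{s,\varepsilon\}$ subject to $s\ge1/(k\varepsilon^2)$, hence $\varepsilon=k^{-1/3}$ and an $\Omega(k^{-1/3})$ excess. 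Formalizing this via a hypothesis-testing (total variation or Hellinger) bound per job, combined with the pairwise delay decomposition of $\OPT$, would complete the argument.
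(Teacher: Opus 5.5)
Your plan follows the paper's strategy in both directions: bound $g$ by a constant $1+O(k^{-1/3})$ and apply the Local Delay Theorem, and for the lower bound use two sizes with gap $\approx k^{-1/3}$ plus a binomial hypothesis-testing bound. But neither half is actually carried out, and one step of the upper bound fails as written.

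\textbf{Upper bound.} In your regime $x\le c=(\mu-\delta)/(\mu+\delta)\approx 1/3$ you bound $F(xt)\le F(x)$, which uses only $X_i\le 1$. Since $X$ concentrates near $\mu\approx k^{-1/3}$, we have $F(x)\approx 1$ for all $x\in[2\mu,1/3]$. There $\frac1x F(x)\approx 1/x\ge 3$, not $O(k^{-1/3})$. Your split off of $\{X_i>\mu+\delta\}\cup\{X_j<\mu-\delta\}$ also breaks for small $x$. These exponentially small probabilities get multiplied by $\frac1x-1$, which is unbounded as $x\to 0$. Both problems are repairable (keep the joint event, use $X_i\lesssim\mu$, and use $F(s)\le s\sup f$ to absorb the $1/x$). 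But the idea that makes the whole case analysis unnecessary is the pointwise inequality from the proof of \Cref{prop:beta-weak}. On $\{X_j<xX_i\}$ one has $\frac1x-1<\frac{X_i-X_j}{X_j}$, hence, uniformly in $x$,
\[
\Bigl(\tfrac1x-1\Bigr)\Pr[X_j<xX_i]\le \ex\!\left[\frac{(X_i-X_j)^+}{X_j}\right]\le\sqrt{\Var(X)\,\ex[X^{-2}]}.
\]
For $\mathrm{Beta}(h,k+1-h)$ this is $\approx\sqrt{(h/k^2)(k^2/h^2)}=h^{-1/2}\approx k^{-1/3}$. This is exactly your $\sigma/\mu$ heuristic, made rigorous in one line; the paper's upper bound is this plus $\ex[X_i]=\mu$.

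\textbf{Lower bound.} You give only a heuristic. Its cost accounting (``exploring all jobs to depth $s$ costs about $s$ times $\OPT$'') describes one particular strategy. It does not bound an arbitrary adaptive algorithm, which may explore jobs one at a time and finish short-looking ones immediately at no loss. The paper avoids any tradeoff over $s$ by setting the exploration depth equal to the gap:
\begin{itemize}
\item Take two jobs of sizes $1$ and $1+\tau$, $\tau=k^{-1/3}/2$, in random order. Call the algorithm misguided if the long job is the first to reach elapsed time $\tau$.
\item If misguided, it pays at least $\tau$ extra whatever it does afterwards. This gives ratio $\ge 1+\frac q3\tau-\frac29\tau^2$ (\cite[Lemma C.7]{BenomarCLS25}).
\item This decision can depend only on signals in $[0,\tau]$, so $q\ge\frac12-d_{\mathrm{TV}}$ (\cite[Lemma C.8]{BenomarCLS25}).
\item Conditioning on signal counts reduces the distance to $d_{\mathrm{TV}}(\mathrm{Bin}(k,\tau),\mathrm{Bin}(k,\frac{\tau}{1+\tau}))\le\sqrt{k\tau^3/2}\le\frac14$, via Pinsker and \Cref{lem:kl-ber}.
\end{itemize}
That single event, at depth equal to the gap, is what turns your heuristic into a proof.
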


\begin{proof}
For the upper bound, by the above discussion, the competitive ratio of \Cref{alg:etc} under $k$ uniform signals is equal to the competitive ratio of $\GREEDY$ under a single $\mathrm{Beta}(\ceil{k^{2/3}},k - \ceil{k^{2/3}} + 1)$-distributed signal. 
Applying \Cref{prop:beta-weak} with $\alpha = \ceil{k^{2/3}} \sim_{k\to\infty} k^{2/3}$ and $\beta = k - \ceil{k^{2/3}} + 1 \sim_{k\to\infty} k$ for sufficiently large $k$, we obtain that the competitive ratio is bounded by
\[
1 + O(k^{-1/3}) + \sqrt{\frac{\Theta(k^{11/3})}{\Theta(k^{13/3})}}
= 1 + O(k^{-1/3}) + \sqrt{O(k^{-2/3})}
= 1 + O(k^{-1/3}).
\]

The proof of the lower bound shares a common thread with \cite{BenomarCLS25}. Let $\tau = k^{-1/3}/2>0$. We consider a short job with $p_S = 1$, a long job with $p_L = 1+\tau$, and a scheduling instance with those two jobs where the identity of the short (resp.\ long) job is shuffled at random. Clearly, the optimal schedule in this instance satisfies $\OPT = 1+2+\tau = 3+\tau$. We now study the expected total completion time of an online algorithm $\ALG$.

We say that an online algorithm $\ALG$ is ``misguided'' if the first job to reach an elapsed time of $\tau$ is the long job. The competitive ratio of a scheduling algorithm on this instance is lower bounded by 
\[
1+\frac{q}{3}\tau - \frac{2}{9}\tau^2,
\]
where $q$ is the probability that the algorithm is misguided
\cite[Lemma C.7]{BenomarCLS25}. Also, by \cite[Lemma C.8]{BenomarCLS25}, the probability that an online algorithm is misguided satisfies $q \geq 1/2-d_{\mathrm{TV}}(X_S(\leq \tau),X_L(\leq \tau))$ where $d_{\mathrm{TV}}(\cdot,\cdot)$ is the total variation distance and where $X_S(\leq \tau)$ (resp.\ $X_L(\leq \tau)$) is defined as the list of signals for job $S$ (resp.\ $L$) that appeared before time $\tau$. Therefore, to conclude the proof it suffices to show that $d_{\mathrm{TV}}(X_S(\leq \tau),X_L(\leq \tau))$ is bounded away from $1/2$. We do this via the following sequence of equations
\begin{align}
    d_{\mathrm{TV}}(X_S(\leq \tau),X_L(\leq \tau)) &= d_{\mathrm{TV}}\!\left(
\mathrm{Bin}(k,\tau),
\mathrm{Bin}\!\left(k,\frac{\tau}{1+\tau}\right)
\right)\label{eq:cond}\\
&\leq \sqrt{\frac{1}{2}D_{\mathrm{KL}}\!\bigl(\mathrm{Bin}(k,\tau)\,\|\,\mathrm{Bin}(k,\tfrac{\tau}{1+\tau})\bigr)}\label{eq:pink}\\
&= \sqrt{\frac{k}{2} D_{\mathrm{KL}}\!\bigl(\mathrm{Ber}(\tau)\,\|\,\mathrm{Ber}(\tfrac{\tau}{1+\tau})\bigr)}\label{eq:ber}\\
&\leq \sqrt{\frac{k}{2}\tau^3}\label{eq:bernouilli-lemma}\\
&\leq \frac{1}{4}\label{eq:tauk}
\end{align}
where \eqref{eq:cond} uses the balls-in-urns conditioning trick \cite{mitzenmacher2017probability}, \eqref{eq:pink} is Pinsker's inequality,  \eqref{eq:ber} uses $D_{\mathrm{KL}}\!\bigl(\mathrm{Bin}(k,p)\,\|\,\mathrm{Bin}(k,r)\bigr)
=
k\,D_{\mathrm{KL}}\!\bigl(\mathrm{Ber}(p)\,\|\,\mathrm{Ber}(r)\bigr)$, \eqref{eq:bernouilli-lemma} is an application of \Cref{lem:kl-ber} below and \eqref{eq:tauk} follows from the definition of $\tau$. 
\end{proof}

\begin{restatable}[Bernoulli KL upper bound]{lemma}{KLBerUpperBound}
\label{lem:kl-ber}
For $p\in [0,1]$ and $r\in (0,1)$,
\[
D_{\mathrm{KL}}\!\bigl(\mathrm{Ber}(p)\,\|\,\mathrm{Ber}(r)\bigr)
=
p\log\frac{p}{r} + (1-p)\log\frac{1-p}{1-r}
\]
satisfies
\[
D_{\mathrm{KL}}\!\bigl(\mathrm{Ber}(p)\,\|\,\mathrm{Ber}(r)\bigr)
\le
\frac{(p-r)^2}{r(1-r)}.
\]
Consequently, for $\tau\in[0,1/4]$,
\[
D_{\mathrm{KL}}\!\bigl(\mathrm{Ber}(\tau)\,\|\,\mathrm{Ber}(\tfrac{\tau}{1+\tau})\bigr)
\le
\tau^3.
\]
\end{restatable}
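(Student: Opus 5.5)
The plan is to prove the general inequality via the standard comparison $\log u \le u-1$, and then specialize to $p=\tau$, $r=\tau/(1+\tau)$. The first step applies $\log u\le u-1$ to each of the two terms:
\[
p\log\frac{p}{r}\le p\Bigl(\frac{p}{r}-1\Bigr)=\frac{p^2}{r}-p,
\qquad
(1-p)\log\frac{1-p}{1-r}\le \frac{(1-p)^2}{1-r}-(1-p).
\]
The boundary cases $p\in\{0,1\}$ are covered by the convention $0\log 0=0$, and the bounds still hold there since the right-hand side is then nonnegative. Summing the two bounds gives
\[
D_{\mathrm{KL}}\le \frac{p^2}{r}+\frac{(1-p)^2}{1-r}-1 .
\]
A routine algebraic identity then shows
\[
\frac{p^2}{r}+\frac{(1-p)^2}{1-r}-1=\frac{(p-r)^2}{r(1-r)},
\]
which is the $\chi^2$-divergence. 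To verify it, put everything over the common denominator $r(1-r)$; the numerator becomes $p^2(1-r)+(1-p)^2r-r(1-r)=p^2-2pr+r^2$.

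For the consequence, I substitute $p=\tau$ and $r=\tau/(1+\tau)$. This gives $p-r=\tau^2/(1+\tau)$ and $r(1-r)=\tau/(1+\tau)^2$, so
\[
\frac{(p-r)^2}{r(1-r)}=\frac{\tau^4/(1+\tau)^2}{\tau/(1+\tau)^2}=\tau^3 .
\]
This holds for every $\tau>0$, and the case $\tau=0$ is trivial because both sides vanish. The restriction $\tau\le 1/4$ is only needed so that $p=\tau$ lies in $[0,1]$ and $r\in(0,1)$, and it matches the later use where $\tau=k^{-1/3}/2$.

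I expect no real obstacle here. The only points needing care are the boundary conventions for $p\in\{0,1\}$ and checking the $\chi^2$ identity, and both are routine.
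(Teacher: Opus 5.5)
Your proposal is correct and follows essentially the same route as the paper: you apply $\log u \le u-1$ to each term, show the result equals $\frac{(p-r)^2}{r(1-r)}$, and substitute $p=\tau$, $r=\tau/(1+\tau)$. The paper simplifies by factoring out $(p-r)$ where you use a common denominator, and your handling of $p\in\{0,1\}$ and of $\tau=0$ (where $r=0$ lies outside $(0,1)$) is slightly more careful than the paper's.
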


\begin{proof}
By definition,
\[
D_{\mathrm{KL}}\!\bigl(\mathrm{Ber}(p)\,\|\,\mathrm{Ber}(r)\bigr)
=
p\log\frac{p}{r}
+
(1-p)\log\frac{1-p}{1-r}.
\]
Using $\log x\le x-1$ for all $x>0$, we get
\[
p\log\frac{p}{r}
\le
p\left(\frac{p}{r}-1\right)
=
\frac{p(p-r)}{r},
\]
and
\[
(1-p)\log\frac{1-p}{1-r}
\le
(1-p)\left(\frac{1-p}{1-r}-1\right)
=
\frac{(1-p)(r-p)}{1-r}.
\]
Therefore,
\begin{align*}
D_{\mathrm{KL}}\!\bigl(\mathrm{Ber}(p)\,\|\,\mathrm{Ber}(r)\bigr)
&\le
\frac{p(p-r)}{r}
+
\frac{(1-p)(r-p)}{1-r} \\
&=
(p-r)\left(\frac{p}{r}-\frac{1-p}{1-r}\right) \\
&=
\frac{(p-r)^2}{r(1-r)}.
\end{align*}
Applying the above inequality to $p= \tau$ and $r = \frac{\tau}{1+\tau}$, we get, 
\begin{align*}
D_{\mathrm{KL}}\!\bigl(\mathrm{Ber}(\tau)\,\|\,\mathrm{Ber}(\tfrac{\tau}{1+\tau})\bigr)\leq \frac{\left(\frac{\tau^2}{1+\tau}\right)^2}{\frac{\tau}{({1+\tau})^2}}  = \tau^3
\end{align*}
\end{proof}

\section{Conclusion}

We introduced stochastic clairvoyance, a new beyond-worst-case model for 
online scheduling.
We showed that for various distributions, algorithms can break 
the nonclairvoyant lower bound of $2$ for various scheduling 
environments and objective functions.
This suggests that stochastic clairvoyance is not a one-time trick, but rather points to a new family of beyond-worst-case models.
We believe that our work is a starting point for finding similar practically relevant models for other fundamental online problems, such as optimal stopping, allocation, or caching problems.

\newpage 
\appendix
\crefalias{section}{appendix}
\crefalias{subsection}{appendix}
\crefalias{subsubsection}{appendix}
\section*{Appendices}

\section{Appendix for \Cref{sec:global-delay} (Global Delay Theorem)}
\label{app:delay-theorem}

We also show a weighted version of the Global Delay Theorem. 

\begin{theorem}[Weighted Global Delay Theorem]
\label{thm:weighted-strong-delay}
Let $g:[0,1]\to[1,\infty)$ be nonincreasing and concave, and let
\[
    \rho:=\int_0^1g(t^2)\,\mathrm{d}t .
\]
Consider a single-machine nonidling scheduling algorithm $\alg$.  Suppose
that for every instance with positive processing times $p_j$ and positive
weights $w_j$, and for every two distinct jobs $i,j$ with
\[
    q_i:=\frac{p_i}{w_i}\le\frac{p_j}{w_j} =: q_j,
\]
it holds that
\[
    \ex\!\left[w_j d^\alg(i,j)+w_i d^\alg(j,i)\right]
    \le
    w_jp_i\,g\!\left(\frac{q_i}{q_j}\right).
\]
Then $\alg$ is $\rho$-competitive for minimizing
$\ex[\sum_j w_jC_j]$.
\end{theorem}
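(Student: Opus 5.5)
We mirror the proof of \Cref{lem:strong-delay}, replacing the uniform grid by one whose cell widths are proportional to the weights. Order the jobs by Smith's ratio, $q_1\le\dots\le q_n$ with $q_j=p_j/w_j$. Since $\alg$ is nonidling, $\sum_j w_jC_j=\sum_j w_jp_j+\sum_{i<j}\bigl(w_jd^\alg(i,j)+w_id^\alg(j,i)\bigr)$. The hypothesis then gives
\[
\ex\Bigl[\sum_j w_jC_j\Bigr]\le \sum_j w_jp_j+\sum_{i<j}w_iw_j\,q_i\,g(q_i/q_j).
\]
By Smith's rule, $\OPT=\sum_j w_jp_j+\sum_{i<j}w_iw_j q_i$.

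Let $W=\sum_j w_j$. Partition $[0,1)$ into consecutive intervals $I_j$ of length $w_j/W$ and define the nondecreasing step function $f(x)=q_j$ on $I_j$. Set $A=\int_0^1\int_x^1 f(x)g(f(x)/f(y))\,\mathrm{d}y\,\mathrm{d}x$ and $O=\int_0^1\int_x^1 f(x)\,\mathrm{d}y\,\mathrm{d}x$. On a rectangle $I_i\times I_j$ with $i<j$ the integrand is constant and the area is $w_iw_j/W^2$. On the triangular half of $I_j\times I_j$ the area is $w_j^2/(2W^2)$ and the ratio inside $g$ equals $1$. Hence
\[
W^2A=\sum_{i<j}w_iw_jq_ig(q_i/q_j)+\tfrac{g(1)}{2}\sum_j w_j^2q_j,
\]
where $w_j^2q_j=w_jp_j$. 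Similarly $W^2O=\OPT-\tfrac12 P_w$ with $P_w:=\sum_j w_jp_j$. If all $q_j=0$ the statement is trivial; otherwise $O>0$ and $f\in\mathcal F$.

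Now apply \Cref{lem:f-bound} to get $A\le\rho O$. This yields
\[
\ex\Bigl[\sum_j w_jC_j\Bigr]\le \rho\,\OPT+\tfrac{P_w}{2}\bigl(2-g(1)-\rho\bigr).
\]
Since $g\ge1$ and $g$ is nonincreasing, $\rho\ge g(1)\ge1$, so the last term is nonpositive, exactly as in \Cref{lem:strong-delay}.

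The step that needs the most care is checking that the diagonal self-delay terms match. The triangle contributions become $w_jp_j$ precisely because the interval widths are proportional to $w_j$ and the step values are $q_j$. With any other discretization the slack term would not reduce to a multiple of $P_w$.

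We should also confirm that the delay decomposition handles ties in $q$ and jobs with $p_j=0$ without trouble. Ties only change which of $i,j$ is labelled first; the hypothesis is symmetric there because then $q_i/q_j=1$. Zero-length jobs contribute nothing to either objective, so they can be dropped.
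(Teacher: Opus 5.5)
Your proposal is correct and follows the paper's proof essentially step for step. It uses the same weighted delay decomposition and Smith's-rule formula for $\OPT$, and the same step function $f=q_j$ on intervals of length $w_j/W$, so that off-diagonal rectangles give $w_iw_jq_i\,g(q_i/q_j)=w_jp_i\,g(q_i/q_j)$ and the diagonal triangles give $\tfrac{g(1)}{2}\sum_j w_jp_j$. It then applies \Cref{lem:f-bound} and concludes with $2-g(1)-\rho\le 0$, exactly as the paper does.
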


\begin{proof}
Relabel the jobs so that $q_1\le\cdots\le q_n$, where
$q_j=p_j/w_j$.  The weighted delay decomposition is
\[
    \ex[\ALG]
    =
    \sum_{j=1}^n w_jp_j
    +
    \sum_{1\le i<j\le n}
    \ex\!\left[w_j d^\alg(i,j)+w_i d^\alg(j,i)\right].
\]
Let $B:=\sum_{j=1}^n w_jp_j$.
By our hypothesis, we have
\[
    \ex[\ALG]
    \le
    B+
    \sum_{1\le i<j\le n}
    w_jp_i\,g\!\left(\frac{q_i}{q_j}\right).
\]
By Smith's rule, the optimum weighted completion time is
\[
    \OPT
    =
    B+\sum_{1\le i<j\le n}w_jp_i .
\]

Let $W:=\sum_{j=1}^n w_j$ and define
$a_0:=0$ and $a_j:=W^{-1}\sum_{k=1}^j w_k$ for $j\in[n]$.  Set
$I_j:=(a_{j-1},a_j]$, so that $|I_j|=w_j/W$, and define
$f:[0,1]\to[0,\infty)$ by $f(x)=q_j$ for $x\in I_j$.  Then $f$ is
nondecreasing. Set
\[
    N:=\int_0^1\int_x^1
    f(x)g\!\left(\frac{f(x)}{f(y)}\right)
    \,\mathrm{d}y\,\mathrm{d}x
\]
and
\[
    D:=\int_0^1\int_x^1 f(x)\,\mathrm{d}y\,\mathrm{d}x .
\]
The integrands are constant on each rectangle $I_i \times I_j$ with $i < j$, and on the triangular half of each square $I_j \times I_j$. Therefore
\[
    W^2N
    =
    \sum_{1\le i<j\le n}
    w_jp_i\,g\!\left(\frac{q_i}{q_j}\right)
    +
    \frac{g(1)}2B
\]
and
\[
    W^2D
    =
    \sum_{1\le i<j\le n}w_jp_i+\frac12B
    =
    \OPT-\frac12B .
\]
Applying \Cref{lem:f-bound} gives $N\le\rho D$, and therefore
\begin{align*}
\ex[\ALG]
&\le
W^2N+\left(1-\frac{g(1)}2\right)B\\
&\le
\rho\left(\OPT-\frac12B\right)
+\left(1-\frac{g(1)}2\right)B\\
&=
\rho\,\OPT+\frac{B}{2}\bigl(2-g(1)-\rho\bigr)
\le
\rho\,\OPT .
\end{align*}
The last inequality follows from $g(1)\ge1$ and
$\rho\ge g(1)$.
\end{proof}

We can use \Cref{thm:weighted-strong-delay} to show that $\GREEDY$ is $16/9$-competitive on 
a single machine
for minimizing the sum of weighted completion times. Note that 
in the weighted setting, $\GREEDY$ runs Weighted Round Robin for exploration 
where each job $j$ gets a rate $y_j(t) = \frac{w_j}{\sum_{i \in U(t)} w_i}$.

\section{Appendix for \Cref{sec:single-machine} (Single Machine Algorithms)}\label{ap:single-machine}

\subsection{Omitted proofs for \Cref{sec:lgreedy} ($\LGREEDY$)}\label{app:lambda-greedy}

\lemRobustDelays*

\begin{proof}

We distinguish the two cases (1) $p_j \ge \frac{p_i}{\lambda}$ and (2) $p_j < \frac{p_i}{\lambda}$.

\textbf{Case (1):} Assume $p_j \ge \frac{p_i}{\lambda}$. In this case, we can observe that $i$ always completes before $j$: By definition of the algorithm, the maximum amount of processing that $j$ can receive before $i$ completes is strictly less than $p_i + \frac{1-\lambda}{\lambda} p_i = \frac{p_i}{\lambda}$, where the first summand is caused by running both jobs in parallel during Round-Robin and the second summand is caused by the speed-up of $j$ after it emits its signal. Hence, the amount of processing that $j$ receives before $i$ completes does not suffice to complete $j$.

We analyze the pairwise delay between $i$ and $j$ by distinguishing between different configurations of $S_i$ and $S_j$.

\begin{enumerate}
    \item $S_i \ge \lambda p_i$: In this case, $i$ will complete during the speed-up phase after emitting its signal. The pairwise delay depends on whether $j$ emits earlier or later than $i$:
    \begin{enumerate}
         \item $S_j \ge S_i$: Since $j$ emits later than $i$, both jobs receive $S_i$ units of processing before $i$ emits. Hence,
        $$
            d(i,j) + d(j,i) = p_i + S_i.
        $$
        \item $S_j \le S_i$: Since $j$ emits before $i$, the pairwise delay increases by $\frac{1-\lambda}{\lambda} \cdot S_j$, which is the amount of processing that $j$ receives during its speedup, compared to the previous case. Hence,
        $$
            d(i,j) + d(j,i) = p_i + S_i + \frac{1-\lambda}{\lambda} \cdot S_j.
        $$
    \end{enumerate}
    \item $S_i < \lambda p_i$: In this case, $i$ does not complete during the speed-up phase after emitting its signal and, thus, will receive more processing while the algorithm executes Round-Robin after the speed-up phase. The pairwise delay again depends on whether $j$ emits its signal before $i$ completes:
    \begin{enumerate}
        \item $S_j > p_i - \frac{1-\lambda}{\lambda} S_i$: The right-hand side of this inequality is the amount of processing that $j$ receives while the algorithm works on $i$ \emph{and} $j$ while running Round-Robin, which is equal to the processing time of $i$ minus the amount of processing that $i$ receives during the speed-up. Since $S_j$ is strictly larger than this amount, $j$ does not emit its signal before $i$ completes. Hence,
        $$
        d(i,j) + d(j,i) = 2p_i - \frac{1-\lambda}{\lambda} \cdot S_i.
        $$        
        \item $S_j \le p_i - \frac{1-\lambda}{\lambda} S_i$: In contrast to the previous case, this inequality implies that $j$ emits its signal before $i$ completes. This increases the pairwise delay by $\frac{1-\lambda}{\lambda} \cdot S_j$, which is the amount of processing that $j$ receives during its speed-up. Hence,
         $$
        d(i,j) + d(j,i) = 2p_i - \frac{1-\lambda}{\lambda} \cdot S_i + \frac{1-\lambda}{\lambda} \cdot S_j.
        $$      
    \end{enumerate}
\end{enumerate}

Based on these four cases, we can compute the expected pairwise delay as follows:
\begin{align*}
    &\ex[d(i,j)+d(j,i)]\\
    &= \int_{S_i = \lambda p_i}^{p_i} \frac{1}{p_i} \left(\int_{S_j = 0}^{S_i} \frac{p_i+S_i+\frac{1-\lambda}{\lambda} \cdot S_j}{p_j} dS_j + \int_{S_j = S_i}^{p_j} \frac{p_i+S_i}{p_j} dS_j \right) dS_i\\
        &+ \int_{S_i = 0}^{\lambda p_i} \frac{1}{p_i} \left(\int_{S_j = 0}^{p_i-\frac{1-\lambda}{\lambda}S_i} \frac{2p_i-\frac{1-\lambda}{\lambda} S_i+\frac{1-\lambda}{\lambda}S_j}{p_j} dS_j + \int_{S_j = p_i-\frac{1-\lambda}{\lambda}S_i}^{p_j} \frac{2p_i-\frac{1-\lambda}{\lambda} S_i}{p_j} dS_j \right) dS_i\\
         &= \int_{S_i = \lambda p_i}^{p_i} \frac{1}{p_i} \left(p_i + S_i + \frac{1-\lambda}{2\lambda p_j}  S_i^2\right) dS_i\\
        &+ \int_{S_i = 0}^{\lambda p_i} \frac{1}{p_i} \left(2p_i - \frac{1-\lambda}{\lambda} S_i+\frac{1-\lambda}{2\lambda p_j}\left(p_i - \frac{1-\lambda}{\lambda} S_i\right)^2\right) dS_i\\    
        &= \frac{1-\lambda}{2}(\lambda+3)p_i + \frac{(1-\lambda)^2(1+\lambda+\lambda^2)}{6\lambda p_j}p_i^2 + \frac{\lambda(\lambda+3)}{2}p_i+\frac{(1-\lambda)(1+\lambda+\lambda^2)}{6p_j}p_i^2\\
        &= \frac{\lambda+3}{2} \cdot p_i + \frac{(1-\lambda)(1+\lambda+\lambda^2)}{6\lambda} \frac{p_i^2}{p_j}
\end{align*}

\textbf{Case (2):} Assume $p_j < \frac{p_i}{\lambda}$. In this case, it can happen that $j$ completes before $i$. As before, we start by analyzing the pairwise delay between $i$ and $j$ for different configurations of $S_i$ and $S_j$:

\begin{enumerate}
    \item $S_i \ge \lambda p_i$: In this case, $i$ will complete during the speed-up phase after emitting its signal. The pairwise delay depends on whether $j$ emits earlier or later than $i$:
    \begin{enumerate}
         \item $S_j \ge S_i$: Since $j$ emits later than $i$, both jobs receive $S_i$ units of processing before $i$ emits. Hence,
        $$
            d(i,j) + d(j,i) = p_i + S_i.
        $$
        
        \item $S_j \le \frac{\lambda (p_j-S_i)}{1-\lambda}$ and $S_i \ge \lambda p_j$: Note that the first condition is equivalent to $S_i + \frac{1-\lambda}{\lambda} S_j \le p_j$. Since $i$ completes during the speed-up, the term $S_i + \frac{1-\lambda}{\lambda} S_j$ describes the amount of processing that $j$ receives before $i$ completes. If this term is less than $p_j$, then $i$ completes before $j$. Hence, the pairwise delay is 
        $$
            d(i,j) + d(j,i) = p_i + S_i + \frac{1-\lambda}{\lambda} \cdot S_j.
        $$
        
        \item $S_j \le S_i$ and $S_i \le \lambda p_j$: The two conditions of this case imply that $S_i + \frac{1-\lambda}{\lambda} S_j \le p_j$. This allows us to argue as in the previous case, and conclude that
        $$
            d(i,j) + d(j,i) = p_i + S_i + \frac{1-\lambda}{\lambda} \cdot S_j.
        $$
        
        \item $S_i \ge \lambda p_j$ and $\lambda p_j \ge S_j \ge \frac{\lambda (p_j-S_i)}{1-\lambda}$: In this case, $j$ emits before $i$ but does not complete during its speed-up because of $S_j \le \lambda p_j$. However, we can observe that $j$ still completes before $i$ emits its signal. This is because if $j$ does not complete before $i$ emits, then $j$ receives $S_i + \frac{1-\lambda}{\lambda} \cdot S_j$ units of processing before $i$ emits its signal. However,
        $$
        S_i + \frac{1-\lambda}{\lambda} \cdot S_j \ge S_i +  (p_j - S_i)  \ge p_j,
        $$
        which implies that $j$ completes before $i$ emits its signal. Hence, the pairwise delay is
        $$
            d(i,j) + d(j,i) = 2p_j - \frac{1-\lambda}{\lambda} S_j.
        $$
        \item $S_i \ge S_j \ge \lambda p_j$: In this case, $j$ completes before $i$ during the speed-up phase after emitting its signal. Hence, the amount of processing that $i$ receives before $j$ completes is $S_j$. Thus,
        $$
            d(i,j) + d(j,i) = p_j + S_j.
        $$
    \end{enumerate}

    To summarize the contribution of these cases to the expected pairwise delay $\ex[d(j,i)+d(i,j)]$, we define
    \begin{align*}
        I_1 &=  \int_{S_i = \lambda p_i}^{p_i} \frac{1}{p_i} \left(\int_{S_j = S_i}^{p_j} \frac{p_i+S_i}{p_j} dS_j\right) dS_i + \int_{S_i = \lambda p_j}^{p_i} \frac{1}{p_i} \left(\int_{S_j = 0}^{ \frac{\lambda (p_j-S_i)}{1-\lambda}}\frac{p_i+S_i+\frac{1-\lambda}{\lambda} S_j}{p_j} dS_j\right) dS_i\\
    &+ \int_{S_i = \lambda p_i}^{\lambda p_j} \frac{1}{p_i} \left( \int_{S_j = 0}^{S_i} \frac{p_i+S_i+\frac{1-\lambda}{\lambda}S_j}{p_j} dS_j \right) dS_i + \int_{S_i = \lambda p_j}^{p_i} \frac{1}{p_i} \left( \int_{S_j = \frac{\lambda (p_j-S_i)}{1-\lambda}}^{\lambda p_j} \frac{2p_j-\frac{1-\lambda}{\lambda}S_j}{p_j} dS_j\right) dS_i\\
    &+ \int_{S_i = \lambda p_j}^{p_i} \frac{1}{p_i} \left( \int_{S_j = \lambda p_j}^{S_i} \frac{p_j+S_j}{p_j} dS_j\right) dS_i\\
    &= \frac{(-1 + \lambda) p_i \left((5 + 5\lambda + 2\lambda^2) p_i - 3(3 + \lambda) p_j \right)}{6 p_j}\\
    &-\frac{\lambda \left(-4 p_i^3 + 6 p_i^2 p_j + 3(-1 + \lambda)^2 p_i p_j^2 + \lambda(-3 + \lambda^2) p_j^3\right)}{6(-1 + \lambda) p_i p_j}\\
    &+ \frac{\lambda^2\left(-(4+\lambda)p_i^3 + 3 p_i p_j^2 + (1+\lambda)p_j^3\right)}{6 p_i p_j} -\frac{\lambda \left(p_i - \lambda p_j\right)^2 \left(p_i + (3 + 2\lambda) p_j\right)}{6(-1 + \lambda) p_i p_j}\\
    &+ \frac{\left(p_i - \lambda p_j\right)^2 \left(p_i + (3 + 2\lambda) p_j\right)}{6 p_i p_j}\\
    &= \frac{\left((\lambda^4 -2\lambda^3 +\lambda^2 -\lambda +4)p_i^3-3(\lambda^3+\lambda^2-3\lambda+4)p_i^2 p_j+3\lambda(2\lambda+1)p_i p_j^2-\lambda^2(2\lambda+1)p_j^3\right)}{6(\lambda-1)p_i p_j}
    \end{align*}

    \item $S_i < \lambda p_i$: In this case, $i$ will not complete during the speed-up phase after emitting its signal.
     \begin{enumerate}
        \item $S_j > p_i - \frac{1-\lambda}{\lambda} S_i$: The right-hand side of this inequality is the amount of processing that $j$ receives while the algorithm works on $i$ \emph{and} $j$ while running Round-Robin, which is equal to the processing time of $i$ minus the amount of processing that $i$ receives during the speed-up. Since $S_j$ is strictly larger than this amount, $j$ does not emit its signal before $i$ completes. Hence,
        $$
        d(i,j) + d(j,i) = 2p_i - \frac{1-\lambda}{\lambda} \cdot S_i.
        $$        
        \item $S_j < \frac{\lambda(p_j-p_i)}{1-\lambda} + S_i$ and $S_i \le \frac{\lambda(p_i - \lambda p_j)}{1- \lambda}$: Since 
        $$
        S_j < \frac{\lambda(p_j-p_i)}{1-\lambda} + S_i \le \frac{\lambda(p_j-p_i)}{1-\lambda} + \frac{\lambda(p_i - \lambda p_j)}{1- \lambda} = \lambda p_j,
        $$
        job $j$ will not complete during its speed-up. The maximum amount of processing that $j$ can receive before $i$ completes is $p_i - \frac{1-\lambda}{\lambda} S_i + \frac{1-\lambda}{\lambda} S_j$. Since
        $$
        p_i - \frac{1-\lambda}{\lambda} S_i + \frac{1-\lambda}{\lambda} S_j < p_i - \frac{1-\lambda}{\lambda} S_i + p_j-p_i + \frac{1-\lambda}{\lambda} S_i = p_j,
        $$
        job $i$ completes before job $j$. Hence,
        $$
        d(i,j) + d(j,i) = 2p_i - \frac{1-\lambda}{\lambda} S_i +  \frac{1-\lambda}{\lambda} S_j.
        $$

        \item  $S_j < p_i - \frac{1-\lambda}{\lambda} S_i$ and $S_i \ge \lambda p_i - \frac{\lambda^2(p_j - p_i)}{1 - \lambda} $: Since
        $$
        S_j < p_i - \frac{1-\lambda}{\lambda} S_i \le p_i - (1-\lambda) p_i + \lambda p_j - \lambda p_i = \lambda p_j,
        $$
        job $j$ does not complete during its speed-up. The maximum amount of processing that $j$ can receive before $i$ completes is $p_i - \frac{1-\lambda}{\lambda} S_i + \frac{1-\lambda}{\lambda} S_j$. Since
        $$
        p_i - \frac{1-\lambda}{\lambda} S_i + \frac{1-\lambda}{\lambda} S_j < p_j,
        $$
        job $i$ completes before job $j$. Hence,
        $$
        d(i,j) + d(j,i) = 2p_i - \frac{1-\lambda}{\lambda} S_i +  \frac{1-\lambda}{\lambda} S_j.
        $$
        \item  $p_i-\frac{1-\lambda}{\lambda}S_i \ge S_j \ge \lambda \cdot p_j$ and $S_i \le \frac{\lambda p_i -\lambda^2p_j}{1-\lambda}$: Since $S_j \ge \lambda p_j$, job $j$ finishes during the speed-up phase after emitting its signal. Furthermore, $S_j \le p_i-\frac{1-\lambda}{\lambda} S_i$ implies that $i$ does not complete before $j$ emits. Note that the condition $p_i-\frac{1-\lambda}{\lambda}S_i \ge S_j \ge \lambda \cdot p_j$ can only hold if  $S_i \le \frac{\lambda p_i -\lambda^2p_j}{1-\lambda}$. The pairwise delay is
        $$
            d(i,j)+ d(j,i) = p_j + S_j + \frac{1-\lambda}{\lambda}S_i.
        $$
        \item $\frac{\lambda(p_j-p_i)}{1-\lambda} + S_i \le  S_j < \lambda p_j$ and $S_i \le \frac{\lambda p_i -\lambda^2p_j}{1-\lambda}$: Since $S_j < \lambda p_j$, job $j$ does not complete during the speed-up phase after emitting its signal. The maximum amount of processing that $i$ can receive before $j$ completes is 
        $$p_j - \frac{1-\lambda}{\lambda} S_j + \frac{1-\lambda}{\lambda} S_i < p_j -(1 -\lambda) p_j + p_i - \lambda p_j = p_i.$$
        Therefore, $i$ does not complete before $j$ completes. The pairwise delay is
        $$
        d(i,j)+ d(j,i) = 2p_j - \frac{1-\lambda}{\lambda} S_j+ \frac{1-\lambda}{\lambda}S_i.
        $$
    \end{enumerate}
    To summarize the contribution of these remaining cases to the expected pairwise delay $\ex[d(j,i)+d(i,j)]$, we define
    \begin{align*}
    I_2 &=  \int_{S_i = 0}^{\lambda p_i} \frac{1}{p_i} \left(\int_{S_j = p_i - \frac{1-\lambda}{\lambda}S_i}^{p_j} \frac{2p_i-\frac{1-\lambda}{\lambda}S_i}{p_j} dS_j\right) dS_i\\
    &+ \int_{S_i=0}^{\frac{\lambda(p_i - \lambda p_j)}{1- \lambda}} \frac{1}{p_i} \left(\int_{S_j=0}^{ \frac{\lambda(p_j-p_i)}{1-\lambda} + S_i} \frac{2p_i - \frac{1-\lambda}{\lambda} S_i + \frac{1-\lambda}{\lambda}S_j}{p_j} dS_j\right) dS_i\\
    &+ \int_{S_i= \lambda p_i - \frac{\lambda^2(p_j - p_i)}{1 - \lambda}}^{\lambda p_i} \frac{1}{p_i} \left(\int_{S_j=0}^{p_i - \frac{1-\lambda}{\lambda}S_i} \frac{ 2p_i - \frac{1-\lambda}{\lambda} S_i +  \frac{1-\lambda}{\lambda} S_j}{p_j} dS_j\right) dS_i\\
    &+ \int_{S_i=0}^{ \frac{\lambda p_i -\lambda^2p_j}{1-\lambda}} \frac{1}{p_i} \left(\int_{S_j=\lambda p_j}^{p_i-\frac{1-\lambda}{\lambda}S_i} \frac{p_j + S_j + \frac{1-\lambda}{\lambda}S_i}{p_j} dS_j\right) dS_i\\
    &+ \int_{S_i=0}^{\frac{\lambda p_i -\lambda^2p_j}{1-\lambda}} \frac{1}{p_i} \left(\int_{S_j=\frac{\lambda(p_j-p_i)}{1-\lambda} + S_i}^{\lambda p_j} \frac{2p_j - \frac{1-\lambda}{\lambda} S_j+ \frac{1-\lambda}{\lambda}S_i}{p_j} dS_j\right) dS_i \\
    &= -\frac{\lambda p_i \left((5 + 5\lambda + 2\lambda^2)p_i - 3(3 + \lambda)p_j\right)}{6 p_j}\\
    &+ \frac{\lambda^2 \left(-4 p_i^3 + 6 p_i^2 p_j + 3(-1+\lambda)^2 p_i p_j^2 + \lambda(-3+\lambda^2) p_j^3\right)}{6(-1+\lambda)^2 p_i p_j} + \frac{\lambda^3 ((4 + \lambda) p_i^3 - 3 p_i p_j^2 - (1 + \lambda) p_j^3)}{6(-1 + \lambda) p_i p_j}\\
    &-\frac{\lambda \left(p_i - \lambda p_j\right)^2 \left(2 p_i + (3 + \lambda) p_j\right)}{6(-1 + \lambda) p_i p_j}+ \frac{\lambda^2 \left(p_i - \lambda p_j\right)^2 \left(2 p_i + (3 + \lambda) p_j\right)}{6(-1 + \lambda)^2 p_i p_j}\\
    &= \frac{\lambda\left((-\lambda^4+2\lambda^3-\lambda^2+\lambda-3)p_i^3+3(\lambda^3+\lambda^2-4\lambda+4)p_i^2 p_j-3\lambda(\lambda+1)p_i p_j^2+\lambda^2(\lambda+1)p_j^3\right)}{6(\lambda-1)^2 p_i p_j}
\end{align*}
\end{enumerate}

The expected pairwise delay is
\begin{align*}
&\ex[d(i,j)+d(j,i)]\\ 
&= I_1 + I_2\\
&=\frac{(-\lambda^4+2\lambda^3-\lambda^2+2\lambda-4)\,p_i^3 +3(\lambda^3-3\lambda+4)p_i^2 p_j +3\lambda(\lambda^2-2\lambda-1)p_i p_j^2 +(-\lambda^4+2\lambda^3+\lambda^2)p_j^3}{6(\lambda-1)^2p_ip_j}
\end{align*}
\end{proof}

\begin{restatable}{lemma}{singlemachinebound}
\label{lem:single-machine-bound}
For $\lambda = 0.038$ and $\hat g_\lambda$ as defined in Equation \eqref{eq:ghat}, we have $\int_0^1 \hat g_{\lambda}(t^2)\,dt < 1.752$.
\end{restatable}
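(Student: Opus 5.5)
This is an explicit numerical claim, so the plan is to make the integral $\int_0^1 \hat g_\lambda(t^2)\,dt$ computable in closed form, up to locating one point $x^*_\lambda$. By \eqref{eq:ghat}, $\hat g_\lambda$ is constant equal to $g_\lambda(x^*_\lambda)$ on $[0,x^*_\lambda]$ and equals $g_\lambda$ on $[x^*_\lambda,1]$. Since $x^*_\lambda\ge\lambda$, only the second branch of \Cref{lem:robust-delays} matters on $[x^*_\lambda,1]$. On that range,
\[
g_\lambda(x)=a x+b+c x^{-1}+d x^{-2},
\]
with the explicit coefficients from the lemma. Substituting $x=t^2$ and setting $s^*:=\sqrt{x^*_\lambda}$, we get
\[
\int_0^1 \hat g_\lambda(t^2)\,dt = s^*\, g_\lambda(x^*_\lambda) + \Bigl[\tfrac{a}{3}t^3 + b t - c\,t^{-1} - \tfrac{d}{3}t^{-3}\Bigr]_{s^*}^{1}.
\]
This is an elementary function of $s^*$ and $\lambda$.

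The first step is to evaluate the coefficients at $\lambda=0.038$. Here $(\lambda-1)^2=0.925444$, so $6(\lambda-1)^2\approx 5.55266$. Then:
\begin{itemize}
\item $a\approx(-4+0.076-0.001444+0.00011\ldots)/5.55266$;
\item $b=3(\lambda^3-3\lambda+4)/5.55266$;
\item $c=3\lambda(\lambda^2-2\lambda-1)/5.55266$, which is negative;
\item $d=\lambda^2(-\lambda^2+2\lambda+1)/5.55266$, which is small and positive.
\end{itemize}
I will record each to roughly six significant digits, together with rigorous rational enclosures.

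The second step is to locate $x^*_\lambda$, which solves $g_\lambda'(x)=a-c x^{-2}-2d x^{-3}=0$, i.e.\ $a x^3 - c x - 2d=0$ on $[\lambda,1]$ (its existence and uniqueness are given by \Cref{lem:lambda:greedy:maximizer:existence}). I will bracket the root in a short rational interval $[x_-,x_+]$ by checking the sign of this cubic at the endpoints. I also need to control the error from not knowing $x^*_\lambda$ exactly. The cleanest way: for any $\tilde x$, the function equal to $\max_{[0,\tilde x]}g_\lambda$ on $[0,\tilde x]$ and $g_\lambda$ afterwards dominates $\hat g_\lambda$. Taking $\tilde x = x_+$ and bounding the flat value by $g_\lambda(x^*_\lambda)\le g_\lambda(x_-)+|g_\lambda'|_{\max}(x_+-x_-)$, or alternatively by plugging in an upper bound on $g_\lambda$ over $[x_-,x_+]$ obtained term by term via monotonicity, gives a rigorous upper bound.

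The final step is to evaluate the resulting expression with interval arithmetic and confirm it is below $1.752$. The main obstacle is numerical rigor, since the claimed margin may be small. I would carry enough digits (say $10^{-6}$ precision on the coefficients and on $s^*$) and bound every term in the safe direction. Sign bookkeeping needs care: the $-c\,t^{-1}$ and $-\frac{d}{3}t^{-3}$ terms evaluated at the lower limit $s^*$ are sensitive to $s^*$, so each should be bounded using the appropriate endpoint of $[\sqrt{x_-},\sqrt{x_+}]$. If the margin turns out too tight, I would shrink the bracket for $x^*_\lambda$ further with a few bisection steps.
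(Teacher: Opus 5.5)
Your plan is correct and is essentially the paper's argument: cap $g_\lambda$ at its maximizer, integrate $ax+b+cx^{-1}+dx^{-2}$ at $x=t^2$ in closed form, and evaluate numerically. Your domination step is sound because you take $\tilde x=x_+\ge x^*_\lambda$; it would not hold for $\tilde x<x^*_\lambda$, so the phrase ``for any $\tilde x$'' should be restricted accordingly.

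The paper controls rounding differently. It replaces $g_\lambda$ by the rounded-up majorant $\bar g_\lambda(x)=-0.7069x+2.0996-0.0220x^{-1}+0.0003x^{-2}$ and caps at the maximizer of $\bar g_\lambda$, which lies in $[0.1607,0.1608]$. That leaves a margin of only about $10^{-5}$ below $1.752$. Your exact coefficients with the true maximizer $x^*_\lambda\approx 0.1623$ give roughly $1.7512$, so your interval-arithmetic version has considerably more slack.
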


\begin{proof}
For $\lambda = 0.0380$, let
\[
\bar g_\lambda(x) = -0.7069\,x + 2.0996 - 0.0220\,x^{-1} + 0.0003\,x^{-2},
\qquad x \geq \lambda.
\]
By construction, $\bar g_\lambda(x) \ge g_\lambda(x)$ on $[\lambda,1]$. Maximizing $\bar g_\lambda$ over $x \in [\lambda,1]$ yields a maximizer
\[
x^* \in [0.1607,\,0.1608].
\]
Fix any $x^*$ in this interval. Then $x^* \ge \lambda$ and
\[
\bar g_\lambda(x^*) \le 1.8608.
\]
Moreover,
\[
\hat g_\lambda(t^2) \le
\begin{cases}
\bar g_\lambda(x^*), & 0 \le t < \sqrt{x^*},\\
\bar g_\lambda(t^2), & \sqrt{x^*} \le t \le 1.
\end{cases}
\]
Therefore,
\[
\rho \le \int_0^{\sqrt{x^*}} \bar g_\lambda(x^*)\,dt
  + \int_{\sqrt{x^*}}^1 \bar g_\lambda(t^2)\,dt.
\]
Choosing $x^* = 0.1608$ and substituting the expression for $\bar g_\lambda$ gives
\[
\rho \le  1.8608 \cdot \sqrt{x^*}
+ \int_{\sqrt{x^*}}^1 \left(
-0.7069\,t^2
+ 2.0996
- 0.0220\,t^{-2}
+ 0.0003\,t^{-4}
\right) dt.
\]
Evaluating the right-hand side gives $\rho < 1.7520$.
\end{proof}

We conclude the proof of~\Cref{thm:lambda-greedy} by showing that $\hat g_\lambda$ indeed has the required form to apply the Global Delay Theorem. As outlined in~\Cref{sec:lgreedy}, it suffices to analyze the piece of $g_{\lambda}(x)$ with $x \in (\lambda,1]$. Hence, we restrict ourselves to this piece and, for the sake of convenience, rewrite it as
$$
g_\lambda(x)=A x + B + Cx^{-1} + D x^{-2}
$$
with 
\begin{itemize}
    \item $A = \frac{-\lambda^4+2\lambda^3-\lambda^2+2\lambda-4}{6(\lambda-1)^2}$,
    \item $B = \frac{3(\lambda^3-3\lambda+4)}{6(\lambda-1)^2}$,
    \item $C = \frac{3\lambda(\lambda^2-2\lambda-1)}{6(\lambda-1)^2}$ and
    \item $D = \frac{-\lambda^4+2\lambda^3+\lambda^2}{6(\lambda-1)^2}$.
\end{itemize}

\begin{lemma}
    \label{lem:lambda:greedy:gconcave}
    $g_{\lambda}$ is strictly concave on $(\lambda,\infty)$.
\end{lemma}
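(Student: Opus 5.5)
We need to show $g_\lambda(x)=Ax+B+Cx^{-1}+Dx^{-2}$ is strictly concave on $(\lambda,\infty)$ for $\lambda\in(0,1)$. The plan is a direct second-derivative computation:
\[
g_\lambda''(x)=\frac{2C}{x^{3}}+\frac{6D}{x^{4}}=\frac{2}{x^{4}}\bigl(Cx+3D\bigr).
\]
Strict concavity on $(\lambda,\infty)$ is therefore equivalent to $Cx+3D<0$ for all $x>\lambda$.

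First I determine the sign of $C$. The factor $6(\lambda-1)^2$ is positive, and $\lambda^2-2\lambda-1=(\lambda-1)^2-2<0$ for $\lambda\in(0,1)$. Hence $C=\frac{\lambda(\lambda^2-2\lambda-1)}{2(\lambda-1)^2}<0$. So $x\mapsto Cx+3D$ is strictly decreasing, and it suffices to check the left endpoint, i.e.\ $C\lambda+3D\le 0$.

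Next I substitute and clear the common positive factor $\frac{\lambda^2}{2(\lambda-1)^2}$:
\[
C\lambda+3D=\frac{\lambda^2\bigl[(\lambda^2-2\lambda-1)+(-\lambda^2+2\lambda+1)\bigr]}{2(\lambda-1)^2}=0,
\]
using $3D=\frac{\lambda^2(-\lambda^2+2\lambda+1)}{2(\lambda-1)^2}$. Thus $Cx+3D=C(x-\lambda)<0$ for every $x>\lambda$, which gives $g_\lambda''(x)<0$ there and hence strict concavity on $(\lambda,\infty)$.

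The only delicate step is this algebraic identity $D=-C\lambda/3$, i.e.\ that $D$ factors as $\lambda^2(1+2\lambda-\lambda^2)/(6(\lambda-1)^2)$. I would double-check it against the coefficients from \Cref{lem:robust-delays}. If it failed, I would fall back on showing $C\lambda+3D\le0$ as a polynomial inequality on $(0,1)$. As it stands, the endpoint value being exactly zero also explains why the open interval $(\lambda,\infty)$ is the natural domain.
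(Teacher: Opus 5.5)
Your proof is correct and takes essentially the same route as the paper. The paper also computes $g_\lambda''(x)=2Cx^{-3}+6Dx^{-4}$ and writes $C=3\lambda k$, $D=-\lambda^2 k$ with $k=\frac{\lambda^2-2\lambda-1}{6(\lambda-1)^2}<0$, which is exactly your identity $3D=-C\lambda$; it then concludes $g_\lambda''(x)=\frac{6\lambda k}{x^4}(x-\lambda)<0$ for $x>\lambda$.
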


\begin{proof}
    The second derivative of $g_\lambda$ is
    $$
    g''_{\lambda}(x) = 2Cx^{-3} + 6Dx^{-4}.
    $$
    We can rewrite the remaining coefficients as $C =3\lambda k$ and $D= -\lambda^2 k$ for $k=\frac{\lambda^2-2\lambda-1}{6(\lambda-1)^2}$, and simplify the second derivative to
    $$
    g''_{\lambda}(x) = \frac{6\lambda k}{x^4} (x-\lambda).
    $$
    For $x = \lambda$, we have $g''_{\lambda}(\lambda)=0$. For $x > \lambda$, we can observe that $\lambda \in (0,1)$ implies $k < 0$. Hence, $g''_{\lambda}(x) < 0$.
\end{proof}

\begin{lemma}
    \label{lem:lambda:greedy:maximizer:existence}
    The function $g_{\lambda}$ has a unique global maximizer $x^*_{\lambda} \in (\lambda,1)$. 
\end{lemma}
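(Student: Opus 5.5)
The plan is to split $[0,1]$ at $\lambda$ and treat the two pieces of $g_\lambda$ from \Cref{lem:robust-delays} separately. Then we show the maximum over $[0,\lambda]$ is attained at $x=\lambda$, and that the maximum over $[\lambda,1]$ is attained at a unique interior point. We assume $\lambda\in(0,1)$ throughout.

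\emph{The piece $[0,\lambda]$.} Here $g_\lambda$ is affine with slope $\frac{(1-\lambda)(1+\lambda+\lambda^2)}{6\lambda}>0$. So on $[0,\lambda]$ it is strictly increasing and its supremum is $g_\lambda(\lambda)$. I will verify by direct substitution that the two formulas agree at $x=\lambda$; this is expected since the case split $p_j\gtrless p_i/\lambda$ describes a continuous expected delay. Hence no point of $[0,\lambda)$ can be a global maximizer.

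\emph{The piece $[\lambda,1]$.} Write $g_\lambda(x)=Ax+B+Cx^{-1}+Dx^{-2}$ with $C=3\lambda k$, $D=-\lambda^2k$ and $k=\frac{\lambda^2-2\lambda-1}{6(\lambda-1)^2}<0$, as in the proof of \Cref{lem:lambda:greedy:gconcave}. By that lemma $g_\lambda$ is strictly concave on $(\lambda,1]$, so it has at most one maximizer there. It therefore suffices to show $g'_\lambda(\lambda)>0$ and $g'_\lambda(1)<0$. Then $g'_\lambda$, which is continuous and strictly decreasing on $(\lambda,1]$, has a unique zero $x^*_\lambda\in(\lambda,1)$. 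This point is the unique maximizer on $[\lambda,1]$, and by the previous paragraph it is also the unique global maximizer on $[0,1]$.

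\emph{Sign checks (the main computation).} We have $g'_\lambda(x)=A-Cx^{-2}-2Dx^{-3}$.

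At $x=\lambda$ this gives $g'_\lambda(\lambda)=A-3k/\lambda+2k/\lambda=A-k/\lambda$. Multiplying by $6\lambda(1-\lambda)^2>0$, the sign equals that of
\[
\lambda(-\lambda^4+2\lambda^3-\lambda^2+2\lambda-4)+(1+2\lambda-\lambda^2)=-\lambda^5+2\lambda^4-\lambda^3+\lambda^2-2\lambda+1 .
\]
This factors as $(1-\lambda)^2(1-\lambda^3)$, which is positive for $\lambda\in(0,1)$.

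At $x=1$ we get $g'_\lambda(1)=A-3\lambda k+2\lambda^2k$. Multiplying by $6(1-\lambda)^2$, the sign equals that of
\[
(-\lambda^4+2\lambda^3-\lambda^2+2\lambda-4)-(3\lambda-2\lambda^2)(\lambda^2-2\lambda-1)=\lambda^4-5\lambda^3+3\lambda^2+5\lambda-4 .
\]
This factors as $(\lambda-1)^2(\lambda-4)(\lambda+1)$, which is negative on $(0,1)$.

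I expect the main obstacle to be purely algebraic: getting these two polynomial factorizations right, and confirming continuity of $g_\lambda$ at $x=\lambda$. Both can be double-checked by evaluating at $\lambda=1$, where each polynomial vanishes to second order.
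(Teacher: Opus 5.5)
Your proof is correct and follows the paper's argument: strict concavity of the $x\ge\lambda$ piece (from the preceding lemma) combined with the sign checks $g'_\lambda(\lambda)=\frac{1-\lambda^3}{6\lambda}>0$ and $g'_\lambda(1)=\frac{(\lambda-4)(\lambda+1)}{6}<0$, which your two factorizations correctly reproduce. You also treat the increasing affine piece on $[0,\lambda]$ and continuity at $\lambda$ (both formulas equal $\frac{10+3\lambda-\lambda^3}{6}$ there, so the substitution you promise does check out). The paper uses these facts only in the subsequent majorant lemma, so including them here makes your ``global'' claim more self-contained.
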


\begin{proof}
    Since we already showed that $g_{\lambda}$ is strictly concave in $(\lambda,1)$, we can prove the existence of a unique global maximizer in $(\lambda,1)$ by showing that the first derivative $g'_{\lambda}$ changes its sign in $(\lambda,1)$. The first derivative is
    $$
    g'_{\lambda}(x) = A- C x^{-2} -2 D x^{-3}.
    $$

    Again using $k=\frac{\lambda^2-2\lambda-1}{6(\lambda-1)^2}$, we can rewrite
    $$
    g'_{\lambda}(x) = A-3\lambda k x^{-2} +2\lambda^2 k x^{-3}.
    $$

    We check the sign of $g'_{\lambda}(x)$ at $x = \lambda$ and $x = 1$:
    \begin{itemize}
        \item $g'_\lambda(\lambda) = A - \frac{k}{\lambda} = \frac{1-\lambda^3}{6\lambda}$, which is $>0$ for $\lambda \in (0,1)$.
        \item $g'_{\lambda}(1) = A + k\lambda(2\lambda-3) = \frac{(\lambda-4)(\lambda+1)}{6}$, which is $<0$ for $\lambda \in (0,1]$.
    \end{itemize}
\end{proof}

\begin{lemma}
    \label{lem:lambda:greedy:ghat}
    It holds that $\hat g_\lambda(x) \ge g_\lambda(x)$ for all $x \in [0,1]$.
\end{lemma}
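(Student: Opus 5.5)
The claim is immediate on $[x^*_\lambda,1]$, since there $\hat g_\lambda=g_\lambda$ by the definition in \eqref{eq:ghat}. So the work is to show $g_\lambda(x)\le g_\lambda(x^*_\lambda)$ for $x\in[0,x^*_\lambda)$. I split this range into $[\lambda,x^*_\lambda)$ and $[0,\lambda]$, using the two pieces of $g_\lambda$ from \Cref{lem:robust-delays}.

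\textbf{The piece $[\lambda,x^*_\lambda)$.} By \Cref{lem:lambda:greedy:maximizer:existence}, $x^*_\lambda\in(\lambda,1)$ is a maximizer of the second piece $g_\lambda(x)=Ax+B+Cx^{-1}+Dx^{-2}$. This piece is continuous on $[\lambda,1]$. By \Cref{lem:lambda:greedy:gconcave} it is strictly concave on $(\lambda,1)$, and $x^*_\lambda$ is its unique critical point there, since $g'_\lambda$ changes sign from positive to negative. Hence the piece is nondecreasing on $[\lambda,x^*_\lambda]$, and $g_\lambda(x)\le g_\lambda(x^*_\lambda)$ there. Continuity carries this to the endpoint $x=\lambda$.

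\textbf{The piece $[0,\lambda]$.} Here $g_\lambda(x)=\frac{\lambda+3}{2}+\frac{(1-\lambda)(1+\lambda+\lambda^2)}{6\lambda}x$ is affine with a nonnegative slope. So it is maximized at $x=\lambda$, where it equals $\frac{\lambda+3}{2}+\frac{(1-\lambda)(1+\lambda+\lambda^2)}{6}=\frac{\lambda+3}{2}+\frac{1-\lambda^3}{6}$.

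It remains to check that the two formulas agree at $x=\lambda$, so that $g_\lambda$ is continuous there and the first piece inherits the bound $g_\lambda(\lambda)\le g_\lambda(x^*_\lambda)$. Continuity is expected, because the case split in the proof of \Cref{lem:robust-delays} is at $p_j=p_i/\lambda$ and all case boundaries degenerate continuously. I would verify it algebraically anyway. Using $C=3\lambda k$ and $D=-\lambda^2k$, we get $C\lambda^{-1}+D\lambda^{-2}=3k-k=2k$. So the second piece at $\lambda$ equals $A\lambda+B+2k$. Then one simplifies the combined numerator over $6(\lambda-1)^2$ and compares it with $\frac{\lambda+3}{2}+\frac{1-\lambda^3}{6}$. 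Expanding $(\lambda-1)^2$ times the latter should match $\lambda(-\lambda^4+2\lambda^3-\lambda^2+2\lambda-4)+3(\lambda^3-3\lambda+4)+2(\lambda^2-2\lambda-1)$ divided by $6$.

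\textbf{Main obstacle.} The one real obstacle is this polynomial identity. If it happened to fail, I would instead show directly that the first piece's value at $\lambda$ is at most $g_\lambda(x^*_\lambda)$, which is still at least the second piece's value at $\lambda$.
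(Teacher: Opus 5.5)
Your proof is correct and follows the paper's argument. On $[x^*_\lambda,1]$ you use equality, on $[\lambda,x^*_\lambda]$ the maximizer property, and for $x\le\lambda$ the nondecreasing affine piece plus agreement of the two pieces at $x=\lambda$. The paper simply asserts that agreement; your identity does hold, since both sides reduce to $(-\lambda^5+2\lambda^4+2\lambda^3+4\lambda^2-17\lambda+10)/(6(\lambda-1)^2)$, so your fallback is not needed.
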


\begin{proof}
    For $x \in [\lambda,x_\lambda^*]$, we have $\hat g_\lambda(x) =g_\lambda(x^*_{\lambda}) \ge g_\lambda(x)$, where the inequality follows from~\Cref{lem:lambda:greedy:maximizer:existence}.

    For $x \in [x_\lambda^*,1]$, we have $\hat g_\lambda(x) = g_\lambda(x)$ by definition.

    To argue for $x \le \lambda$, we can first observe that the two pieces of~\Cref{lem:robust-delays} are equal at $x = \lambda$.  For $x \le \lambda$, $g_{\lambda}(x)$ is a linear increasing function. Hence, $g_\lambda(x) \le g_\lambda(\lambda) \le g_\lambda(x^*_{\lambda}) = \hat g_\lambda(x)$, where the second inequality again holds by~\Cref{lem:lambda:greedy:maximizer:existence}. 
\end{proof}

\paragraph{Analysis of $\LGREEDY$ via Local Delay Theorem.} For a fixed value of $\lambda$, if we want to analyze the competitiveness of $\LGREEDY$ via the Local Delay Theorem, then the best bound we can obtain is $\alpha_{\lambda} := \max_{x \in [0,1]} g_{\lambda}(x)$. Optimizing over $\lambda$, we can then obtain the bound $\alpha^* := \min_{\lambda \in (0,1]} \alpha_{\lambda}$. Numerical computations show that $\alpha^* \approx 1.833$ is obtained at $\lambda \approx 0.12012$. Hence, the Local Delay Theorem is strong enough to prove that $\LGREEDY$ has a competitive ratio $< 2$. However, stronger methods, such as the Global Delay Theorem, seem to be necessary to prove that $\LGREEDY$ has a smaller competitive ratio than $\GREEDY$.

\subsection{Omitted proofs for \Cref{sec:strong-sto} (Strong Signals)}\label{app:strong-signals}

Recall that a job $j$ is \emph{known} at time $t$ if $e_j(t) \ge S_j$, i.e., if $j$ already emitted its signal and, therefore, revealed its processing time. Otherwise, call $j$ unknown at $t$. Let $K(t)$ and $U(t)$ denote the sets of known and unknown jobs at time $t$. We analyze the expected pairwise delay.

\begin{lemma}
    \label{lem:boosting-case-delays}
    Consider two jobs $i$ and $j$ with $p_i \le p_j$ under $\LBOOSTING$ with parameter $0<\lambda\le 1$.
    \begin{itemize}
        \item If $p_j \ge \frac{p_i}{\lambda}$, then  $\ex[d(i,j) + d(j,i)] \le \frac{3 + \lambda^2}{2} \cdot p_i - \frac{(1 + 2\lambda^3)}{6} \frac{p_i^2}{p_j}$.
        \item If $p_j \le \frac{p_i}{\lambda}$, then  $\ex[d(i,j) + d(j,i)] \le \lambda^2 p_j+\frac{4 + \lambda^2}{2}p_i-\frac{\lambda^2 + \lambda^3}{2}\frac{p_j^2}{p_i}-\frac{\lambda^3 + 2}{3}\frac{p_i^2}{p_j}$.
    \end{itemize}
\end{lemma}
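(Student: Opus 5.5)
The plan is to analyze the pair $i,j$ in isolation and to argue that every interaction with a third job can only shrink $d(i,j)+d(j,i)$. While both $i$ and $j$ are unknown, they are processed only by Round-Robin over $U(t)$, so they have equal elapsed time. Boosting a third job pauses both of them and leaves their relative progress unchanged. A third job can also make the algorithm boost earlier, namely when the third job has smaller known $p$; this only makes $i$ or $j$ wait, and never gives one of them extra processing relative to the other before the first of the two completes. This mirrors the argument in \Cref{lem:pairwise-delay-greedy}. I would state it as a short monotonicity claim and then compute pathwise upper bounds on $D_{ij}:=d(i,j)+d(j,i)$ as functions of $(S_i,S_j)$. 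Finally I would integrate against the uniform densities $1/p_i$ and $1/p_j$.

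\textbf{Case $p_j\ge p_i/\lambda$.} First suppose $S_j<S_i$. Then $j$ becomes known at elapsed time $S_j$ and stops receiving Round-Robin processing. It can only be boosted once $\min_{U} e \ge \lambda p_j\ge p_i$. This never happens before $i$ becomes known, since $S_i\le p_i$. Once $i$ is known, $i$ has the smaller $p$, so it is boosted before $j$. Hence $D_{ij}=p_i+S_j$.

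Now suppose $S_i\le S_j$. Then $i$ becomes known at $S_i$, while $j$ stays in Round-Robin until either $j$'s own signal at $S_j$ or the boosting threshold $\lambda p_i$. Job $i$ is then boosted first because $p_i<p_j$. This gives $d(j,i)\le\min\{S_j,\max\{S_i,\lambda p_i\}\}$.

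In both subcases, $D_{ij}\le p_i+\min\{S_j,M\}$ with $M:=\max\{S_i,\lambda p_i\}$. I would write $\min\{S_j,M\}=M-(M-S_j)^+$, so that
\[
\ex[D_{ij}]\le p_i+\ex[M]-\frac{\ex[M^2]}{2p_j}.
\]
Here $M\le p_i\le p_j$, so $\ex[(M-S_j)^+\mid M]=M^2/(2p_j)$. Then $\ex[M]=\tfrac{1+\lambda^2}{2}p_i$ and $\ex[M^2]=\tfrac{1+2\lambda^3}{3}p_i^2$. Together these give exactly the first claimed bound.

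\textbf{Case $p_j< p_i/\lambda$.} Here $j$ can now be boosted before $i$ finishes. This happens when $j$ is known (so $S_j$ has passed), $i$ is still unknown, and the unknown elapsed time has reached $\lambda p_j$. I would split into subcases by the order of $S_i$ and $S_j$ and by the positions of $\lambda p_i$ and $\lambda p_j$.

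\begin{itemize}
\item If $S_i\le S_j$, or if $i$'s signal arrives before the threshold $\lambda p_j$ triggers, then $i$ is boosted first. The bound is $D_{ij}\le p_i+\min\{S_j,\max\{S_i,\lambda p_i\}\}$, as before.
\item If $S_j<S_i$ and $S_i>\lambda p_j$, then $j$ is boosted once $i$'s elapsed time reaches $\max\{S_j,\lambda p_j\}$. This gives $D_{ij}=p_j+\max\{S_j,\lambda p_j\}$.
\end{itemize}

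Integrating these piecewise expressions over the uniform square $[0,p_i]\times[0,p_j]$, with breakpoints at $\lambda p_i$, $\lambda p_j$, and $p_i$, should produce the second bound. Its terms are $\lambda^2p_j$, $p_j^2/p_i$, and $p_i^2/p_j$, and these are characteristic of such region integrals.

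The main obstacle is the second case. The region decomposition is error-prone, in particular deciding exactly which job is boosted first when both thresholds are active, and correctly handling the boundary $\lambda p_j\le p_i$. I would first verify the pathwise formulas on small cases, then carry out the integrals symbolically region by region, as done for $\LGREEDY$ in \Cref{lem:robust-delays}. Where an exact event is awkward, I would replace it by a pathwise upper bound rather than track it precisely, since the statement only requires an inequality.
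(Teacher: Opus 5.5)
Your proposal is correct and follows essentially the paper's route: two-job reduction, pathwise delays as functions of $(S_i,S_j)$, then integration. Your bounds, $D_{ij}\le p_i+\min\{S_j,\max\{S_i,\lambda p_i\}\}$ outside $B:=\{S_j<S_i,\ S_i>\lambda p_j\}$ and $D_{ij}=p_j+\max\{S_j,\lambda p_j\}$ on $B$, are a compressed form of the paper's five and eight subcases, and your $M-(M-S_j)^+$ identity reproduces the first bound. The Case 2 integral you left open does close: it equals the Case 1 expression plus $\ex\bigl[(p_j-p_i+(\lambda p_j-S_j)^+)\mathbf{1}_B\bigr]=\frac{p_i}{2}+\lambda^2p_j-\frac{\lambda^2+\lambda^3}{2}\frac{p_j^2}{p_i}-\frac{p_i^2}{2p_j}$, which gives exactly the second bound.
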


\begin{proof}
Consider two jobs $i$ and $j$ with $p_i \le p_j$. In the following, we analyze the expected pairwise delay $\ex[d(j,i)+d(i,j)]$ between the two jobs. To this end, we distinguish the two cases $p_j \le \frac{p_i}{\lambda}$ and $p_j \ge \frac{p_i}{\lambda}$.

Since $\LBOOSTING$ processes unknown jobs only by Round-Robin over $U(t)$, all unfinished unknown jobs have the same elapsed processing at all times. Thus, whenever $j$ is unknown, $\min_{j'\in U(t)} e_{j'}(t)=e_j(t)$, and analogously for $i$. Jobs other than $i$ and $j$ can only pause both jobs while some known job is boosted, or cause a known job with no larger processing time to be boosted first. Such events do not increase the amount of processing that one of $i$ and $j$ receives before the other completes. We break ties among equal-length known jobs according to the fixed order used to relabel the jobs. Hence the following two-job case analysis gives a valid upper bound on the pairwise delay in the full instance.

\textbf{Case 1:} Assume that $p_j \ge \frac{p_i}{\lambda}$. In this case, $j$ cannot be boosted before $i$ completes. Indeed, if $i$ is known and unfinished, then $p_i\le p_j$, so $\LBOOSTING$ boosts $i$ or another known job no longer than $i$ before $j$ (up to fixed tie-breaking). If $i$ is still unknown, then boosting $j$ would require
$\min_{j' \in U(t)} e_{j'}(t) \ge \lambda p_j \ge p_i$; by the invariant above, this implies $e_i(t)\ge p_i$, so $i$ has already completed. Thus $j$ is not boosted before $i$ completes, and $i$ completes before $j$ except possibly for harmless ties. We continue by analyzing the pairwise delay between $i$ and $j$ for different configurations of $S_i$ and $S_j$ (cf.~\Cref{fig:strong-signals:delay:cases:1} for an illustration of the configurations):

\begin{figure}[t]
    \centering
    \begin{tikzpicture}[scale = 0.8]
        \draw[thick] (0,0) rectangle (15,5);
        \draw[thick] (0,2.5) -- (15,2.5);
        \draw[thick] (0,0) -- (5,5);
        \draw[thick] (2.5,0) -- (2.5,2.5); 
    
\node at (-1,2.5){\large$S_i$};
    
        \node at (-0.25,-0.25){0};
        \node at (-0.25,5){$p_i$};
        \node at (-0.35,2.5){$\lambda p_i$};

\node at (7.5,-1){\large$S_j$};
    
        \node at (5,-0.25){$p_i$};
        \node at (2.5,-0.25){$\lambda p_i$};
        \node at (7.5,-0.25){$\lambda p_j$};
        \node at (15,-0.25){$p_j$};
    
\node at (0.8333,1.6666){1.};
        \node at (1.6666,0.8333){2.};
        \node at (2.5+0.5*12.5,1.25){3.};
        \node at (2.5+0.5*12.5,2.5+1.25){4.};
        \node at (2,2.5+1.25){5.};
    \end{tikzpicture}
    \caption{Illustration of the different subcases for $p_j \ge \frac{p_i}{\lambda}$ in the proof of~\Cref{lem:boosting-case-delays}. The figure assumes $\lambda = 0.5$ and $p_j = 3 \cdot p_i$.}
    \label{fig:strong-signals:delay:cases:1}
\end{figure}

\begin{enumerate}
    \item If $S_i \le \lambda \cdot p_i$ and $S_j \le S_i$, then $j$ is not processed again once it emits its signal until $i$ completes, because we already argued that $j$ can only be boosted once $i$ is completed.
    Hence,
    $$
    d(i,j)+d(j,i) \le p_i + S_j.
    $$
    \item If $S_i \le \lambda \cdot p_i$ and $ \lambda \cdot p_i \ge S_j \ge S_i$,  then $j$ is not processed again once it emits its signal until $i$ completes for the same reason as in the previous case. Hence,
    $$
    d(i,j)+d(j,i) \le p_i + S_j.
    $$
    \item If $S_i \le \lambda \cdot p_i$ and $ \lambda \cdot p_i \le S_j$, then we can observe that $j$ is not processed after its elapsed time reaches $\lambda \cdot p_i$ until $i$ completes: If $j \in U(t)$, $e_j(t) \ge \lambda \cdot p_i$ and $i \in K(t)$, then $i$ (or smaller known jobs than $i$) will be boosted to completion before the algorithm returns to round-robin. 
    Hence,
    $$
    d(i,j)+d(j,i) \le p_i + \lambda p_i.
    $$

    \item If $S_i \ge \lambda \cdot p_i$ and $S_j \ge S_i$, then $j$ will not be processed after $i$ emits its signal until $i$ completes for the same reason as in the previous case. The delay is
    $$
    d(i,j) + d(j,i) = p_i + S_i.
    $$
    
    \item If $S_i \ge \lambda \cdot p_i$ and $S_j \le S_i$, then $j$ will not be processed after it emits its signal until $i$ completes. The delay is
    $$
    d(i,j) + d(j,i) = p_i + S_j.
    $$
\end{enumerate}
Based on these cases, we can upper-bound the expected pairwise delay:
\begin{align*}
    \ex[d(i,j) + d(j,i)] &\le \int_{S_i = 0}^{\lambda p_i} \frac{1}{p_i} \cdot \left( \int_{S_j = 0}^{\lambda p_i } \frac{p_i + S_j}{p_j} dS_j +  \int_{S_j = \lambda p_i}^{p_j} \frac{p_i + \lambda p_i}{p_j} dS_j  \right) dS_i\\
    &+ \int_{S_i = \lambda p_i}^{p_i} \frac{1}{p_i} \left( \int_{S_j = 0}^{S_i} \frac{p_i + S_j}{p_j} dS_j +  \int_{S_j = S_i}^{p_j} \frac{p_i + S_i}{p_j} dS_j  \right) dS_i\\
    &= \lambda p_i(1+\lambda)-\frac{\lambda^3 p_i^2}{2p_j} + \frac{p_i}{2}(3 - 2\lambda - \lambda^2)-\frac{p_i^2}{6p_j}(1 - \lambda^3)\\
    &= \frac{3 + \lambda^2}{2} \cdot p_i - \frac{(1 + 2\lambda^3)}{6} \frac{p_i^2}{p_j}.
\end{align*}

\textbf{Case 2:} Assume that $p_j \le \frac{p_i}{\lambda}$. In this case, it might happen that $j$ completes before $i$ and $j$ might be boosted before $i$ completes. As before,  we continue by analyzing the pairwise delay between $i$ and $j$ for different configurations of $S_i$ and $S_j$ (cf.~\Cref{fig:strong-signals:delay:cases:2} for an illustration of the configurations). First, we can observe that if $S_i \le \lambda p_j$, then we end up with essentially the same five cases as before:
\begin{figure}[t]
    \centering
    \begin{tikzpicture}
        \draw[thick] (0,0) rectangle (9,6);
        \draw[thick] (0,0) -- (6,6);
        \draw[thick] (0,4.5) -- (9,4.5);
        \draw[thick] (4.5,4.5) -- (4.5,6);
        \draw[thick] (3,0) -- (3,4.5);
        \draw[thick] (3,3) -- (9,3);

\node at (-1,3){\large$S_i$};
    
          \node at (-0.25,-0.25){0};
          \node at (-0.25,6){$p_i$};
          \node at (-0.35,3){$\lambda p_i$};
          \node at (-0.35,4.5){$\lambda p_j$};

\node at (4.5,-1){\large$S_j$};
      
          \node at (6,-0.25){$p_i$};
          \node at (3,-0.25){$\lambda p_i$};
          \node at (4.5,-0.25){$\lambda p_j$};
          \node at (9,-0.25){$p_j$};

\node at (1.5,3){1.};
          \node at (2,1){2.};
          \node at (6,1.5){3.};
          \node at (6,3+0.75){4.};
          \node at (3+0.5,3+1){5.};
          \node at (2.25,4.5+0.75){6.};
          \node at (4.5+0.5,4.5+1){7.};
          \node at (4.5+2.25,4.5+0.75){8.};
    \end{tikzpicture}
    \caption{Illustration of the different subcases for $p_j \le \frac{p_i}{\lambda}$ in the proof of~\Cref{lem:boosting-case-delays}. The figure assumes $\lambda = 0.5$ and $p_j = 1.5 \cdot p_i$.}
    \label{fig:strong-signals:delay:cases:2}
\end{figure}
\begin{enumerate}
    \item If $S_i \le \lambda \cdot p_i$ and $S_j \le S_i$, then $j$ is not processed again once it emits its signal until $i$ completes. Hence,
    $$
    d(i,j)+d(j,i) \le p_i + S_j.
    $$
    \item If $S_i \le \lambda \cdot p_i$ and $ \lambda \cdot p_i \ge S_j \ge S_i$,  then $j$ is not processed again once it emits its signal until $i$ completes. Hence,
    $$
    d(i,j)+d(j,i) \le p_i + S_j.
    $$
    \item If $S_i \le \lambda \cdot p_i$ and $ \lambda \cdot p_i \le S_j$, then we can observe that $j$ is not processed after its elapsed time reaches $\lambda \cdot p_i$ until $i$ completes: If $j \in U(t)$, $e_j(t) \ge \lambda \cdot p_i$ and $i \in K(t)$, then $i$ (or smaller known jobs than $i$) will be boosted to completion before the algorithm returns to round-robin. 
    Hence,
    $$
    d(i,j)+d(j,i) \le p_i + \lambda p_i.
    $$

    \item If $\lambda \cdot p_j \ge S_i \ge \lambda \cdot p_i$ and $S_i \le S_j$, then $j$ will not be processed after $i$ emits its signal until $i$ completes.  Hence,
    $$
    d(i,j) + d(j,i) = p_i + S_i.
    $$

    \item If $\lambda \cdot p_j \ge S_i \ge \lambda \cdot p_i$ and $S_i \ge S_j$, then $j$ will not be processed after it emits its signal until $i$ completes. Hence,
    $$
    d(i,j) + d(j,i) = p_i + S_j.
    $$

\end{enumerate}
It remains to analyze the pairwise delay if $S_i \ge \lambda p_j$. In this case, whichever job emits its signal first will also complete first:
\begin{enumerate}
    \setcounter{enumi}{5}
    \item If $S_i \ge \lambda p_j$ and $S_j \le \lambda p_j$, then $i$ is not processed after its elapsed time reaches $\lambda p_j$ until $j$ completes: If $i \in U(t)$, $e_i(t) \ge \lambda \cdot p_j$ and $j \in K(t)$, then $j$ (or smaller known jobs than $j$) will be boosted to completion before the algorithm returns to round-robin. 
    In this case, the pairwise delay is
    $$
    d(i,j)+d(j,i) = p_j + \lambda p_j.
    $$
    \item If $S_i \ge \lambda p_j$ and $S_i \ge S_j \ge \lambda p_j$, then $i$ is not processed after its elapsed time reaches $S_j$ until $j$ completes.  In this case, the pairwise delay is
    $$
    d(i,j)+d(j,i) = p_j + S_j.
    $$
    \item If $S_i \ge \lambda p_j$ and $S_i \le S_j$, $j$ is not processed after its elapsed time reaches $S_i$ until $i$ completes. In this case, the pairwise delay is
    $$
    d(i,j)+d(j,i) = p_i + S_i.
    $$
\end{enumerate}

The expected pairwise delay is upper-bounded by
\begin{align*}
    \ex[d(i,j) + d(j,i)] &\le \int_{S_i = 0}^{\lambda p_i} \frac{1}{p_i} \cdot \left( \int_{S_j = 0}^{\lambda p_i } \frac{p_i + S_j}{p_j} dS_j +  \int_{S_j = \lambda p_i}^{p_j} \frac{p_i + \lambda p_i}{p_j} dS_j  \right) dS_i\\
    &+ \int_{S_i = \lambda p_i}^{\lambda p_j} \frac{1}{p_i} \left( \int_{S_j = 0}^{S_i} \frac{p_i + S_j}{p_j} dS_j +  \int_{S_j = S_i}^{p_j} \frac{p_i + S_i}{p_j} dS_j  \right) dS_i\\
    &+ \int_{S_i = \lambda p_j}^{p_i} \frac{1}{p_i} \cdot \left(\int_{S_j = 0}^{\lambda p_j} \frac{p_j + \lambda p_j}{p_j} dS_j + \int_{S_j = \lambda p_j}^{S_i} \frac{p_j+S_j}{p_j} dS_j + \int_{S_j = S_i}^{p_j} \frac{p_i+S_i}{p_j} dS_j\right) dS_i\\
    &=  \lambda p_i(1+\lambda)-\frac{\lambda^3 p_i^2}{2p_j} + \lambda (p_j - p_i)
    + \frac{\lambda^2}{2p_i}(p_j^2 - p_i^2)
    - \frac{\lambda^3}{6p_i p_j}(p_j^3 - p_i^3)\\
    &+ 2 p_i + (\lambda^2 - \lambda)p_j - (\lambda^2 + \frac{\lambda^3}{3})\frac{p_j^2}{p_i} - \frac{2}{3}\frac{p_i^2}{p_j}\\
    &= \lambda^2 p_j+\frac{4 + \lambda^2}{2}p_i-\frac{\lambda^2 + \lambda^3}{2}\frac{p_j^2}{p_i}-\frac{\lambda^3 + 2}{3}\,\frac{p_i^2}{p_j}  
\end{align*}
\end{proof}

The following lemma is implied by~\Cref{lem:boosting-case-delays} and states the expected pairwise delay as a function of $\nicefrac{p_i}{p_j}$.

\begin{restatable}{lemma}{lemStringSignalDelays}
    \label{lem:revealing-signals:delays}
    Consider two jobs $i$ and $j$ with $p_i \le p_j$. Then, for any $0<\lambda\le 1$, $\ex[d^{\LBOOSTING}(i,j)+d^{\LBOOSTING}(j,i)] \leq p_i g_\lambda\left(\frac{p_i}{p_j}\right)$, where $g_\lambda : [0,1]\rightarrow \Rbb_+$ satisfies
    \begin{itemize}
        \item[] If $x \le \lambda$, then  $g_\lambda(x) = \frac{3 + \lambda^2}{2} - \frac{(1 + 2\lambda^3)}{6} x$.
        \item[] If $\lambda \le x$, then  $g_\lambda(x) = -\frac{\lambda^3 + 2}{3}x+\frac{4 + \lambda^2}{2}+\lambda^2 x^{-1}-\frac{\lambda^2 + \lambda^3}{2}x^{-2}$.
    \end{itemize}
\end{restatable}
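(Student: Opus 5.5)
This lemma follows directly from \Cref{lem:boosting-case-delays}. We divide each of its two bounds by $p_i$ and write them in terms of the single ratio $x := p_i/p_j \in (0,1]$. The two cases there are $p_j \ge p_i/\lambda$ and $p_j \le p_i/\lambda$, which become $x \le \lambda$ and $x \ge \lambda$. So the case split in the statement matches the one in \Cref{lem:boosting-case-delays} exactly, and no new probabilistic argument is needed.

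\textbf{Case $x \le \lambda$.} \Cref{lem:boosting-case-delays} gives
\[
\ex\bigl[d(i,j)+d(j,i)\bigr] \le \tfrac{3+\lambda^2}{2}\, p_i - \tfrac{1+2\lambda^3}{6}\,\tfrac{p_i^2}{p_j}.
\]
We factor out $p_i$ and substitute $p_i/p_j = x$. This gives $p_i\bigl(\frac{3+\lambda^2}{2} - \frac{1+2\lambda^3}{6}x\bigr)$, which is the first branch of $g_\lambda$.

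\textbf{Case $x \ge \lambda$.} We rewrite each of the four terms of the second bound as $p_i$ times a power of $x$, using $p_j = p_i x^{-1}$, $p_j^2/p_i = p_i x^{-2}$ and $p_i^2/p_j = p_i x$:
\[
\lambda^2 p_j = p_i\,\lambda^2 x^{-1},\qquad
\tfrac{4+\lambda^2}{2}\,p_i = p_i\,\tfrac{4+\lambda^2}{2},
\]
\[
-\tfrac{\lambda^2+\lambda^3}{2}\,\tfrac{p_j^2}{p_i} = -p_i\,\tfrac{\lambda^2+\lambda^3}{2}\,x^{-2},\qquad
-\tfrac{\lambda^3+2}{3}\,\tfrac{p_i^2}{p_j} = -p_i\,\tfrac{\lambda^3+2}{3}\,x.
\]
Summing these four terms gives exactly $p_i$ times the second branch of $g_\lambda$.

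\textbf{Boundary and codomain.} At $x=\lambda$ both cases of \Cref{lem:boosting-case-delays} apply, so either branch is a valid bound there. The statement also asserts $g_\lambda \ge 0$ on $[0,1]$. The first branch is at least $\frac{3+\lambda^2}{2}-\frac{1+2\lambda^3}{6} > 0$ for $\lambda\in(0,1]$. On the second branch, $g_\lambda$ upper-bounds the nonnegative normalized delay. For the stated codomain it suffices to note that each term is bounded on $[\lambda,1]$, and that $g_\lambda(1) = -\frac{\lambda^3+2}{3}+\frac{4+\lambda^2}{2}+\lambda^2-\frac{\lambda^2+\lambda^3}{2} = \frac43 + \lambda^2 - \frac{5\lambda^3}{6} > 0$.

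The argument is pure algebra, and the only place needing care is the bookkeeping of exponents when converting $p_j^2/p_i$ and $p_i^2/p_j$ into powers of $x$, together with their signs.
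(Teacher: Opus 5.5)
Your proposal is correct and takes the same route as the paper, which simply states that this lemma is implied by \Cref{lem:boosting-case-delays} (divide each case bound by $p_i$ and substitute $x=p_i/p_j$); your conversion of $p_j$, $p_j^2/p_i$, $p_i^2/p_j$ into $p_i x^{-1}$, $p_i x^{-2}$, $p_i x$ is correct. On your side remark about the codomain: nonnegativity of the second branch does follow from your observation that it dominates a nonnegative normalized delay for every realizable $x\in[\lambda,1]$, but boundedness of the terms together with $g_\lambda(1)>0$ would not suffice on its own, so drop that part of the argument.
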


Our goal is to apply~\Cref{lem:strong-delay} to prove~\Cref{thm:alg-strong}. However, $g_{\lambda}$ is not nonincreasing and concave on $[0,1]$:

\begin{itemize}
    \item If $x \leq \lambda$, then $g_\lambda(x)$ is linear and decreasing.
    \item If $x \geq \lambda$, then we can observe that  $g_\lambda(x)$ is strictly concave in $[\lambda,1]$ and has exactly one maximum point: It has derivative
    \[
     g'_\lambda(x) = - \frac{\lambda^2}{x^2} + \frac{\lambda^2 + \lambda^3}{x^3} - \frac{\lambda^3 + 2}{3} \ . 
    \]
    and second derivative
    \[
     g''_\lambda(x) = \frac{2\lambda^2}{x^3} - \frac{3(\lambda^2 + \lambda^3)}{x^4} = \frac{\lambda^2(2x-3-3\lambda)}{x^4}
    \]
    Note that $g''_\lambda(x) < 0$ for all $x \in [\lambda,1]$, so $g_{\lambda}$ is strictly concave in that interval.

    Moreover $g'_\lambda(\lambda) > 0$ and $g'_\lambda(1) < 0$, so $g_{\lambda}$ has exactly one maximum point $x^* \in [\lambda,1]$.
\end{itemize}

Next, consider the intercept of the tangent of $g_\lambda(x)$ at $x=0$, which can be written as
\[
    T_\lambda(x) := g_\lambda(x) - x g'_\lambda(x)
\]
Since $T'_\lambda(x) = -x g''_\lambda(x) > 0$, $T_\lambda(x)$ is strictly increasing in $x$.

We now have two cases. If $g_\lambda(x^*) \geq a:= g_\lambda(0)$, the best tangent point that leads to a nonincreasing concave function is $x_{\mathrm{tan}} = x^*$.
If $g_\lambda(x^*) \leq a$, let $x_{\mathrm{tan}}$ be the solution to $T_\lambda(x) = g_\lambda(0)$, which simplifies to
\[
    x^2 + 4\lambda^2x - 3\lambda^2(1+\lambda) = 0 
\]
so
\[
    x_{\mathrm{tan}} = \lambda \big( \sqrt{4\lambda^2 + 3\lambda + 3} - 2\lambda \big) \ .
\]
Note that $x_{\mathrm{tan}} \leq 1$ if 
\begin{align*}
\lambda \big( \sqrt{4\lambda^2 + 3\lambda + 3} - 2\lambda \big) &\leq 1 \\
\sqrt{4\lambda^2 + 3\lambda + 3} &\leq 1/\lambda + 2\lambda \\
    4\lambda^2 + 3\lambda + 3   &\leq 1 / \lambda^2 + 4 + 4\lambda^2 \\
    3\lambda^3 - \lambda^2 &\leq 1
\end{align*}
which holds if $\lambda \leq 0.824123$.

So in both cases the corresponding majorant is
\[
\hat g_\lambda(x) = \begin{cases}
    g_\lambda(0) + g'_\lambda(x_{\mathrm{tan}}) x & \text{if } x \leq x_\mathrm{tan} \\
     g_{\lambda}(x) \quad& \text{if } x \geq x_\mathrm{tan} \\
\end{cases}
\]
where
\[
x_{\mathrm{tan}} = \begin{cases}
    x^* & \text{if } g_\lambda(x^*) \geq g_\lambda(0) \\
    \lambda \big( \sqrt{4\lambda^2 + 3\lambda + 3} - 2\lambda \big) & \text{if } g_\lambda(x^*) \leq g_\lambda(0) \\
\end{cases}
\]

Numerically, it appears that the optimum is again attained when $g_\lambda(0) = g_\lambda(x^*)$. In that case, the flat first piece is optimal. \Cref{fig:strong-signals} illustrates representative choices of $\lambda$; in the left and middle panels the support point is $x^*$, and the marked segment is flat.

As argued above, $\hat{g}_{\lambda}$ is concave and nonincreasing. Hence, we can prove~\Cref{thm:alg-strong} with the following lemma.

\begin{restatable}{lemma}{strongsignal}
\label{lem:strong-signal}
For $\lambda = 0.4699$, $g_\lambda(0) \leq g_\lambda(x_\lambda)$ and for $\hat g_\lambda$ defined in Equation \eqref{eq:hat2}, we have $\int_0^1 \hat g_{\lambda}(t^2)\,dt < 1.59999$.
\end{restatable}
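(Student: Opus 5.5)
At $\lambda=0.4699$ this is a certified numerical computation, done in the same style as \Cref{lem:single-machine-bound}. The work splits into two parts: first verify which case of $x_{\mathrm{tan}}$ applies, then integrate the piecewise majorant $\hat g_\lambda(t^2)$ in closed form.

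\textbf{Step 1: locate the maximizer.} The coefficients of $g_\lambda$ on $[\lambda,1]$ from \Cref{lem:revealing-signals:delays} are explicit rationals in $\lambda$. The maximizer $x_\lambda=x^*$ is the unique root in $[\lambda,1]$ of
\[
g'_\lambda(x)=-\frac{\lambda^2}{x^2}+\frac{\lambda^2+\lambda^3}{x^3}-\frac{\lambda^3+2}{3}.
\]
After multiplying by $x^3$, this is the cubic $-\frac{\lambda^3+2}{3}x^3-\lambda^2x+(\lambda^2+\lambda^3)=0$. It has a single root in the interval, because $g_\lambda$ is strictly concave there and $g'_\lambda(\lambda)>0>g'_\lambda(1)$. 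I will trap $x^*$ in a short interval $[x_-,x_+]$ by checking the sign of this cubic at the two rational endpoints.

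\textbf{Step 2: verify the inequality $g_\lambda(0)\le g_\lambda(x^*)$.} Here $g_\lambda(0)=\frac{3+\lambda^2}{2}$. For a lower bound on $g_\lambda(x^*)$ it suffices to evaluate $g_\lambda$ at any rational point, say $x_-$, since $g_\lambda(x^*)\ge g_\lambda(x_-)$. I expect this to be the main obstacle: the value $\lambda=0.4699$ was presumably chosen so that $g_\lambda(0)\approx g_\lambda(x^*)$, where the flat piece is optimal. The margin may therefore be tiny, and the arithmetic has to be carried out with exact rationals or enough digits. If the inequality instead fails by a hair, I would switch to the tangent case $x_{\mathrm{tan}}=\lambda(\sqrt{4\lambda^2+3\lambda+3}-2\lambda)$. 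The same integral bound should still go through, because the two majorants nearly coincide.

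\textbf{Step 3: bound the integral.} In the flat case, $\hat g_\lambda(t^2)=g_\lambda(x^*)$ for $t<\sqrt{x^*}$ and $\hat g_\lambda(t^2)=g_\lambda(t^2)$ otherwise. The integral then has a closed form:
\[
\sqrt{x^*}\,g_\lambda(x^*)+\int_{\sqrt{x^*}}^1\Bigl(-\tfrac{\lambda^3+2}{3}t^2+\tfrac{4+\lambda^2}{2}+\lambda^2t^{-2}-\tfrac{\lambda^2+\lambda^3}{2}t^{-4}\Bigr)dt.
\]
The antiderivative is elementary, a polynomial in $t$ and $t^{-1}$. As a function of the cutoff $s=\sqrt{x^*}$, the expression $\Phi(s)=s\,g_\lambda(s^2)+\int_s^1 g_\lambda(t^2)\,dt$ has derivative $2s^2g'_\lambda(s^2)$. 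This derivative vanishes at $s=\sqrt{x^*}$, so $\Phi$ is insensitive to small errors in the location of $x^*$.

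To make the bound rigorous, I will replace $g_\lambda(x^*)$ by its upper bound $\max_{[x_-,x_+]}g_\lambda\le g_\lambda(x_-)+(x_+-x_-)\,g'_\lambda(x_-)$; the linear term is a valid upper bound because $g_\lambda$ is concave. I will then evaluate at $s=\sqrt{x_-}$, adding a crude error term for the interval width. The result should come out below $1.59999$. Given how small the margin is relative to $1.6$, I would keep at least eight significant digits throughout, and ideally verify the final inequality with interval arithmetic.
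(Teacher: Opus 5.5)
Your plan is correct and follows the paper's proof. It brackets $x_\lambda$ by sign checks of the cubic $(2+\lambda^3)x^3+3\lambda^2x-3\lambda^2(1+\lambda)$, certifies $g_\lambda(0)\le g_\lambda(x_\lambda)$ via $g_\lambda(x_\lambda)\ge g_\lambda(x_-)$, and integrates the flat-capped majorant in closed form with rigorous interval bounds; the actual margin in the inequality is about $1.4\cdot 10^{-5}$, so the fallback to the tangent case is never needed. The only difference is bookkeeping in Step 3: the paper rewrites $\sqrt{x}\,g_\lambda(x)+\int_{\sqrt{x}}^1 g_\lambda(t^2)\,dt$ as $K+\tfrac{2A}{3}x^{3/2}+2Cx^{-1/2}+\tfrac{4D}{3}x^{-3/2}$ (with $g_\lambda(x)=Ax+B+Cx^{-1}+Dx^{-2}$ and $K$ a constant) and bounds each term over the bracket by monotonicity using $A<0$, $C>0$, $D<0$, whereas you bound $g_\lambda(x_\lambda)$ by a tangent and add a width error term---which is genuinely required, since $\Phi$ is maximized at $\sqrt{x_\lambda}$ and so $\Phi(\sqrt{x_-})$ alone is only a lower bound.
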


\begin{proof}
Fix $\lambda=0.4699$. For $x\in[\lambda,1]$, write
\[
g_\lambda(x)
=
Ax+B+Cx^{-1}+Dx^{-2},
\]
where
\[
A=-\frac{2+\lambda^3}{3},\qquad
B=\frac{4+\lambda^2}{2},\qquad
C=\lambda^2,\qquad
D=-\frac{\lambda^2+\lambda^3}{2}.
\]
For this value of $\lambda$ these constants are
\[
A=-0.701252248033,\quad
B=2.110403005,\quad
C=0.22080601,\quad
D=-0.1622813770495 .
\]

The maximizer $x_\lambda$ of $g_\lambda$ on $[\lambda,1]$ is the unique root of
\[
q_\lambda(x)
:=
(2+\lambda^3)x^3+3\lambda^2x-3\lambda^2(1+\lambda)=0.
\]
Indeed,
\[
g_\lambda'(x)
=
-\frac{q_\lambda(x)}{3x^3},
\]
and
\[
q_\lambda'(x)=3(2+\lambda^3)x^2+3\lambda^2>0.
\]
Thus $q_\lambda$ is strictly increasing, so its root is unique. Direct exact rational
evaluation gives
\[
q_\lambda(0.63946184)
=
-2.7983272752892265\cdot 10^{-8}<0,
\]
and
\[
q_\lambda(0.63946185)
=
4.4484142438288805\cdot 10^{-9}>0.
\]
Therefore
\[
x_\lambda\in [a,b]
:=
[0.63946184,\,0.63946185].
\]

We next verify the condition $g_\lambda(0)\le g_\lambda(x_\lambda)$. Since
$q_\lambda(a)<0$, we have $g_\lambda'(a)>0$, and therefore $g_\lambda$ is increasing
on $[a,x_\lambda]$. Hence
\[
g_\lambda(x_\lambda)\ge g_\lambda(a).
\]
A direct exact evaluation gives
\[
g_\lambda(0)=\frac{3+\lambda^2}{2}=1.610403005,
\]
whereas
\[
g_\lambda(a)
=
1.6104167866212743\ldots
>
1.610403005.
\]
Thus $g_\lambda(0)\le g_\lambda(x_\lambda)$.

It remains to bound the integral. Since $\hat g_\lambda(x)=g_\lambda(x_\lambda)$
for $x\le x_\lambda$ and $\hat g_\lambda(x)=g_\lambda(x)$ for $x\ge x_\lambda$,
we have
\[
\rho_\lambda
:=
\int_0^1 \hat g_\lambda(t^2)\,dt
=
\sqrt{x_\lambda}\,g_\lambda(x_\lambda)
+
\int_{\sqrt{x_\lambda}}^1 g_\lambda(t^2)\,dt .
\]
Using the formula for $g_\lambda$ on $[\lambda,1]$ and integrating explicitly,
\[
\rho_\lambda
=
K
+
\frac{2A}{3}x_\lambda^{3/2}
+
2C x_\lambda^{-1/2}
+
\frac{4D}{3}x_\lambda^{-3/2},
\]
where
\[
K:=\frac{A}{3}+B-C-\frac{D}{3}
=
1.7099400380055.
\]

We now bound this expression using the interval
\[
x_\lambda\in[a,b]=[0.63946184,0.63946185].
\]
The following square-root enclosures can be checked numerically:
\[
\sqrt a\in[0.79966357926,0.79966357927],
\qquad
\sqrt b\in[0.79966358551,0.79966358552].
\]
Since $A<0$, $C>0$, and $D<0$, we obtain the upper bound
\[
\begin{aligned}
\rho_\lambda
&\le
K
+
\frac{2A}{3}\,a\cdot 0.79966357926
+
\frac{2C}{0.79966357926}
+
\frac{4D}{3}\cdot
\frac{1}{b\cdot 0.79966358552}  \\
&<
1.599987032 \\
&< 1.599988 \ .
\end{aligned}
\]
This proves the claimed bound.
\end{proof}

\begin{figure}
	\centering
	\includegraphics[width=0.8\textwidth]{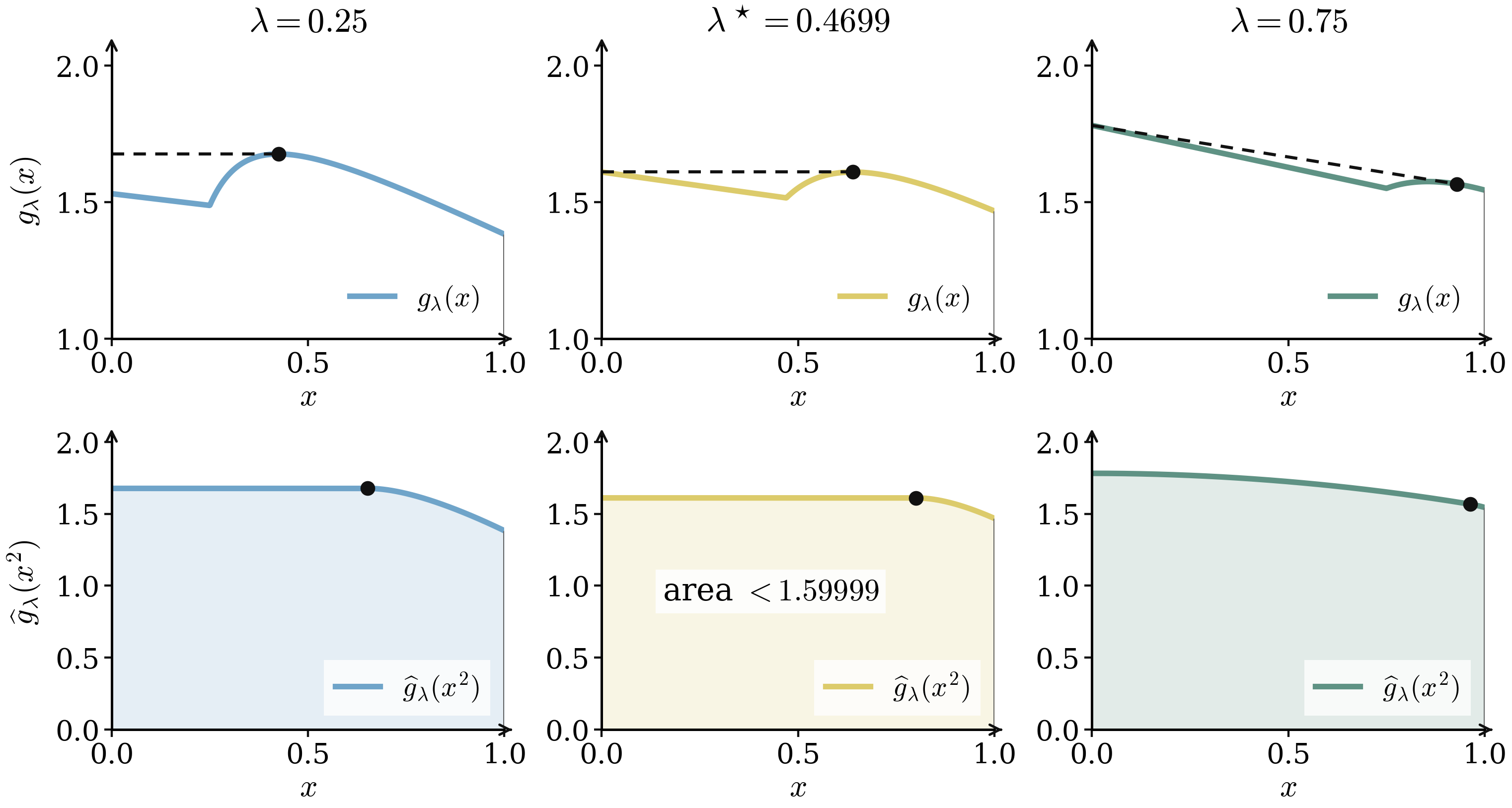}
	\caption{Visualization of the normalized pairwise expected delay of $\LBOOSTING$ and its nonincreasing concave majorant for different values of $\lambda$.}
	\label{fig:strong-signals}
\end{figure}

\begin{proof}[We are now ready to prove~\Cref{thm:alg-strong}]
The normalized expected pairwise delay is bounded by the function $g_\lambda:[0,1]\rightarrow \Rbb_+$ defined in \Cref{lem:revealing-signals:delays}. The function $g_\lambda$ is linear and decreasing for $x\in[0,\lambda]$ and strictly concave for $x\in[\lambda,1]$, maximized by $x_\lambda$, the solution of $(2 + \lambda^3) x_\lambda^3 + 3\lambda^2x_\lambda - 3\lambda^2(1+\lambda) = 0$. Thus the function is decreasing on $(0,\lambda]$, increasing on $[\lambda,x_\lambda]$, decreasing on $[x_\lambda,1]$.  We define 
\begin{equation}\label{eq:hat2}
    \hat g_\lambda(x) = \begin{cases}
    g_{\lambda}(x_\lambda) \quad& \text{if } x \leq x_\lambda \\
    g_{\lambda}(x) \quad& \text{if } x \geq x_\lambda \\
\end{cases}
\end{equation}
which is nonincreasing and concave. Furthermore, the function satisfies $\forall x\in [0,1] : g_\lambda(x) \leq \hat g_\lambda(x)$ assuming that $g_\lambda(0)\leq g_\lambda(x_\lambda)$. Hence, by~\Cref{lem:strong-delay}, the competitive ratio of $\LBOOSTING$ satisfies $\rho_\lambda \le \int_0^1 \hat g_{\lambda}(t^2)\,dt$ under the same assumption.
\Cref{lem:strong-signal} then implies the claimed bound.
An illustration of these functions is given in \Cref{fig:strong-signals}.
\end{proof}

\paragraph{Analysis of $\LBOOSTING$ via Local Delay Theorem.} For a fixed value of $\lambda$, if we want to analyze the competitiveness of $\LBOOSTING$ via the local delay theorem, then the best bound we can obtain is $\alpha_{\lambda} := \max_{x \in [0,1]} g_{\lambda}(x)$. Optimizing over $\lambda$, we can then obtain the bound $\alpha^* := \min_{\lambda \in (0,1]} \alpha_{\lambda}$. Numerical computations show that $\alpha^* \approx 1.6104$ is obtained at $\lambda \approx 0.4699$, indicating that the Global Delay Theorem is necessary to obtain bounds $< 1.6$.

\section{Appendix for \Cref{sec:lower-bounds} (Lower Bounds)}\label{sec:lb-support}
\subsection{Basic results on the Gamma distribution}\label{sec:basic-gamma}
In this section, we review some basic results about the Gamma distribution.

\subparagraph*{Exponential distribution.} For $\lambda > 0$, we say that a real-valued random variable $X$ has an exponential distribution with parameter $\lambda$, and write $X \sim \Ecal(\lambda),$ 
if its density is given for $x\geq 0$ by $
f_X(x) = \lambda e^{-\lambda x}$. Its expectation is $\Ebb[X] = \frac{1}{\lambda}$. The exponential distribution is memoryless in the sense that, for all $s,t\ge 0$,
\[
\Pbb(X\ge s+t \mid X\ge s)
=\frac{\Pbb(X\ge s+t)}{\Pbb(X\ge s)}
=\frac{e^{-\lambda(s+t)}}{e^{-\lambda s}}
=e^{-\lambda t}
=\Pbb(X\ge t).
\]
Equivalently, the conditional distribution of $X$ given $X\ge s$ is an exponential distribution shifted by $s$, i.e., $s+\Ecal(\lambda)$. 

\subparagraph*{Gamma distribution.}
For $k>0$ and $\lambda>0$, we say that a random variable $X$ has a Gamma distribution with shape parameter $k$ and rate parameter $\lambda$, and write $X \sim \Gamma(k,\lambda),$ if its density is given for $x\geq 0$ by
$f_X(x)=\frac{\lambda^k}{\Gamma(k)}x^{k-1}e^{-\lambda x},$
where $\Gamma(k)=\int_0^\infty t^{k-1}e^{-t}\,dt$ denotes the Gamma function. Its expectation is $\Ebb[X]=\frac{k}{\lambda}$. In the special case $k=2$, this simplifies to $f_X(x)=\lambda^2 x e^{-\lambda x}$ and $\Ebb[X]=\frac{2}{\lambda}$. For $k\in \Nbb$, the sum of $k$ i.i.d. random variables following $\Ecal(\lambda)$ follows a $\Gamma(k,\lambda)$ distribution.

\subparagraph*{Minimum of two Gamma random variables.} Let $X_1,X_2$ be two independent $\Gamma(k=2,\lambda)$ random variables, and define $M = \min\{X_1,X_2\}$. We consider the survival function $F_{X_1}(t)=\Pbb(X_1\geq t) = \int_{t}^\infty \lambda^2x\exp(-\lambda x)dx = \exp(-\lambda t)(1+\lambda t)$. Thus,
$$\Ebb(M) = \int_{0}^\infty\Pbb(M\geq t)dt =\int_{0}^\infty F_{X_1}(t)^2dt = \int_0^\infty\exp(-2\lambda t)(1+\lambda t)^2dt = \frac{5}{4\lambda} \ ,$$
where the last identity uses that $\int_0^\infty t^n\exp(-at)dt = \frac{n!}{a^{n+1}}$ so that we can decompose into the sum $\frac{1}{2\lambda} + \frac{2\lambda}{(2\lambda)^2}+\frac{\lambda^2 2!}{(2\lambda)^3} = \frac{2+2+1}{4\lambda}$.

\subparagraph*{Conditioning of the Gamma distribution.} Let $X\sim \Gamma(k=2,\lambda)$ and let $U$ be sampled uniformly from $[0,X]$. Then $U\sim \Ecal(\lambda)$. Indeed, the marginal density of $U$ is 
$$f_U(u) = \int_{x\in\Rbb_+}f_{U|X}(u|x)f_X(x)dx = \int_{x\geq u} \frac{1}{x}\times \lambda^2 x\exp(-\lambda x)dx = \lambda\exp(-\lambda u).$$
Also, computing the conditional distribution 
\[
f_{X|U}(x|u) = \frac{f_{U,X}(u,x)}{f_{U}(u)} = \frac{\lambda^2\exp(-\lambda x)}{\lambda\exp(-\lambda u)} = \lambda\exp(-\lambda(x-u)),
\]
one obtains that the conditional distribution of $X$ given $U=u$ is an exponential law shifted by $u$, i.e., $u+\Ecal(\lambda)$. Since the exponential law is memoryless, we further have that the conditional distribution of $X$ given $U=u$ and $X\geq x$ is the exponential law $x+\Ecal(\lambda)$.

\subparagraph*{Order statistics of the exponential distribution.} We assume $X_1,\dots, X_n$ are i.i.d. random variables following $\Ecal(\lambda)$ and denote by $X_{(i)}$ the $i$-th smallest variable. We have that $X_{(1)}\sim \Ecal(n\lambda)$ as the minimum of $n$ exponential random variables. Then, observing that the process is memoryless, and conditioning on all other variables to exceed $X_{(1)}$, we get that $X_{(2)}-X_{(1)}\sim \Ecal((n-1)\lambda)$, and more generally, $X_{(i+1)}-X_{(i)}\sim \Ecal((n-i)\lambda)$. Thus, by linearity of expectation we get $\Ebb(X_{(i)}) = \frac{1}{\lambda}\left(\frac{1}{n}+\dots+\frac{1}{n-i+1}\right) = \frac{1}{\lambda}\left(H_n-H_{n-i}\right)$ where $H_n$ is the $n$-th harmonic number. This allows us to compute
\begin{align*}
    \sum_{i=1}^n \sum_{j=1}^i \Ebb(X_{(j)}) = \sum_{j=1}^n(n-j+1)\Ebb(X_{(j)}) = \sum_{r=1}^n r \sum_{k=r}^n \frac{1}{k} = \sum_{k=1}^n\frac{1}{k} \sum_{r=1}^k r 
    &= \sum_{k=1}^n\frac{1}{k} \frac{k(k+1)}{2} \\
    &= \frac{n(n+3)}{4}.
\end{align*}

\subsection{Characterizing optimal online algorithms}
The goal of this section is to prove the following result, which characterizes optimal online algorithms for the input distribution presented above. 
\begin{lemma}\label{lem:opt-online}
    For the scheduling with one weak uniform signal problem, where all processing times are sampled from independent Gamma random variables $p_i \sim \Gamma(k=2,\lambda)$, optimal online algorithms are those that satisfy the following two conditions:\footnote{These conditions could be violated on a set of times of Lebesgue measure zero or on an event of probability zero. In formal terms, we should say that the two conditions are satisfied `almost everywhere'.}
\begin{enumerate}
    \item \label{c:1}They process an unfinished job that has emitted a signal if there is one
    \item \label{c:2}They process arbitrary unfinished jobs otherwise
\end{enumerate}
\end{lemma}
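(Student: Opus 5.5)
We model the problem as a stochastic control problem and use the memoryless structure recalled in \Cref{sec:basic-gamma}. Each job $j$ is in one of two phases. In the \emph{pre-signal} phase its signal $S_j\sim\Ecal(\lambda)$ is still pending. In the \emph{post-signal} phase its residual $P_j-S_j\sim\Ecal(\lambda)$ is independent of $S_j$, so the remaining work after the signal is again exponential. Because both phases are memoryless, at any time $t$ the conditional law of each unfinished job given the history (elapsed times and which signals have fired) depends only on its phase. An unsignaled job has residual signal time $\Ecal(\lambda)$ followed by an independent $\Ecal(\lambda)$ residual, i.e.\ remaining work $\Gamma(2,\lambda)$. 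A signaled job has remaining work $\Ecal(\lambda)$. In particular, the elapsed time carries no information beyond the phase, so the state is just the pair $(a,b)$ of the numbers of unsignaled and signaled unfinished jobs.

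The plan is an exchange (interchange) argument in continuous time, discretized into small slots of length $\delta$. Consider any policy. Suppose that on a set of positive measure it assigns rate to an unsignaled job $u$ while some signaled job $s$ is unfinished. We then swap the processing: give the work to $s$ first and postpone the slot intended for $u$. Both jobs have \emph{the same} hazard rate $\lambda$ of making a phase transition per unit of processing. For $s$ the transition means completion; for $u$ it means emitting its signal. The swap therefore does not change the distribution of the total remaining work. What it does change is that a \emph{completion} happens earlier in expectation, whereas the original order produced only a phase change of $u$.

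To make this rigorous, we compare the expected sum of completion times by coupling the exponential clocks of $s$ and $u$ under the two orders. Under the swap, the multiset of future states is the same but with completions shifted earlier, which gives an improvement of order $\lambda\delta\cdot(\text{number of remaining jobs})\cdot\delta$. Letting $\delta\to0$ and iterating (for instance by induction on $n$, together with the value function $V(a,b)$) shows that every optimal policy satisfies condition~\ref{c:1} almost everywhere.

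For sufficiency and for condition~\ref{c:2}, we compute the value function explicitly. Under any policy obeying condition~\ref{c:1}, each signaled job is finished before any other work is done. The total completion time then equals $\sum_i$ of the cumulative processed work up to the $i$-th completion, and the $i$-th completion occurs after exactly two $\Ecal(\lambda)$ phases per completed job plus signal phases of other jobs that were partially processed. We will show that $V(a,b)$ satisfies the Bellman (HJB) equation in which every admissible choice among unsignaled jobs, including splitting rates as in RR, yields the same drift. This holds because every unsignaled job has an identical conditional law $\Gamma(2,\lambda)$, independent of its elapsed time, so the generator is invariant under which unsignaled job is chosen. Verification then gives optimality of all such policies. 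The main obstacle is making the interchange argument rigorous for general, possibly randomized, history-dependent rate allocations in continuous time. We would handle it by verifying that $V$ solves the HJB equation, where the minimizing action set is exactly the one described by the two conditions, and then applying a standard verification theorem.
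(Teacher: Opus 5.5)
Your route differs from the paper's. You set up the continuous-time MDP on the state $(a,b)$ and close with an HJB verification theorem. The paper mentions exactly this approach and deliberately avoids it.

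\textbf{The paper's argument.} It is pure event counting. Under any nonidling policy, each unit of processing carries event hazard $\lambda$, whichever job receives it. Hence the inter-event times $A_{i+1}-A_i$ are i.i.d.\ $\Ecal(\lambda)$ and independent of the history. Writing $\ALG=\sum_i (A_{i+1}-A_i)\,n(i)$ gives $\mathbb{E}[\ALG]=\frac{1}{\lambda}\sum_i \mathbb{E}[n(i)]$. Since every completion needs a prior signal, $n(i)\ge n-\lfloor i/2\rfloor$. Equality holds for all $i$ iff signals and completions alternate, i.e.\ iff condition 1 holds. No value function is needed.

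\textbf{What your approach buys.} It is heavier, but it is the standard rigorous framework for history-dependent randomized rate allocations. It also yields the full value function, which is what the refined MDPs the paper alludes to would require.

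\textbf{First repair: the missing crux.} As quantified, your exchange heuristic fails. A gain of order $\lambda\delta^2\cdot(\text{number of jobs})$ per slot, summed over the $O(1/\delta)$ slots of a positive-measure deviation, vanishes as $\delta\to 0$, so it cannot show strict suboptimality. The correct first-order gain of one slot is $\lambda\delta\,\bigl(V(a-1,b+1)-V(a,b-1)\bigr)$. Computing this difference is the actual crux, and your proposal never does it. Under condition 1 one gets $\lambda V(a,b)=ab+\frac{b(b+1)}{2}+a(a+1)$, so $V(n,0)=n(n+1)/\lambda$. It follows that $\lambda\bigl(V(a,b-1)-V(a,b)\bigr)=-(a+b)$ and $\lambda\bigl(V(a-1,b+1)-V(a,b)\bigr)=-a$. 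Thus $V$ solves the HJB equation. An action giving total rate $r_s$ to signaled jobs and $r_u$ to unsignaled jobs has Hamiltonian gap $b\,r_u+(a+b)(1-r_s-r_u)\ge 0$. Dynkin's formula, after the routine limit, gives $\mathbb{E}[\ALG]=V(n,0)+\mathbb{E}\int_0^\infty \mathrm{gap}_t\,\mathrm{d}t$. This yields sufficiency and almost-everywhere necessity of both conditions at once, so the separate exchange step can be dropped.

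\textbf{Second repair: the sufficiency paragraph.} Your description (``two $\Ecal(\lambda)$ phases per completed job plus signal phases of partially processed jobs'') is misleading, because the completed jobs are a selected set. The clean fact is that the $i$-th completion is the $2i$-th event, at expected time $2i/\lambda$. Similarly, ``the swap does not change the distribution of the total remaining work'' is vacuous, since that quantity equals $P-t$ under any nonidling policy. The meaningful invariant is that the number $2a+b$ of remaining exponential phases drops by one at the jumps of a rate-$\lambda$ Poisson process under every nonidling policy. That is precisely the observation the paper's proof rests on.
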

\begin{proof}
We note that in the distributional setting (i.e., the distribution of all job sizes is known to the algorithm), the problem reduces to a continuous-time Markov Decision Process (MDP), where the policy corresponds to the jobs being processed, and the instantaneous cost of the algorithm equals the number of unfinished jobs. The value function for this Markov Decision Process can be made explicit via dynamic programming, as can the optimal online algorithms. Because introducing a continuous-time Markov Decision Process would require substantial formalism, we turn to a more direct argument, while noticing that further extensions of our lower bound would likely rely on analyzing a refined Markov Decision Process.

We first argue that all algorithms that satisfy conditions \ref{c:1} and \ref{c:2} have the same expected cost. We say that an event occurs if either (a) a job emits its signal or (b) a job is completed. We recall that, by arguments introduced in \Cref{sec:basic-gamma}, the time required for a job $j$ to emit its signal $S_j$ follows an exponential random variable $\Ecal(\lambda)$, and, further, that once a job $j$ has emitted its signal, the remaining processing time $P_j-S_j$ follows an exponential random variable $\Ecal(\lambda)$. We also recall that the exponential random variable is memoryless. Thus, any algorithm that has property \ref{c:2} satisfies that the elapsed time between two events follows an exponential random variable. If the algorithm further satisfies property \ref{c:1}, the events must be an alternation of (a) a signal emission and (b) a job completion. 

We now argue that algorithms that do not satisfy both conditions \ref{c:1} and \ref{c:2} are suboptimal. 
It is clear that no optimal algorithm idles unnecessarily (otherwise the total completion time can be strictly improved).
Thus, optimal algorithms must have the property that the elapsed time between two events is exponentially distributed.
Furthermore, it now suffices to show that optimal online algorithms satisfy condition~\ref{c:1} as this, together with the fact that such algorithms do not idle unnecessarily, implies that optimal online algorithms satisfy both conditions. Hence, we now argue that algorithms that do not idle unnecessarily but also do not satisfy condition \ref{c:1} are suboptimal. 

To this end, for $i\in [2n]$, denote by $A_i$ the time at which the $i$-th event occurs, and let $A_0=0$. Then 
\[
\ALG = \sum_{i=0}^{2n-1}(A_{i+1}-A_{i})n(i),
\]
where $n(i)$ is the number of unfinished jobs in the interval $[A_{i},A_{i+1})$, and must satisfy $\forall i\in [2n]: n(i) \geq n-\lfloor i/2 \rfloor$, with equality when condition \ref{c:1} is respected. Assume, by contradiction, that condition \ref{c:1} is not met. Then, with positive probability, the sequence of events is not an alternation of (a) a signal emission and (b) a job completion. Therefore, we have $n(i)>n-\lfloor i/2 \rfloor$ with positive probability for some $i\in [2n]$, implying the sub-optimality of the algorithm. 
\end{proof}

\section{Appendix for \Cref{sec:beta} (Beta Distributions and Multiple Signals)}
\label{ap:beta}

\subsection{Proof of the Local Delay Baseline (\Cref{prop:beta-weak})}
\label{ap:beta-weak}

We restate the expected pairwise delay $g(x)$ under Beta signals here for easy reference. With $X_k := S_k/p_k \sim \mathrm{Beta}(\alpha,\beta)$ and $F$ and $f$ denoting its CDF and PDF, 
\begin{equation}
    g(x) = 1 + \int_0^1 (1-F(t))(1-F(xt))\, dt + \left(\frac{1}{x} - 1\right) \int_0^1 F(xt)\, f(t)\, dt,\label{eq:g-beta}
\end{equation}
where $x = p_i/p_j$.

\begin{proof}[Proof of \Cref{prop:beta-weak}]

Fix two jobs $i,j$ with $p_i \leq p_j$ and let $x = p_i/p_j$. 
Let $D = d^\GREEDY(i,j) + d^\GREEDY(j,i)$, and let $X_i, X_j \sim \mathrm{Beta}(\alpha,\beta)$ be i.i.d.\ so that the signal times are $S_i = p_i X_i$ and $S_j = p_j X_j$.
Thus, $S_i \leq S_j$ if and only if $x X_i \leq X_j$. By the same case analysis as in the proof of \Cref{lem:pairwise-delay-greedy},
\[
\frac{D}{p_i} = \begin{cases}
    1 + X_i & \text{ if } x X_i \leq X_j, \\
    \dfrac{1+ X_j}{x} & \text{ otherwise.}
\end{cases}
\]
Let $\Ecal = \{x X_i > X_j\}$. Then
\[
\frac{D}{p_i} = 1 + X_i + \left( \frac{1+ X_j}{x} - (1+ X_i) \right) \cdot \one[\Ecal].
\]
Under $\Ecal$, we have
\[
\frac{1+ X_j}{x} - (1+ X_i) < \frac{1}{x} + X_i - (1+ X_i) = \frac{1}{x} - 1.
\]
Thus,
\[
    \frac{\ex[D]}{p_i}
    \leq 1 + \ex[X_i] + \left(\frac1x - 1 \right) \cdot \pr(\Ecal) 
    = 1 + \frac{\alpha}{\alpha+\beta} + \left(\frac1x - 1 \right) \cdot \pr(\Ecal).
\]

We analyze the last term of this bound. Under $\Ecal$ we have
\[
\frac{1}{x} - 1 \leq \frac{X_i}{X_j} - 1 = \frac{X_i - X_j}{X_j}.
\]
Thus, for all $x \in (0,1]$,
\[
\left(\frac1x - 1 \right) \cdot \pr(\Ecal)
= \ex \!\left[ \left(\frac1x - 1 \right) \cdot \one[\Ecal] \right]
\leq \ex \!\left[ \frac{X_i - X_j}{X_j} \cdot \one[\Ecal] \right]
\leq \ex \!\left[ \frac{\max\{X_i - X_j,\,0\}}{X_j} \right].
\]
By Cauchy–Schwarz, this is at most
\[
\left( \ex[(\max\{X_i - X_j,\,0\})^2] \cdot \ex[X_j^{-2}] \right)^{1/2}.
\]
Since $X_i$ and $X_j$ are i.i.d.\ with the same mean,
\[ \ex[(\max\{X_i - X_j,0\})^2] =  \frac12 \ex[(X_i - X_j)^2] = \Var(X_i) = \frac{\alpha \beta}{(\alpha+\beta)^2(\alpha+\beta+1)},
\]
and for all $\alpha > 2$,
\[
\ex[X_i^{-2}]
= \frac{B(\alpha-2,\beta)}{B(\alpha,\beta)}
= \frac{(\alpha+\beta-1)(\alpha+\beta-2)}{(\alpha-1)(\alpha-2)}.
\]

So in total,
\[
 \frac{\ex[D]}{p_i} 
 \leq 1 + \frac{\alpha}{\alpha+\beta}
 + \sqrt{\frac{\alpha\beta(\alpha+\beta-1)(\alpha+\beta-2)}{(\alpha+\beta)^2(\alpha+\beta+1)(\alpha-1)(\alpha-2)}}.
\]
Since this bound is independent of $x$, it translates to a bound on the competitive ratio even via the Local Delay Theorem (\Cref{lem:weak-delay}). 
\end{proof}

\subsection{The Polynomial Envelope $U(x)$}
\label{ap:beta-envelope}

\begin{lemma}\label{lem:cdf-bounds}
    For any $t \in [0,1]$, $x \in (0,1]$, and parameters $\alpha \geq 1$, $\beta \geq 1$, the CDF of $\mathrm{Beta}(\alpha,\beta)$ satisfies
    \begin{align}
        F(xt) &\leq \frac{x^\alpha t^\alpha}{\alpha\, B(\alpha,\beta)}, \label{eq:cdf-upper}\\
        F(xt) &\geq x^\alpha F(t). \label{eq:cdf-lower}
    \end{align}
\end{lemma}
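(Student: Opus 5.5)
Both inequalities follow from the integral representation of the CDF,
\[
F(s)=\frac{1}{B(\alpha,\beta)}\int_0^s u^{\alpha-1}(1-u)^{\beta-1}\,\mathrm{d}u ,
\]
together with the substitution $u = xv$. No earlier result of the paper is needed beyond the definition of the Beta density.

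\emph{Upper bound \eqref{eq:cdf-upper}.} Since $\beta\ge 1$ and $u\in[0,1]$, we have $(1-u)^{\beta-1}\le 1$. Dropping this factor gives
\[
F(xt)\le \frac{1}{B(\alpha,\beta)}\int_0^{xt}u^{\alpha-1}\,\mathrm{d}u=\frac{(xt)^\alpha}{\alpha B(\alpha,\beta)} .
\]
This step uses only $\beta\ge1$ (and $\alpha>0$ so the integral converges).

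\emph{Lower bound \eqref{eq:cdf-lower}.} Substitute $u=xv$ with $v\in[0,t]$, so that $\mathrm{d}u=x\,\mathrm{d}v$. This yields
\[
F(xt)=\frac{x^\alpha}{B(\alpha,\beta)}\int_0^{t} v^{\alpha-1}(1-xv)^{\beta-1}\,\mathrm{d}v .
\]
Since $x\le 1$ we have $1-xv\ge 1-v\ge 0$, and since $\beta-1\ge0$ this gives $(1-xv)^{\beta-1}\ge(1-v)^{\beta-1}$. Hence the integral is at least $B(\alpha,\beta)F(t)$, which proves $F(xt)\ge x^\alpha F(t)$. This step uses $\beta\ge1$ and $x\in(0,1]$.

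\emph{Main obstacle.} There is no real obstacle; the argument is a direct estimate in both cases. The one point needing care is the direction of the monotonicity of $(1-w)^{\beta-1}$, which is nonincreasing in $w$ exactly because $\beta\ge1$. This is where the hypothesis is used. The condition $\alpha\ge1$ is not needed here; it is presumably needed only in the later envelope lemma. The edge cases $t=0$ and $\beta=1$ are immediate: both sides vanish when $t=0$, and the factor $(1-w)^{\beta-1}$ is identically $1$ when $\beta=1$.
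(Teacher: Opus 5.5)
Your proposal is correct and follows the paper's proof essentially verbatim: drop the factor $(1-u)^{\beta-1}\le 1$ for the upper bound, then substitute $u=xv$ and use $(1-xv)^{\beta-1}\ge(1-v)^{\beta-1}$ for the lower bound. Your side remark is also right: only $\alpha>0$ is needed here, and $\alpha\ge 1$ matters only in the later envelope lemma.
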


\begin{proof}
    For~\eqref{eq:cdf-upper}, since $\beta \geq 1$, $(1-u)^{\beta-1} \leq 1$ for $u \in [0,xt]$, so
    \[
        F(xt) = \int_0^{xt} \frac{u^{\alpha-1}(1-u)^{\beta-1}}{B(\alpha,\beta)}\, du \leq \int_0^{xt} \frac{u^{\alpha-1}}{B(\alpha,\beta)}\, du = \frac{x^\alpha t^\alpha}{\alpha\, B(\alpha,\beta)}.
    \]
    For~\eqref{eq:cdf-lower}, the substitution $u = xv$ gives $F(xt) = x^\alpha \int_0^{t} \frac{v^{\alpha-1}(1-xv)^{\beta-1}}{B(\alpha,\beta)}\, dv$. Since $x \leq 1$, $1-xv \geq 1-v$, and $\beta \geq 1$ implies $(1-xv)^{\beta-1} \geq (1-v)^{\beta-1}$. The integral is thus lower-bounded by $F(t)$, so $F(xt)$ is lower-bounded by $x^\alpha F(t)$.
    Both inequalities are tight at $\beta = 1$, where $(1-u)^{\beta-1} \equiv 1$.
\end{proof}

\begin{lemma}\label{lem:envelope}
    For any $\alpha \geq 1$ and $\beta \geq 1$, the normalized expected pairwise delay $g(x)$ defined in~\eqref{eq:g-beta} satisfies
    \[
        g(x) \leq U(x) := 1 + \mu + C x^{\alpha-1} - (C + \Delta)\, x^\alpha,
    \]
    where $\mu = \frac{\alpha}{\alpha+\beta}$, $C = \frac{B(2\alpha,\beta)}{\alpha\, B(\alpha,\beta)^2}$, and $\Delta = \int_0^1 F(t)(1-F(t))\, dt$.
\end{lemma}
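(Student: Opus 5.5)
We will bound the two integral terms of $g$ in \eqref{eq:g-beta} separately, using the two CDF inequalities of \Cref{lem:cdf-bounds}. We use the upper bound \eqref{eq:cdf-upper} where $F(xt)$ appears with a positive sign and the lower bound \eqref{eq:cdf-lower} where it appears with a negative sign. No case analysis on $x$ is needed, since both inequalities hold for every $t\in[0,1]$ and $x\in(0,1]$.

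\emph{First integral.} Expand $(1-F(t))(1-F(xt)) = (1-F(t)) - (1-F(t))F(xt)$. By the tail formula, $\int_0^1(1-F(t))\,\mathrm{d}t = \ex[X] = \frac{\alpha}{\alpha+\beta} = \mu$. In the subtracted term we have $1-F(t)\ge 0$, so \eqref{eq:cdf-lower} gives $(1-F(t))F(xt)\ge x^\alpha F(t)(1-F(t))$. Integrating yields
\[
\int_0^1 (1-F(t))(1-F(xt))\,\mathrm{d}t \le \mu - x^\alpha \Delta .
\]

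\emph{Second integral.} The prefactor $\frac1x-1$ is nonnegative and $f\ge 0$, so \eqref{eq:cdf-upper} gives
\[
\int_0^1 F(xt)f(t)\,\mathrm{d}t \le \frac{x^\alpha}{\alpha B(\alpha,\beta)}\int_0^1 t^\alpha f(t)\,\mathrm{d}t = \frac{x^\alpha}{\alpha B(\alpha,\beta)}\,\ex[X^\alpha].
\]
For $X\sim\mathrm{Beta}(\alpha,\beta)$ we have $\ex[X^\alpha]=B(2\alpha,\beta)/B(\alpha,\beta)$. The right-hand side is therefore $C x^\alpha$, and multiplying by $\frac1x-1$ gives the bound $C(x^{\alpha-1}-x^\alpha)$.

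Adding $1$, the bound $\mu - x^\alpha\Delta$, and the bound $C(x^{\alpha-1}-x^\alpha)$ gives exactly
\[
g(x)\le 1+\mu+Cx^{\alpha-1}-(C+\Delta)x^\alpha = U(x).
\]

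The only point requiring care is the sign bookkeeping: each CDF inequality must be applied in the direction matching the sign with which $F(xt)$ enters. Beyond that, the one step to verify explicitly is the Beta moment identity $\ex[X^\alpha]=B(2\alpha,\beta)/B(\alpha,\beta)$, which follows directly from the Beta density.
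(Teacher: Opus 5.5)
Your proposal is correct and follows essentially the same route as the paper. Both arguments bound $\int_0^1 F(xt)f(t)\,\mathrm{d}t \le Cx^\alpha$ via \eqref{eq:cdf-upper} and the Beta moment $\ex[X^\alpha]=B(2\alpha,\beta)/B(\alpha,\beta)$, bound $\int_0^1(1-F(t))(1-F(xt))\,\mathrm{d}t \le \mu - x^\alpha\Delta$ via \eqref{eq:cdf-lower}, and then combine the two using $\frac1x-1\ge 0$.
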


\begin{proof}
    Let $I_1(x) = \int_0^1 (1-F(t))(1-F(xt))\, dt$ and $I_2(x) = \int_0^1 F(xt) f(t)\, dt$. Using~\eqref{eq:cdf-upper}:
    \[
        I_2(x) \leq \int_0^1 \frac{x^\alpha t^\alpha}{\alpha\, B(\alpha,\beta)} f(t)\, dt = \frac{x^\alpha}{\alpha\, B(\alpha,\beta)} \int_0^1 t^\alpha f(t)\, dt = C x^\alpha.
    \]
    Using~\eqref{eq:cdf-lower}, $1 - F(xt) \leq 1 - x^\alpha F(t)$, hence
    \[
        I_1(x) \leq \int_0^1 (1-F(t))(1 - x^\alpha F(t))\, dt = \mu - x^\alpha \Delta.
    \]
    Substituting back into $g(x)$, and using $(1/x - 1) \geq 0$ for $x \in (0,1]$:
    \[
        g(x) \leq 1 + (\mu - x^\alpha \Delta) + \left(\frac{1}{x} - 1\right) C x^\alpha = 1 + \mu + C x^{\alpha-1} - (C+\Delta) x^\alpha = U(x). \qedhere
    \]
\end{proof}

\paragraph{Shape of $U(x)$ for $\alpha > 1$.}

\begin{lemma}\label{lem:U-shape}
    Let $\alpha > 1$ and $\beta \geq 1$, and define
    \[
        x_U^* := \frac{\alpha-1}{\alpha} \cdot \frac{C}{C+\Delta}, \qquad x_{\mathrm{inf}} := \frac{\alpha-2}{\alpha} \cdot \frac{C}{C+\Delta}.
    \]
    Then $x_U^*$ is the unique root of $U'(x) = 0$ on $(0,1]$ and $U''(x_U^*) < 0$, so $x_U^*$ is the unique maximizer of $U$. If $\alpha > 2$, $x_{\mathrm{inf}}$ is the unique inflection point of $U$ on $(0,1]$ and $x_{\mathrm{inf}} < x_U^*$; if $\alpha \leq 2$, $U$ is concave on $(0,1]$. In particular, $U$ is nonincreasing and concave on $[x_U^*, 1]$.
\end{lemma}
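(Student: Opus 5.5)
The argument is direct calculus on $U(x) = 1+\mu + C x^{\alpha-1} - (C+\Delta)x^\alpha$. First, the signs of the constants. $C>0$ since it is a ratio of Beta functions. $\Delta = \int_0^1 F(1-F)\,dt > 0$ since $0<F<1$ on $(0,1)$. Set $K := C+\Delta > 0$.

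\textbf{First derivative.} Factoring gives
\[
U'(x) = x^{\alpha-2}\bigl((\alpha-1)C - \alpha K x\bigr).
\]
For $x>0$ we have $x^{\alpha-2}>0$, so $U'$ vanishes only where the affine factor does, namely at $x = \frac{\alpha-1}{\alpha}\frac{C}{K} = x_U^*$. Because $\alpha>1$ and $C<K$, this point satisfies $0 < x_U^* < 1$. The affine factor is strictly decreasing, so $U' > 0$ on $(0,x_U^*)$ and $U'<0$ on $(x_U^*,1]$. Hence $x_U^*$ is the unique critical point in $(0,1]$ and the unique maximizer there.

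\textbf{Second derivative.} Similarly,
\[
U''(x) = (\alpha-1)x^{\alpha-3}\bigl((\alpha-2)C - \alpha K x\bigr).
\]
Evaluating the bracket at $x_U^*$ gives $(\alpha-2)C - (\alpha-1)C = -C < 0$, so $U''(x_U^*)<0$. Two cases then follow.
\begin{itemize}
\item If $\alpha\le 2$, the bracket is at most $-\alpha K x < 0$ for $x>0$. So $U''<0$ and $U$ is concave on $(0,1]$.
\item If $\alpha>2$, the bracket is a strictly decreasing affine function vanishing exactly at $x_{\mathrm{inf}} = \frac{\alpha-2}{\alpha}\frac{C}{K} \in (0,1)$. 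So $U''$ changes sign from positive to negative exactly there, which makes $x_{\mathrm{inf}}$ the unique inflection point. Since $\alpha-2<\alpha-1$ and $C/K>0$, we get $x_{\mathrm{inf}}<x_U^*$.
\end{itemize}

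\textbf{Conclusion on $[x_U^*,1]$.} On this interval $U'\le 0$ by the sign analysis above. Also $U''<0$, because in either case $x\ge x_U^*>x_{\mathrm{inf}}$, or $\alpha\le2$. So $U$ is nonincreasing and concave there. There is no real obstacle; the only care needed is the strict inequalities $C>0$ and $\Delta>0$, which guarantee $x_U^*\in(0,1)$ and the sign conclusions.
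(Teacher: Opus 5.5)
Your proof is correct and follows essentially the same route as the paper: you factor $U'$ and $U''$, solve the linear factors for $x_U^*$ and $x_{\mathrm{inf}}$, and use $(C+\Delta)\alpha x_U^* = (\alpha-1)C$ to reduce the bracket in $U''(x_U^*)$ to $-C<0$. Your explicit sign analysis of $U'$ on either side of $x_U^*$ and your check that $C,\Delta>0$ (so that $x_U^*\in(0,1)$) are slightly more careful than the paper, which infers global maximality from the unique critical point together with $U''(x_U^*)<0$.
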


\begin{proof}
    The first and second derivatives are
    \[
        U'(x) = C(\alpha-1)\, x^{\alpha-2} - (C+\Delta)\alpha\, x^{\alpha-1}, \quad
        U''(x) = C(\alpha-1)(\alpha-2)\, x^{\alpha-3} - (C+\Delta)\alpha(\alpha-1)\, x^{\alpha-2}.
    \]
    Setting $U'(x) = 0$ and dividing by $x^{\alpha-2} > 0$ yields $x_U^* = \frac{\alpha-1}{\alpha}\cdot\frac{C}{C+\Delta} \in (0,1)$. Using $(C+\Delta)\alpha\, x_U^* = C(\alpha-1)$,
    \[
        U''(x_U^*) = (\alpha-1)(x_U^*)^{\alpha-3}\bigl[C(\alpha-2) - C(\alpha-1)\bigr] = -(\alpha-1)(x_U^*)^{\alpha-3}\, C < 0.
    \]
    Similarly, $U''(x) = 0$ yields $x_{\mathrm{inf}} = \frac{\alpha-2}{\alpha}\cdot\frac{C}{C+\Delta}$, which is positive only when $\alpha > 2$, and $x_U^* - x_{\mathrm{inf}} = \frac{C}{\alpha(C+\Delta)} > 0$.
\end{proof}

By \Cref{lem:U-shape}, the function
\begin{equation}\label{eq:ghat-from-U}
    \hat g(x) := \begin{cases} U(x_U^*) & \text{if } x \in [0, x_U^*], \\ U(x) & \text{if } x \in (x_U^*, 1], \end{cases}
\end{equation}
is nonincreasing and concave on $[0,1]$, and satisfies $\hat g(x) \geq U(x) \geq g(x)$ for all $x \in (0,1]$.

\begin{proof}[Proof of \Cref{thm:beta-case2-U}]
For $\alpha = 1$ the maximizer is $x_U^* = 0$, so the flat piece in~\eqref{eq:ghat-from-U} is empty and $\hat g = U$. Substituting $\alpha=1$ into \Cref{lem:envelope} gives $\mu = \frac{1}{\beta+1}$, $C = \frac{B(2,\beta)}{B(1,\beta)^2} = \frac{\beta}{\beta+1}$, and $\Delta = \int_0^1[(1-t)^\beta - (1-t)^{2\beta}]\,dt = \frac{\beta}{(\beta+1)(2\beta+1)}$, so $U(x) = 2 - \frac{2\beta}{2\beta+1}\,x$ is linear, nonincreasing, and concave. Applying \Cref{lem:strong-delay} with $\hat g = U$,
\[
    \rho_U = \int_0^1 U(t^2)\, dt = 2 - \frac{2\beta}{3(2\beta+1)},
\]
which recovers $16/9$ at $\beta=1$ and matches the general formula below at $x_U^* = 0$.

For $\alpha > 1$, applying \Cref{lem:strong-delay} with $\hat g$ from~\eqref{eq:ghat-from-U},
\[
    \rho_U = \int_0^1 \hat g(t^2)\, dt = \int_0^{\sqrt{x_U^*}} U(x_U^*)\, dt + \int_{\sqrt{x_U^*}}^1 U(t^2)\, dt = \sqrt{x_U^*}\cdot U(x_U^*) + \int_{\sqrt{x_U^*}}^1 U(t^2)\, dt.
\]
The second integral admits a closed form:
\[
    \int_{\sqrt{x_U^*}}^1 U(t^2)\, dt = (1+\mu)(1-\sqrt{x_U^*}) + \frac{C\bigl(1 - (x_U^*)^{\alpha-1/2}\bigr)}{2\alpha-1} - \frac{(C+\Delta)\bigl(1 - (x_U^*)^{\alpha+1/2}\bigr)}{2\alpha+1}.
\]
The formula for $U(x_U^*)$ follows from
\[
    U(x_U^*) = 1 + \mu + (x_U^*)^{\alpha-1}\bigl[C - (C+\Delta)\, x_U^*\bigr] = 1 + \mu + \frac{C}{\alpha}\,(x_U^*)^{\alpha-1},
\]
using $(C+\Delta)x_U^* = \frac{\alpha-1}{\alpha} C$. \qedhere
\end{proof}

\subsection{Tighter Bounds via Direct Analysis of $g$}
\label{ap:beta-direct-g}

The envelope $U(x)$ incurs a relaxation gap relative to $g(x)$, which becomes substantial for large $\alpha,\beta$. Here we bypass the envelope and construct $\hat g$ directly from $g$.

\subsubsection{Derivatives of $g$ and a Closed Form for $g''(1)$}

Differentiating~\eqref{eq:g-beta},
\begin{equation}\label{eq:g-prime}
    g'(x) = -\int_0^1 t\,(1-F(t))\, f(xt)\, dt - \frac{1}{x^2}\int_0^1 F(xt)\, f(t)\, dt + \left(\frac{1}{x} - 1\right)\int_0^1 t\, f(xt)\, f(t)\, dt,
\end{equation}
and
\begin{equation}\label{eq:g-double-prime}
    g''(x) = -\!\int_0^1 \! t^2 (1-F(t))\, f'(xt)\, dt + \frac{2}{x^3}\!\int_0^1\! F(xt) f(t)\, dt - \frac{2}{x^2}\!\int_0^1\! t\, f(xt) f(t)\, dt + \Bigl(\tfrac{1}{x}-1\Bigr)\!\int_0^1\! t^2 f'(xt) f(t)\, dt.
\end{equation}
These expressions do not admit closed-form simplifications for general $\alpha, \beta$; however, $g''(1)$ does.

\begin{lemma}\label{lem:g-double-prime-at-1}
    For $\alpha \geq 1$ and $\beta \geq 1$,
    \begin{equation}\label{eq:gpp1}
        g''(1) = 1 + \frac{\alpha}{\alpha+\beta} - \Delta - \frac{2\,B(2\alpha, 2\beta-1) + B(2\alpha+1, 2\beta-1)}{B(\alpha,\beta)^2}.
    \end{equation}
\end{lemma}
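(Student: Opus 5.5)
The plan is to evaluate the second-derivative formula \eqref{eq:g-double-prime} at $x=1$ and reduce each of the remaining integrals to a Beta-function expression or to $\mu=\frac{\alpha}{\alpha+\beta}$ and $\Delta=\int_0^1F(1-F)\,dt$. At $x=1$ the factor $\bigl(\tfrac1x-1\bigr)$ vanishes, so the fourth term disappears, and
\[
g''(1) = -\int_0^1 t^2(1-F(t))f'(t)\,dt + 2\int_0^1 F(t)f(t)\,dt - 2\int_0^1 t f(t)^2\,dt .
\]

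\textbf{Preliminary justification.} I first have to check that \eqref{eq:g-double-prime} really holds at $x=1$, i.e., that differentiating under the integral sign is valid there. For $\alpha,\beta\ge1$ the density $f$ is bounded. The only delicate object is $f'$ near $t=1$ when $1\le\beta<2$, where $f'(t)\sim(1-t)^{\beta-2}$. I would dominate the integrands locally uniformly in $x$ near $1$. In the first term, the factor $(1-F(t))\sim(1-t)^\beta$ makes $t^2(1-F)f'$ behave like $(1-t)^{2\beta-2}$, which is integrable. The term $\int t^2 f'(xt)f(t)\,dt$ carries the vanishing coefficient $(\tfrac1x-1)$; it only needs to stay bounded as $x\to1$, and its integrand behaves like $(1-t)^{2\beta-3}$, which is integrable since $\beta\ge1$ (at $\beta=1$, $f'\equiv0$ near $t=1$ anyway). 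I expect this verification to be the main obstacle, though a routine one.

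\textbf{Step 1 (the middle term).} We have $\int_0^1 Ff\,dt=\tfrac12\bigl[F^2\bigr]_0^1=\tfrac12$, so the middle term contributes exactly $1$.

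\textbf{Step 2 (the first term).} I integrate by parts:
\[
-\int_0^1 t^2(1-F)f'\,dt = -\bigl[t^2(1-F)f\bigr]_0^1 + \int_0^1 \bigl(2t(1-F)f - t^2 f^2\bigr)\,dt .
\]
The boundary term vanishes: at $t=1$ because $1-F(1)=0$ and $f$ is bounded, and at $t=0$ because $t^2f(t)\to0$ for $\alpha\ge1$. For the first remaining piece, $(1-F)f=\frac{d}{dt}\bigl[-\tfrac12(1-F)^2\bigr]$, so a second integration by parts gives
\[
2\int_0^1 t(1-F)f\,dt=\int_0^1(1-F)^2\,dt=\int_0^1(1-F)\,dt-\int_0^1F(1-F)\,dt=\mu-\Delta,
\]
using the tail formula $\int_0^1(1-F)\,dt=\ex[X]=\mu$.

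\textbf{Step 3 (the $f^2$ integrals).} Expanding the density gives
\[
t f^2=\frac{t^{2\alpha-1}(1-t)^{2\beta-2}}{B(\alpha,\beta)^2},\qquad t^2 f^2=\frac{t^{2\alpha}(1-t)^{2\beta-2}}{B(\alpha,\beta)^2}.
\]
Hence $\int_0^1 tf^2\,dt=B(2\alpha,2\beta-1)/B(\alpha,\beta)^2$ and $\int_0^1 t^2f^2\,dt=B(2\alpha+1,2\beta-1)/B(\alpha,\beta)^2$. Both are finite because $2\beta-1\ge1>0$.

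\textbf{Conclusion.} Summing the contributions gives $1+(\mu-\Delta)-\bigl(2B(2\alpha,2\beta-1)+B(2\alpha+1,2\beta-1)\bigr)/B(\alpha,\beta)^2$. This is exactly \eqref{eq:gpp1}.
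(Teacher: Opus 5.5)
Your proposal is correct and follows the paper's proof: you evaluate \eqref{eq:g-double-prime} at $x=1$, integrate the $f'$ term by parts, and reduce the $f^2$ integrals to Beta functions. Your identity $2\int_0^1 t(1-F)f\,dt=\int_0^1(1-F)^2\,dt=\mu-\Delta$ is only a shortcut for the paper's route through $\int_0^1 tFf\,dt$ and $\int_0^1F^2\,dt=1-\mu-\Delta$, and your dominated-convergence check is a justification the paper omits, though it contains one small slip: at $\beta=1$ it is not true that $f'\equiv0$ near $t=1$ (e.g.\ $f'(1)=\alpha(\alpha-1)$); rather, the singular part of $f'$ carries the factor $\beta-1$ and vanishes, so $f'$ is bounded near $t=1$ and your conclusion stands.
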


\begin{proof}
    At $x = 1$, the last term in~\eqref{eq:g-double-prime} vanishes, leaving
    
    \[
        g''(1) = -\int_0^1 t^2(1-F(t))\, f'(t)\, dt + 2\int_0^1 F(t)f(t)\, dt - 2\int_0^1 t\,f(t)^2\, dt.
    \]
    
    We evaluate each of the three integrals.

    \emph{First integral.} Using $f'(t) = \frac{(\alpha-1)t^{\alpha-2}(1-t)^{\beta-1} - (\beta-1)t^{\alpha-1}(1-t)^{\beta-2}}{B(\alpha,\beta)}$ and integration by parts with the boundary term $[t^2(1-F(t))f(t)]_0^1 = 0$,
    \[
        \int_0^1 t^2 (1-F(t))\, f'(t)\, dt = -2\!\int_0^1 t(1-F(t))f(t)\, dt + \int_0^1 t^2 f(t)^2\, dt.
    \]
    Write $\int_0^1 t(1-F(t))f(t)\, dt = \int_0^1 t f(t)\, dt - \int_0^1 t F(t) f(t)\, dt = \mu - \int_0^1 t F(t) f(t)\, dt$. Since $\frac{d}{dt}[F(t)^2/2] = F(t)f(t)$, integration by parts yields
    \[
        \int_0^1 t\, F(t)\, f(t)\, dt = \left[t\, F(t)^2/2\right]_0^1 - \frac{1}{2}\int_0^1 F(t)^2\, dt = \frac{1}{2} - \frac{1}{2}\int_0^1 F(t)^2\, dt.
    \]
    By the tail-integral identity, $\int_0^1 F(t)\, dt = 1 - \mu$; therefore
    \[
        \Delta = \int_0^1 F(t)(1-F(t))\, dt = (1-\mu) - \int_0^1 F(t)^2\, dt,
        \qquad\text{i.e.,}\qquad
        \int_0^1 F(t)^2\, dt = 1 - \mu - \Delta.
    \]
    Combining,
    \[
        2\int_0^1 t(1-F(t))f(t)\, dt = 2\mu - 2\!\left[\tfrac{1}{2} - \tfrac{1}{2}(1 - \mu - \Delta)\right] = 2\mu - 1 + (1 - \mu - \Delta) = \mu - \Delta,
    \]
    so the contribution of the first integral to $g''(1)$ is
    \[
        -\!\int_0^1 t^2 (1-F(t))\, f'(t)\, dt = (\mu - \Delta) - \int_0^1 t^2 f(t)^2\, dt.
    \]

    \emph{Second integral.} $\int_0^1 F(t)f(t)\, dt = \left[F(t)^2/2\right]_0^1 = \tfrac{1}{2}$, so $2\int_0^1 F(t)f(t)\, dt = 1$.

    \emph{Third integral.} Using the identity $\int_0^1 t^a (1-t)^b\, dt = B(a+1, b+1)$,
    \[
        \int_0^1 t\, f(t)^2\, dt = \frac{B(2\alpha, 2\beta-1)}{B(\alpha,\beta)^2}, \qquad \int_0^1 t^2 f(t)^2\, dt = \frac{B(2\alpha+1, 2\beta-1)}{B(\alpha,\beta)^2}.
    \]

    Combining the three pieces,
    \begin{align*}
        g''(1) &= \left[(\mu - \Delta) - \frac{B(2\alpha+1, 2\beta-1)}{B(\alpha,\beta)^2}\right] + 1 - 2\cdot\frac{B(2\alpha, 2\beta-1)}{B(\alpha,\beta)^2} \\
        &= 1 + \mu - \Delta - \frac{2\,B(2\alpha, 2\beta-1) + B(2\alpha+1, 2\beta-1)}{B(\alpha,\beta)^2},
    \end{align*}
    which is~\eqref{eq:gpp1} upon substituting $\mu = \alpha/(\alpha+\beta)$.
\end{proof}

\subsubsection{Competitive Ratio and Procedure}

\begin{theorem}\label{thm:beta-g-direct}
    Assume that for $\alpha > 1$ and $\beta \geq 1$, if $x_g^*$ denotes the unique maximizer of $g$ on $(0,1]$, then $g''$ has at most one root in $[x_g^*, 1]$.
    
    Under this assumption, for signals drawn from a $\mathrm{Beta}(\alpha,\beta)$ distribution with $\alpha > 1$ and $\beta \geq 1$, the $\GREEDY$ algorithm achieves a competitive ratio of at most
    \[
        \rho_g = \int_0^1 \hat g(t^2)\, dt,
    \]
    where $\hat g$ is defined as in~\eqref{eq:ghat-caseA} or~\eqref{eq:ghat-caseB} according to the sign of $g''(1)$.
\end{theorem}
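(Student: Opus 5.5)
The plan is to follow the three-step framework of \Cref{sec:overview}. By \Cref{lem:pairwise-delay-greedy}, the normalized pairwise delay of $\GREEDY$ is exactly $g$ from \eqref{eq:g-beta}. So it suffices to exhibit a nonincreasing concave $\hat g\ge g$ on $[0,1]$, check that $\hat g(0)$ (and hence $\hat g$) is at least $1$, and invoke \Cref{lem:strong-delay}, which gives $\rho_g=\int_0^1\hat g(t^2)\,dt$. Since $g(1)=1+\int_0^1(1-F)^2\ge1$ and $\hat g\ge g(1)$ by monotonicity, the range condition $\hat g\colon[0,1]\to[1,\infty)$ is automatic. All the work is therefore in constructing $\hat g$ and verifying concavity.

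First I would pin down the shape of $g$ on $(0,1]$ around its maximizer $x_g^*$, whose uniqueness is built into the hypothesis. Formally I would use \eqref{eq:g-prime}, and for $\alpha>1$ check that $g$ is continuous up to $0$ with $g(0^+)$ finite. (As $x\to0$, $F(xt)\approx (xt)^\alpha/(\alpha B)$, so the $\frac1x\int F(xt)f$ term is $O(x^{\alpha-1})\to0$.) On $[0,x_g^*]$ I would set $\hat g\equiv g(x_g^*)$, which dominates $g$ by maximality. On $[x_g^*,1]$, $g$ is nonincreasing because $x_g^*$ is the unique maximizer and $g'$ cannot vanish again without creating another local extremum. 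I would argue this carefully by noting that $g'(x_g^*)=0$ together with at most one sign change of $g''$ forces $g'\le0$ afterward.

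The case split is by the sign of $g''(1)$, computed in closed form in \Cref{lem:g-double-prime-at-1}.

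\textbf{Case A, $g''(1)\le0$.} Since $g''(x_g^*)\le0$ at an interior maximum, and $g''$ has at most one root in $[x_g^*,1]$, $g''$ cannot go positive and then return to nonpositive at $1$ without two roots. Hence $g''\le0$ on $[x_g^*,1]$. Then \eqref{eq:ghat-caseA} is the flat piece followed by $g$ itself, which is concave and nonincreasing, since the slopes match ($0$) at $x_g^*$.

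\textbf{Case B, $g''(1)>0$.} Now $g''$ has exactly one root $x_c\in(x_g^*,1)$: $g$ is concave on $[x_g^*,x_c]$ and convex on $[x_c,1]$. Here I would take the tangent extension \eqref{eq:ghat-caseB}. Choose the point $x_T\in[x_g^*,x_c]$ at which the tangent line to $g$ passes through $(1,g(1))$, and replace $g$ on $[x_T,1]$ by that line. The line dominates $g$ on $[x_T,1]$: on the convex part $g$ lies below the chord to $(1,g(1))$, and on $[x_T,x_c]$ concavity keeps $g$ below its tangent. Existence of $x_T$ follows from the intermediate value theorem applied to $h(x)=g(x)+g'(x)(1-x)-g(1)$. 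At $x=x_c$ we get $h<0$, because the tangent at the inflection point lies below $g$ on the convex side. At $x=x_g^*$ we get $h=g(x_g^*)-g(1)\ge0$. If $h(x_g^*)$ happens to be negative in a degenerate case, I would fall back to the flat-then-chord construction.

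The main obstacle is rigor rather than computation. The argument needs enough regularity of $g$ (twice differentiability on $(0,1]$, justified by differentiating under the integral, which is valid for $\alpha,\beta\ge1$), and it needs the unimodality and single-inflection structure. The latter cannot be derived analytically here, which is why it enters as a hypothesis. Given the hypothesis, the remaining step I would verify most carefully is the domination of the tangent line in Case B across the inflection point. Finally, $\rho_g$ is evaluated numerically.
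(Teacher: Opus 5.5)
Your proposal is correct and follows the paper's route: majorize $g$ by the flat-then-$g$ function in Case~A and by the flat-then-$g$-then-tangent-through-$(1,g(1))$ function in Case~B, check that this $\hat g$ is nonincreasing, concave and dominates $g$ (the paper's \Cref{lem:ghat-valid}, where your intermediate-value and chord arguments give more detail than the paper does), and then invoke \Cref{lem:strong-delay}. Two harmless slips: $h(x_g^*)=g(x_g^*)-g(1)\ge 0$ always holds, so the fallback construction is never needed; and your justification that $g$ is nonincreasing on all of $[x_g^*,1]$ is incomplete, since it would also need $g'(1)<0$, but that fact is never used because in Case~B only $g'(x_T)\le 0$ on the concave stretch matters.
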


The consequence of the assumption from the theorem above is that the convexity behavior of $g$ on $[x_g^*, 1]$ is determined by the sign of $g''(1)$: either $g$ is concave on all of $[x_g^*,1]$ (when $g''(1) \leq 0$), or $g$ transitions from concave to convex at a unique inflection point (when $g''(1) > 0$). As a result, we can break into the following two cases: 

\paragraph{Case A: $g''(1) \leq 0$.} $g$ is concave on $[x_g^*,1]$, and we define
\begin{equation}\label{eq:ghat-caseA}
    \hat g(x) := \begin{cases} g(x_g^*) & \text{if } x \in [0, x_g^*], \\ g(x) & \text{if } x \in (x_g^*, 1]. \end{cases}
\end{equation}

\paragraph{Case B: $g''(1) > 0$.} $g$ is concave near $x_g^*$ but convex near $x=1$. We extend $g$ with a tangent line to restore concavity. Let $x_{\mathrm{tan}} \in (x_g^*, 1)$ be the unique solution to
\begin{equation}\label{eq:xtan-equation}
    g(x_{\mathrm{tan}}) + g'(x_{\mathrm{tan}})(1 - x_{\mathrm{tan}}) = g(1),
\end{equation}
characterizing the point at which the tangent to $g$ at $x_{\mathrm{tan}}$ passes through $(1, g(1))$. Then define
\begin{equation}\label{eq:ghat-caseB}
    \hat g(x) := \begin{cases} g(x_g^*) & \text{if } x \in [0, x_g^*], \\ g(x) & \text{if } x \in (x_g^*, x_{\mathrm{tan}}], \\ g(x_{\mathrm{tan}}) + g'(x_{\mathrm{tan}})(x - x_{\mathrm{tan}}) & \text{if } x \in (x_{\mathrm{tan}}, 1]. \end{cases}
\end{equation}

\begin{lemma}\label{lem:ghat-valid}
    In both Cases~A and~B, the function $\hat g$ defined in~\eqref{eq:ghat-caseA} or~\eqref{eq:ghat-caseB} is nonincreasing and concave on $[0,1]$ and satisfies $\hat g(x) \geq g(x)$ for all $x \in (0,1]$.
\end{lemma}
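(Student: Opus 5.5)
The plan is to verify the three properties piece by piece on the partition $[0,x_g^*]$, $[x_g^*,x_{\mathrm{tan}}]$, $[x_{\mathrm{tan}},1]$ (in Case~A the last piece is empty). Concavity will follow from two facts: each piece is concave, and the one-sided slopes are nonincreasing at every breakpoint. Monotonicity will follow from the slope at the right end being $\le 0$. The standing assumption of \Cref{thm:beta-g-direct} ($x_g^*$ is the unique maximizer and $g''$ has at most one root on $[x_g^*,1]$) supplies the shape information for the whole argument.

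\textbf{Case A} ($g''(1)\le 0$). Since $g''$ has at most one root on $[x_g^*,1]$ and $g''(1)\le0$, I first argue that $g''\le 0$ throughout $[x_g^*,1]$. If $g''$ were positive somewhere, then by continuity and $g''(1)\le 0$ it would have to cross zero, and touching or crossing back would force a second root. I also need concavity just to the right of $x_g^*$, since $g''(x_g^*)\le0$ at an interior maximum. So $g$ is concave on $[x_g^*,1]$, and because $g'(x_g^*)=0$ ($x_g^*$ is an interior critical point of a smooth $g$), we get $g'\le 0$ on $[x_g^*,1]$. The constant piece has slope $0$ and the next piece starts with slope $g'(x_g^*)=0$, so the slopes match and $\hat g$ is concave and nonincreasing. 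Domination holds trivially on $[x_g^*,1]$, and on $(0,x_g^*]$ it holds because $x_g^*$ is the global maximizer.

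\textbf{Case B} ($g''(1)>0$). The single-root assumption gives an inflection point $x_{\inf}\in(x_g^*,1)$ with $g$ concave on $[x_g^*,x_{\inf}]$ and convex on $[x_{\inf},1]$. Next I establish existence and uniqueness of $x_{\mathrm{tan}}$. Set $h(s):=g(s)+g'(s)(1-s)-g(1)$, so that $h'(s)=g''(s)(1-s)$. On $[x_{\inf},1]$, $h$ is increasing to $h(1)=0$, so $h<0$ on $[x_{\inf},1)$; the chord from any convex point lies above the tangent. On $[x_g^*,x_{\inf}]$, $h$ is decreasing. At $x_g^*$ we have $h(x_g^*)=g(x_g^*)-g(1)>0$, provided $x_g^*<1$ and the maximum is strict, which uniqueness of the maximizer gives. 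Hence $h$ has a unique zero $x_{\mathrm{tan}}\in(x_g^*,x_{\inf})$.

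Concavity of $\hat g$ then reduces to slope matching at $x_{\mathrm{tan}}$, where the tangent line continues with the same slope $g'(x_{\mathrm{tan}})$, plus concavity of $g$ on $[x_g^*,x_{\mathrm{tan}}]\subset[x_g^*,x_{\inf}]$. For monotonicity I need $g'(x_{\mathrm{tan}})\le 0$. This follows from concavity on $[x_g^*,x_{\inf}]$ together with $g'(x_g^*)=0$.

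For domination on $[x_{\mathrm{tan}},1]$, let $\ell$ be the tangent line and study $\phi=\ell-g$. We have $\phi(x_{\mathrm{tan}})=0$, and $\phi(1)=0$ by the definition of $x_{\mathrm{tan}}$. On $[x_{\mathrm{tan}},x_{\inf}]$, $\phi''=-g''\ge0$ while $\phi'(x_{\mathrm{tan}})=0$, so $\phi$ is nondecreasing and hence $\phi\ge0$ there. On $[x_{\inf},1]$, $\phi$ is concave, and it is nonnegative at both endpoints ($\phi(x_{\inf})\ge0$ from the previous step, $\phi(1)=0$), so $\phi\ge 0$ on that interval as well.

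\textbf{Main obstacle.} The delicate step is turning the "at most one root of $g''$" assumption into a clean sign pattern for $g''$. This needs continuity of $g''$ on $[x_g^*,1]$ (derivatives of $F,f$ for $\alpha>1,\beta\ge1$; I would cite the differentiated formulas \eqref{eq:g-prime}, \eqref{eq:g-double-prime}) and must also handle a double root where $g''$ touches zero without changing sign. I would treat a touching root as not changing convexity, since a nonnegative or nonpositive sign is all the argument needs. I would also check that $x_g^*<1$ using $g'(1)<0$.
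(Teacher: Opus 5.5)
Your proposal is correct and follows the paper's route: a constant piece up to $x_g^*$, $g$ itself on the concave stretch, and the tangent line from $x_{\mathrm{tan}}$ through $(1,g(1))$, with concavity from slope matching at the breakpoints and domination because the tangent is anchored at $g(1)$. You also fill in details the paper only sketches: the sign pattern of $g''$ derived from the single-root assumption together with $g''(x_g^*)\le 0$ and $g'(1)<0$, existence and uniqueness of $x_{\mathrm{tan}}$ inside the concave region via $h'(s)=g''(s)(1-s)$, and the convex-then-concave argument for $\phi=\ell-g$. The one thing to add is a line in Case~B excluding $g''(x_g^*)=0$; that case would make $g$ convex and hence nondecreasing on $[x_g^*,1]$, contradicting $g'(1)<0$.
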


\begin{proof}
    \emph{Case~A.} Since $g'(x_g^*) = 0$ and $g$ is concave nonincreasing on $[x_g^*,1]$, $\hat g$ is constant on $[0,x_g^*]$ and equals $g$ on $[x_g^*,1]$; the left and right derivatives at $x_g^*$ both vanish, so $\hat g' \leq 0$ everywhere. Global concavity follows. Since $g(x_g^*) = \max_x g(x)$, $\hat g \geq g$.

    \emph{Case~B.} On $[0, x_g^*]$, $\hat g$ is constant. On $(x_g^*, x_{\mathrm{tan}}]$, $\hat g = g$ is concave nonincreasing. On $(x_{\mathrm{tan}}, 1]$, $\hat g$ is affine with slope $g'(x_{\mathrm{tan}}) < 0$. Concavity is preserved at $x_{\mathrm{tan}}$ since $g$ is concave there and the tangent agrees with $g$. The tangent lies above $g$ on $(x_{\mathrm{tan}}, 1]$ since $g$ transitions from concave to convex there: the tangent dominates $g$ locally by concavity, and~\eqref{eq:xtan-equation} ensures $\hat g(1) = g(1)$, so the tangent does not dip below $g$.
\end{proof}

Explicitly, $\rho_g$ is computed via the following procedure:

\begin{enumerate}
    \item Compute $g''(1)$ via~\eqref{eq:gpp1}.
    \item Find $x_g^* = \arg\max_x g(x)$ on $(0,1]$ by solving $g'(x) = 0$ numerically.
    \item If $g''(1) \leq 0$ (Case~A),
        \[
            \rho_g = \sqrt{x_g^*}\cdot g(x_g^*) + \int_{\sqrt{x_g^*}}^{1} g(t^2)\, dt.
        \]
    \item If $g''(1) > 0$ (Case~B), solve~\eqref{eq:xtan-equation} for $x_{\mathrm{tan}}$, then
        \[
            \rho_g = \sqrt{x_g^*}\cdot g(x_g^*) + \!\int_{\sqrt{x_g^*}}^{\sqrt{x_{\mathrm{tan}}}}\! g(t^2)\, dt + \!\int_{\sqrt{x_{\mathrm{tan}}}}^{1}\!\!\bigl[g(x_{\mathrm{tan}}) + g'(x_{\mathrm{tan}})(t^2 - x_{\mathrm{tan}})\bigr] dt.
        \]
\end{enumerate}

\begin{table}[t]
    \centering
    \caption{Competitive ratio bounds for the $\GREEDY$ algorithm under $\mathrm{Beta}(\alpha,\beta)$ stochastic clairvoyance. The column $\rho_U$ reports the bound from the envelope (\Cref{thm:beta-case2-U}); $\rho_g$ is the tighter bound from direct analysis of $g$ (\Cref{thm:beta-g-direct}).}\label{tab:CR-comparison}
    \begin{tabular}{cccccc}
        \toprule
        $\alpha$ & $\beta$ & Sign of $g''(1)$ & $\rho_U$ (via $U$) & $\rho_g$ (via $g$) & Case \\
        \midrule
        $1$ & $1$ & -- & $16/9 \approx 1.778$ & $1.778$ & -- \\
        $1$ & $2$ & -- & $1.733$ & $1.733$ & -- \\
        $1$ & $5$ & -- & $1.697$ & $1.697$ & -- \\
        $2$ & $2$ & $\leq 0$ & $1.664$ & $1.615$ & A \\
        $3$ & $3$ & $\leq 0$ & $1.714$ & $1.583$ & A \\
        $5$ & $5$ & $\leq 0$ & $2.100$ & $1.559$ & A \\
        $10$ & $10$ & $\leq 0$ & $17.741$ & $1.539$ & A \\
        $2$ & $1$ & $\leq 0$ & $1.740$ & $1.740$ & A \\
        $3$ & $1$ & $\leq 0$ & $1.789$ & $1.789$ & A \\
        $2$ & $3$ & $> 0$ & $1.636$ & $1.542$ & B \\
        \bottomrule
    \end{tabular}
\end{table}

\subsection{Discussion on Tightness of the Envelope}
\label{ap:beta-tightness}

We close with two observations on the relationship between $g(x)$ and its envelope $U(x)$.

First, as $\beta \to 1^+$ (with $\alpha = 1$), the inequalities in \Cref{lem:cdf-bounds} become tight: the upper bound~\eqref{eq:cdf-upper} is exact at $\beta = 1$ since $(1-u)^{\beta-1} = 1$, and the lower bound~\eqref{eq:cdf-lower} is always exact at $\beta = 1$. The relaxation gap between $U$ and $g$ therefore vanishes in the uniform regime $\alpha = \beta = 1$.

Second, when $\alpha = 1$, $g(x)$ is strictly convex on $(0,1]$, and $U(x)$ is the secant line through $(0, g(0^+))$ and $(1, g(1))$. Since $U$ is the tight linear upper bound on a convex function, the bound of \Cref{thm:beta-case2-U} for $\alpha=1$ cannot be improved by any linear majorant within the global delay framework.

These observations have a direct counterpart in the gap between $\rho_U$ and $\rho_g$ visualized in \Cref{fig:beta-comp-heatmaps}. At $\alpha = \beta = 1$, both bounds collapse to $16/9$, reflecting the vanishing gap between $U$ and $g$. As either $\alpha$ or $\beta$ grows, the inequalities in \Cref{lem:cdf-bounds} loosen---in particular, the upper bound~\eqref{eq:cdf-upper} becomes increasingly slack as $(1-u)^{\beta-1}$ deviates from $1$---and the envelope $U$ overshoots $g$ in the interior of $(0,1]$. This is precisely what drives the growing gap between $\rho_U$ (middle panel of \Cref{fig:beta-comp-heatmaps}) and $\rho_g$ (right panel) as $\alpha$ and $\beta$ grow.

\newpage
\printbibliography

\end{document}